\documentclass[reprint,
 amsmath,amssymb,
 aps,
 nofootinbib,
 superscriptaddress,
]{revtex4-2}

\usepackage{graphicx}
\usepackage{dcolumn}
\usepackage{bm}
\usepackage{mathtools}
\usepackage{amsthm}
\usepackage[table]{xcolor}
\usepackage{physics}
\usepackage{subcaption}
\usepackage{placeins}
\usepackage{tikz}
\usetikzlibrary{quantikz}
\usepackage[
    bookmarks=true,
    pdftoolbar=true,
    pdfmenubar=true,
    linktocpage=true,
    pdffitwindow=false,
    pdfstartview={FitH},
    pdfnewwindow=true,
    colorlinks=true,
    linkcolor=blue,
    citecolor=blue,
    filecolor=blue,
    urlcolor=blue
]{hyperref}
\usepackage[normalem]{ulem}


\newtheorem{thm}{Theorem}[section]
\newcommand{\CNOT}{\operatorname{CNOT}}
\newcommand{\SWAP}{\mathsf{SWAP}}

\begin{document}
    \begin{flushright}
        KUNS-3119
    \end{flushright}

    \title{Efficient Fidelity Estimation of General Quantum Resource States via Clifford Circuits}

    \author{Shunta Takahashi}
    \affiliation{Perimeter Institute for Theoretical Physics, Waterloo, Ontario N2L 2Y5, Canada}
    \affiliation{Department of Physics and Astronomy, University of Waterloo, Waterloo, Ontario N2L 3G1, Canada}
    \affiliation{Department of Physics, Kyoto University, Kyoto 606-8502, Japan}

    \author{Keisuke Fujii}
    \affiliation{Graduate School of Engineering Science, Osaka University, 1-3 Machikaneyama, Toyonaka, Osaka 560-8531, Japan}
    \affiliation{Graduate School of Informatics, Kyoto University, Sakyo-ku, Kyoto, 606-8501, Japan}
    \affiliation{Fujitsu Quantum Computing Joint Research Division, Center for Quantum Information and Quantum Biology, Osaka University, 1-2 Machikaneyama, Toyonaka, Osaka, 565-8531, Japan}
    \affiliation{RIKEN Center for Quantum Computing (RQC), Hirosawa 2-1, Wako, Saitama 351-0198, Japan}

    \begin{abstract}
        Fidelity estimation is a fundamental tool for assessing quantum resource states across quantum information science.
        For small infidelity $\epsilon$, tomography and direct fidelity estimation (DFE) have worst-case sample complexity $O(1/\epsilon^2)$ at fixed relative accuracy and confidence.
        Direct target projections achieve $O(1/\epsilon)$ but generally require non-Clifford operations whose reliability depends on the resource states that are typically the target of benchmarking.
        Here, we introduce the Bell--coherence fidelity estimation (BCFE) with optimal sample complexity $\Theta(1/\epsilon)$ using only Clifford gates and Pauli measurements.
        We prove that BCFE is applicable to arbitrary pure targets, possibly multi-qubit, with the statistical error of the single-copy coherence part explicitly controlled.
        To evaluate BCFE in a practical setting, we apply BCFE to analog rotation resource states in the candidate early-FTQC architecture called the Space-Time Efficient Analog Rotations (STAR) architecture.
        In the simulations using surface codes, BCFE is estimated to require up to approximately $27$ times fewer resource states than DFE for the same standard error in the infidelity estimate, while its estimates remain close to reference values despite circuit-level noise in the additional Bell measurement operations.
        These results support efficient resource benchmarking within fault-tolerant quantum computing.
    \end{abstract}

    \maketitle

    \section{Introduction}

        Accurate fidelity estimation is essential for benchmarking quantum resource states everywhere in quantum information science, including quantum computation \cite{Eisert:2020qcb, Bluvstein:2024lqp}, communication \cite{Bennett:1995ra, Horodecki:1998zh, Pompili:2021xyt}, cryptography \cite{Gisin:2002qcr, Nadlinger:2022diqkd}, chemistry \cite{AspuruGuzik:2005molecular, McArdle:2020chemistry}, and sensing \cite{Degen:2017qsr, Deng:2024qem}.
        Here we seek to resolve a small infidelity $\epsilon$ relative to a pure target state within a fixed relative error and at a fixed confidence, rather than reconstruct the full density matrix.
        Standard tomography and even direct fidelity estimation (DFE) \cite{Flammia:2011fda, daSilva:2011characterization} have a worst-case sample complexity of $O(1/\epsilon^2)$ in this regime, even though DFE avoids full state reconstruction.

        The ideal two-outcome POVM $\{\ket{\psi}\!\bra{\psi},I-\ket{\psi}\!\bra{\psi}\}$ attains $O(1/\epsilon)$ scaling, but for a generic non-Clifford target state it requires operations beyond Clifford gates and Pauli measurements.
        We restrict benchmarking to these primitives, motivated by stabilizer-code architectures for fault-tolerant quantum computing (FTQC) \cite{Eastin:2009tem, Bravyi:2012rnv, Pastawski:2014fca}.
        In such architectures, non-Clifford gates are typically supplied through resource states and gate synthesis \cite{Bravyi:2004isx, Bravyi:2012lxn, Gidney:2019ntb, Litinski:2019uvg, Rodriguez:2024bhh, Gidney:2024alh, Itogawa:2024hjp, Dawson:2005blj, Ross:2014qpa, Campbell:2017mix}, so the resource states must themselves be benchmarked without assuming trusted non-Clifford operations.

        Within this setting, Ref.\,\cite{Lee:2025lcs} establishes that two-copy measurements implemented with Clifford operations can achieve optimal $O(1/\epsilon)$ sample complexity for fidelity estimation of specific magic states, including the $T$ and $CCZ$ states.
        For general resource states, we seek a protocol with explicit guarantees on the estimation error and failure probability with the optimal sample complexity $O(1/\epsilon)$.

        A potential application includes the Space-Time Efficient Analog Rotations (STAR) architecture \cite{Akahoshi:2023xck, Toshio:2024hut, Toshio:2026dpa}, an alternative to the high-overhead magic-$T$ state distillation in the conventional fault-tolerant quantum computing architecture based on a quantum error correcting (QEC) code.
        In the STAR architecture, an arbitrary angle rotation is performed directly by gate teleportation with the analog rotation resource states $\ket{m_{\theta}}:=e^{iZ\theta}\ket{+}$ instead of long Clifford+$T$ gate sequences.
        The fidelity of these resource states must therefore be assessed across different rotation angles.
        For generic $\theta$, the protocol of Ref.\,\cite{Lee:2025lcs} is inapplicable, and we need a general protocol beyond it.
        Since the STAR architecture is based on a QEC code, it is also necessary to determine whether noise in the additional operations for logical resource states compromises the accuracy of fidelity estimation.

        In this work, we introduce the \textit{Bell-coherence fidelity estimation} (BCFE) protocol, which applies to fidelity estimation of arbitrary qubit resource states, possibly multi-qubit.
        As its name suggests, the protocol consists of a Bell part and a coherence part (see Fig.\,\ref{fig:schematics}).
        The Bell part extracts partial information about the infidelity from the probability that two independently prepared copies of the state being benchmarked are projected onto the antisymmetric Bell state $\ket{\Psi_-}$, without requiring any state-dependent operations.
        The coherence part completes the picture by compensating the target-dependent information that the Bell part leaves undetermined.
        Together, we rigorously prove that these two parts combine to achieve the optimal sample complexity $O(1/\epsilon)$ at fixed relative accuracy and confidence, matching the optimal lower bound \cite{Lee:2025lcs}.

        To assess its practical utility, we apply BCFE to the analog rotation resource states $\ket{m_{\theta}}\!:=e^{iZ\theta}\!\ket{+}$ in the STAR architecture.
        Specifically, we simulate $d=3$ rotated surface codes under circuit-level noise, varying the physical CNOT error rate jointly in the stabilizer measurements and the logical Bell measurement over $0.001$, $0.003$, and $0.005$.
        In these simulations, BCFE requires up to approximately $27$ times fewer resource states than DFE to achieve the same standard error in the infidelity estimate.
        This estimate is based on standard errors approximately $1.3$--$5.2$ times smaller than those of DFE at comparable expected resource consumption.
        The simulations also show that BCFE estimates remain close to the reference infidelities despite circuit-level noise in the additional operations required for the Bell measurement.
        This is because the logical Bell measurement is implemented fault-tolerantly, with a transversal CNOT and decoded readout that correct the effects of single physical gate errors.

        Beyond this application, BCFE applies to an arbitrary qubit state in principle and admits a multi-qubit extension, so its scope is not limited to analog rotation resource states in the STAR architecture.
        Potential targets include other non-Clifford ancillas consumed through gate teleportation, entangled states distributed over quantum networks, cluster states for measurement-based computation, and entangled probe states for quantum sensing \cite{Bravyi:2004isx, Pompili:2021xyt, Ringbauer:2025mbqc, Finkelstein:2024tweezer}.
        BCFE may therefore serve as a benchmark whenever resolving a small state infidelity is more relevant than reconstructing the full density matrix.

        \begin{figure}[h]
            \centering
            \includegraphics[width=0.9\linewidth]{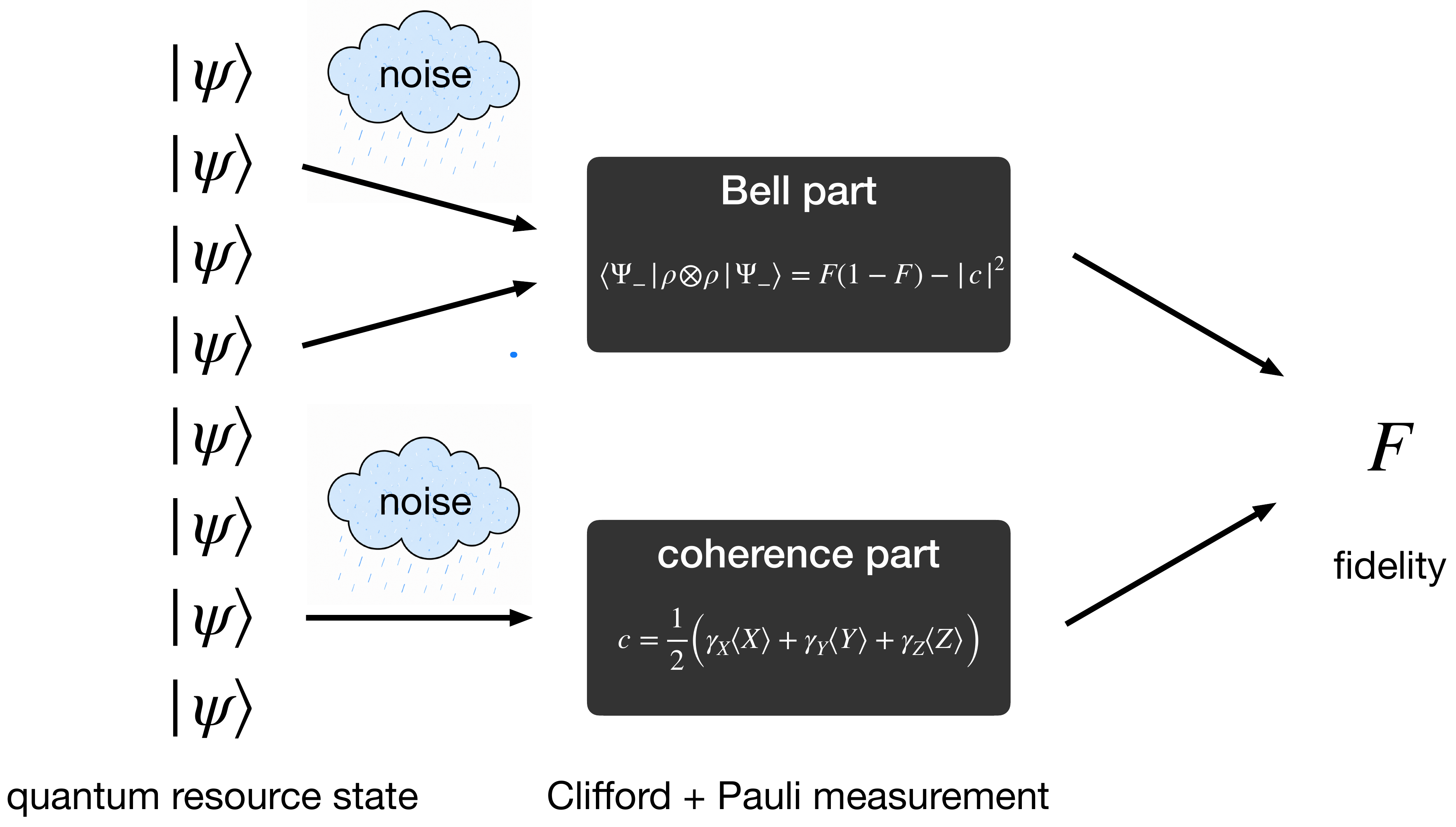}
            \caption{Overview of the Bell-coherence fidelity estimation (BCFE) protocol.
            Copies of the target resource state are subjected to noise and divided between two measurements.
            The Bell part estimates the probability of a rare event by Bell measurements, while the coherence part estimates $c$ from Pauli measurements.
            The two estimates are then combined to recover the fidelity $F$ using only Clifford operations and Pauli measurements with optimal sample complexity $O(1/\epsilon)
            $.}
            \label{fig:schematics}
        \end{figure}

        The rest of the paper is organized as follows.
        In Sec.\,\ref{sec:general_protocol}, we formulate the Bell-coherence fidelity estimation (BCFE) protocol for an arbitrary single qubit resource state and establish its $O(1/\epsilon)$ sample complexity.
        In Sec.\,\ref{sec:theory-and-simulation}, we apply our protocol to the analog rotation resource state used in the STAR architecture and analyze its implementation both on bare physical qubits and on logical qubits encoded in the surface code.
        Sec.\,\ref{sec:conclusion} summarizes the results and discusses open problems and future directions.
        Appendix \ref{append:technical-inequalities} states the concentration and testing inequalities used in the proofs across the paper.
        Appendix \ref{append:multi_qubit_generalization} extends the protocol and its sample complexity analysis to multi-qubit target states.
        In Appendix \ref{append:star}, we review the STAR architecture and explain the motivation for direct analog rotations in more detail.
        Appendix \ref{append:flammia_liu} reviews the DFE estimator and specifies the simulation setup used in the physical and logical level simulations in Sec.\,\ref{sec:theory-and-simulation}.

    \section{General theory of Bell-coherence fidelity estimation (BCFE)} \label{sec:general_protocol}
        In this section, we present a general {protocol} to estimate the fidelity of any given single qubit state with optimal sample complexity $O(1/\epsilon)$ using only Clifford operations and Pauli measurements.

        Ref.\,\cite{Lee:2025lcs} proved that protocols restricted to adaptive single copy Clifford measurements require $\Omega(1/(r^2\epsilon^2))$ copies for generic target states.
        % Although the ideal two-outcome POVM $\{\ket*{\psi}\!\bra*{\psi},I-\ket*{\psi}\!\bra*{\psi}\}$ estimates $\epsilon$ to relative error $r$ using $O(1/(r^2\epsilon))$ copies, realizing it for a generic non-Clifford target would require operations beyond Clifford gates and Pauli measurements and is therefore unrealistic in the FTQC setting considered here.
        BCFE instead uses two copy measurements to attain the same scaling with Clifford gates and Pauli measurements alone.
        To keep the central mechanism in view, we treat the single qubit case here and defer the multi-qubit extension to Appendix \ref{append:multi_qubit_generalization}.

        \subsection{An overview of the BCFE protocol}
            In this subsection, we present a qualitative sketch of our BCFE protocol to gain intuition behind its structure and the improved $O(1/\epsilon)$ sample complexity, leaving aside the mathematically laborious details for a moment.
            The heuristic arguments below are later justified with a rigorous proof in Subsec.\,\ref{subsec:sample complexity}.

            To start, suppose the quantum state to benchmark is
            \begin{align}
                \ket{\psi}=\alpha\ket{0}+\beta\ket{1},\quad |\alpha|^2+|\beta|^2=1,
            \end{align}
            and write the noisy state in the basis $\{|\psi\rangle,\,|\psi^{\perp}\rangle\}$ as
            \begin{align}
                \rho
                =
                \begin{pmatrix}
                    \bra{\psi}\rho\ket{\psi}
                    & \bra{\psi}\rho\ket*{\psi^\perp} \\
                    \bra*{\psi^\perp}\rho\ket{\psi}
                    & \bra*{\psi^\perp}\rho\ket*{\psi^\perp}
                \end{pmatrix}
                =
                \begin{pmatrix}
                    F & c \\
                    c^* & \epsilon
                \end{pmatrix}
                ,
            \end{align}
            where $\ket*{\psi^\perp}:=\beta^*\ket{0}-\alpha^*\ket{1}$ is orthogonal to $\ket{\psi}$.
            Here $F:=\bra{\psi}\rho\ket{\psi}$ is the fidelity and $\epsilon:=1-F$ is the infidelity to be estimated.

            We first remark that the ideal two-outcome POVM $\{\ket*{\psi}\!\bra*{\psi},I\!-\!\ket*{\psi}\!\bra*{\psi}\}$ could estimate the infidelity $\epsilon$ of the prepared noisy state with sample complexity $O(1/\epsilon)$, but it begs the question in the present setting.
            Indeed, for a generic non-Clifford state $\ket{\psi}\!=\!U^{\dagger}\!\ket{0}$, one must apply the non-Clifford unitary $U$ to perform the measurement in the computational basis.
            This is circular since the non-Clifford unitary is typically supplied through a noisy magic state, which is itself the target of fidelity estimation.
            We therefore seek to estimate $\epsilon$ using only fault-tolerant operations available in a realistic architecture, namely, Clifford gates and Pauli measurements.

            To this end, let us note that the antisymmetric singlet Bell state is independent of the choice of basis
            \begin{align}
                \ket*{\Psi_-}
                =\frac{\ket{01}-\ket{10}}{\sqrt{2}}
                =\frac{\ket*{\psi^\perp}\ket{\psi}
                -\ket{\psi}\ket*{\psi^\perp}}{\sqrt{2}}
                .
            \end{align}
            The basis independence fixes the probability of the singlet outcome in a fixed Bell measurement on two independently prepared copies as
            \begin{align} \label{eq:general-determinant}
                b
                :=\bra*{\Psi_-}\rho^{\otimes2}\ket*{\Psi_-}
                =\det\rho
                =F\epsilon-|c|^2
                \geqq0.
            \end{align}
            The last inequality follows from the positive semidefiniteness of the density matrix $\rho\geqq0$.

            The quantity $b$ already captures the incoherent error in $\rho$ since \cite{Hangleiter:2024bell, Lee:2025lcs}
            \begin{align}
                \bra*{\Psi_-}\rho_1\!\otimes\rho_2\ket*{\Psi_-}=0,
            \end{align}
            for the ideal state $\rho_1=\rho_2=\ket{\psi}\!\bra{\psi}$ but can be non-zero, for instance, when each copy undergoes a dephasing channel $\rho_1\!\otimes\rho_2=\mathcal{E}(\ket{\psi}\!\bra{\psi})\otimes\mathcal{E}(\ket{\psi}\!\bra{\psi})$ ($\mathcal{E}(\sigma):=(1-p)\sigma+pZ\sigma Z$ for some $0\!<\!p\!<\!1$).

            The Bell estimator $\widehat{b}$ attains the sample complexity $O(1/\epsilon)$ since $b\leqq\epsilon$.
            To see this, observe that the outcome of each shot follows the Bernoulli distribution $\operatorname{Bern}(b)$, taking the value $1$ with probability $b$ and $0$ otherwise.
            The sample mean $\widehat{b}$ over $N$ shots then has standard deviation $\sqrt{b(1-b)/N}=O(\sqrt{\epsilon/N})$, so resolving it to additive precision $r\epsilon$ requires $N=O(1/(r^2\epsilon))$ samples \cite{Lee:2025lcs}.
            If, instead, the distribution is at $\operatorname{Bern}(O(1))$ as is the case for standard tomography, and thus the standard deviation is given by $O(\sqrt{1/N})$, matching the standard deviation to $r\epsilon$ costs $N=O(1/(r^2\epsilon^2))$.
            The sufficiently small Bernoulli variance $b(1-b)\leqq\epsilon$ is what enables the advantage $N=O(1/(r^2\epsilon))$ over $N=O(1/(r^2\epsilon^2))$ in our protocol.
            The sample complexity theorem (Theorem \ref{thm:sample-complexity}) in the next subsection turns this intuition into a precise statement with failure probability $\delta$.
            The quantum circuit to measure $b$ is shown in Fig.\,\ref{fig:general-bell}.
            \begin{figure*}
                \centering
                \begin{quantikz}
                    \lstick{$\ket{0}$} & \gate{U} & \ctrl{1} & \gate{H} & \meter{} \\
                    \lstick{$\ket{0}$} & \gate{U} & \targ{}  & \qw & \meter{$Z=-1$} \\
                \end{quantikz}
                \captionof{figure}{The first unitary $U$ prepares the resource states $\ket{\psi}=U\ket{0}$.
                The subsequent $\CNOT$ and Hadamard gate implement the Bell measurement.}
                \label{fig:general-bell}
            \end{figure*}

            However, the Bell part outcome \eqref{eq:general-determinant} can also vanish, for instance, for a coherent misrotation, in which both copies are prepared in the same erroneous pure state, e.g.\ $\rho_1\otimes\rho_2=X\!\ket{\psi}\!\bra{\psi}\!X\otimes X\!\ket{\psi}\!\bra{\psi}\!X$.
            Indeed, this example lies in the symmetric subspace and therefore has zero overlap with $\ket*{\Psi_-}$.
            Even though $b$ is zero, the infidelity $\epsilon=\langle\psi^{\perp}|\rho|\psi^{\perp}\rangle=|\langle\psi^{\perp}|X|\psi\rangle|^2$ itself is non-vanishing in general.
            Thus $b$ alone is insufficient to detect the error in $\rho$ in full generality, and we must also estimate the coherence term $|c|^2$ in Eq.\,\eqref{eq:general-determinant} to recover the fidelity $F$.

            The coherence term $c$ is merely a linear combination of the ordinary Pauli expectation values,
            \begin{align} \label{eq:general-coherence}
                c
                &=\bra{\psi}\rho \ket*{\psi^\perp} \\
                &=\frac{1}{2}\Big(
                    (\beta^{*2}-\alpha^{*2})\langle X\rangle
                    +i(\beta^{*2}+\alpha^{*2})\langle Y\rangle
                    +2\alpha^*\beta^*\langle Z\rangle
                \Big). \nonumber
            \end{align}
            At first sight, estimating $c$ from ordinary Pauli measurements might seem to restore the $O(1/\epsilon^2)$ cost because the variances of the Pauli measurement results need not decrease with $\epsilon$, as opposed to the Bell part, whose variance is $b(1-b)=O(\epsilon)$.
            However, the positive semidefiniteness of the density matrix prevents this loss.
            Indeed, the last inequality in Eq.\,\eqref{eq:general-determinant} forces $|c|=O(\sqrt{\epsilon})$, so additive precision $O(r\sqrt{\epsilon})$ in the Pauli expectation values suffices to estimate $|c|^2$ to accuracy $O(r\epsilon)$ with $O(1/(r^2\epsilon))$ sample complexity.
            Again, we make this point precise including the failure probability $\delta$ in the sample complexity theorem (Theorem \ref{thm:sample-complexity}) in the next subsection.

            Finally, the two estimators $\widehat{b}$ and $\widehat{c}$ combine into one for the infidelity itself.
            Equation \eqref{eq:general-determinant} rearranges to
            \begin{align}
                F\epsilon=b+|c|^2,
            \end{align}
            so it determines $F\epsilon$ to within $r\epsilon$.
            Since $F\epsilon=\epsilon(1-\epsilon)$, this quantity is the infidelity up to the correction $\epsilon^2$, which is very small when $\epsilon\ll 1$.
            We can thus preserve the $r\epsilon$ accuracy, leading to the desired precision.

            In the next subsection, we will elaborate on these ideas in a {mathematically} more concrete manner.

        \subsection{Estimation protocol and sample complexity} \label{subsec:sample complexity}
            We now turn to a rigorous formulation and analysis of the BCFE protocol.
            We reiterate that the protocol applies to an arbitrary single qubit state, with a multi-qubit extension given in Appendix \ref{append:multi_qubit_generalization}.

            We first discuss the Bell and coherence parts separately and then combine their guarantees into a single sample complexity theorem (Theorem \ref{thm:sample-complexity}).

            \noindent\paragraph*{Bell estimator and sample complexity.}
            Let $\widehat b$ be the estimator of Eq.\,\eqref{eq:general-determinant}.
            Since $0\leqq b\leqq F(1-F)\leqq\epsilon$, the outcome of the fixed Bell measurement in Fig.\,\ref{fig:general-bell} obeys the Bernoulli distribution with variance at most $\epsilon$.
            We record the result of the $j$th Bell measurement ($1\leqq j\leqq N_B$) as a random variable $B_j$, which equals $1$ if $\ket{\Psi_-}$ is detected and $0$ otherwise, so that $\mathbb{E}[B_j]=b$ and $\widehat b=N_B^{-1}\sum_j B_j$.
            Since $|B_j-b|\leqq1$ and $\operatorname{Var}(B_j)=b(1-b)\leqq\epsilon$, we apply Bernstein's inequality (Theorem \ref{thm:bernstein-inequality}) to $B_1-b,\ldots,B_{N_B}-b$ and obtain
            \begin{align} \label{eq:berenstein-in-bell}
                \Pr\!\left[
                    |\widehat{b}-b|\geqq t
                \right]
                \leqq
                2\exp\!\left(
                    -\frac{N_Bt^2}{2\epsilon+2t/3}
                \right).
            \end{align}
            For $t=r\epsilon$, the bound specializes to
            \begin{align}
                \begin{aligned}
                    \Pr\!\left[
                        |\widehat{b}-b|
                        \geqq
                        r\epsilon
                    \right]
                    &\leqq
                    2\exp\!\left(
                        -\frac{N_Br^2\epsilon}{2+2r/3}
                    \right)
                    \\
                    &\leqq
                    2\exp\!\left(
                        -\frac{3N_Br^2\epsilon}{8}
                    \right),
                \end{aligned}
            \end{align}
            where the second inequality uses $0<r\leqq1$.
            By taking
            \begin{align}
                N_B
                \geqq
                \frac{8}{3r^2\epsilon}
                \log\frac{2}{\delta},
            \end{align}
            the failure probability becomes at most $\delta$, and hence $|\widehat b-b|\leqq r\epsilon$ with probability at least $1-\delta$.

            \noindent\paragraph*{Coherence estimator and sample complexity.}
            Next, we estimate the coherence term by the estimator
            \begin{align}
                \widehat{c}
                :=\frac{1}{2}\Big(
                    (\beta^{*2}-\alpha^{*2})\widehat{\langle X\rangle}
                    +i(\beta^{*2}+\alpha^{*2})\widehat{\langle Y\rangle}
                    +2\alpha^*\beta^*\widehat{\langle Z\rangle}
                \Big),
            \end{align}
            with $N_P$ shots allocated to each of the Pauli measurements.
            Let us rewrite the coherence estimator as
            \begin{align}
                \widehat{c}
                =
                \frac{1}{2}
                \sum_{P=X,Y,Z}
                \gamma_P\widehat{\langle P\rangle},
                \qquad
                \sum_{P=X,Y,Z}
                |\gamma_P|^2
                =
                2,
                \label{eq:general-coherence-vector}
            \end{align}
            where $\gamma_X=\beta^{*2}-\alpha^{*2}$, $\gamma_Y=i(\beta^{*2}+\alpha^{*2})$, and $\gamma_Z=2\alpha^*\beta^*$.
            For each $P\in\{X,Y,Z\}$, we record $N_P\geqq2$ independent measurement outcomes as random variables $M_{P,j}\in\{-1,+1\}$ with $\mathbb{E}[M_{P,j}]=\langle P\rangle$, and estimate $\langle P\rangle$ by their sample mean
            \begin{align}
                \widehat{\langle P\rangle}
                =
                \frac{1}{N_P}\sum_{j=1}^{N_P}M_{P,j}.
            \end{align}
            Since the diagonal terms make $|\widehat{c}|^2$ biased at finite $N_P$, we then use the corrected estimator
            \begin{align} \label{eq:coherence-estimator-square}
                \widehat{q}
                :=
                |\widehat{c}|^2
                -
                \frac{1}{4(N_P-1)}
                \sum_{P=X,Y,Z}
                |\gamma_P|^2
                \Big(
                    1-\widehat{\langle P\rangle}^{\,2}
                \Big).
            \end{align}
            By the relation $\mathbb{E}[(N_P\widehat{\langle P\rangle}^{\,2}-1)/(N_P-1)]=\langle P\rangle^2$, together with the independence of measurements in different Pauli bases, we can show that $\mathbb{E}[\widehat{q}]=|c|^2$.

            The error in the coherence estimate is $e:=\widehat{c}-c=\frac{1}{2}\sum_{P=X,Y,Z}\gamma_P\Delta_P$, where $\Delta_P:=\widehat{\langle P\rangle}-\langle P\rangle$.
            By the Cauchy--Schwarz inequality, this error is bounded by
            \begin{align}
                \begin{aligned}
                    |e|
                    &\leqq
                    \frac{1}{2}
                    \left(
                        \sum_P|\gamma_P|^2
                    \right)^{1/2}
                    \left(
                        \sum_P|\Delta_P|^2
                    \right)^{1/2}
                    \\
                    &\leqq
                    \sqrt{\frac{3}{2}}
                    \max_{P=X,Y,Z}|\Delta_P|.
                \end{aligned}
                \label{eq:general-c-error-from-pauli}
            \end{align}
            Taking $\eta:=\sqrt{2/3}\,r\sqrt{\epsilon}$ as the deviation threshold, the condition $\max_{P=X,Y,Z}|\Delta_P|\leqq\eta$ ensures $|e|\leqq r\sqrt{\epsilon}$.
            For a fixed $P$, the corresponding Hoeffding bound (Theorem \ref{thm:hoeffding-inequality}) is
            \begin{align}
                \Pr\!\left[
                    |\Delta_P|\geqq\eta
                \right]
                \leqq
                2\exp\!\left(-\frac{N_P\eta^2}{2}\right),
            \end{align}
            and the probability that any of the three Pauli means deviates by at least $\eta$ is bounded by
            \begin{align}
                \begin{aligned}
                    \Pr\!\left[
                        \max_{P=X,Y,Z}
                        |\Delta_P|
                        \geqq\eta
                    \right]
                    &\leqq
                    6\exp\!\left(-\frac{N_Pr^2\epsilon}{3}\right).
                \end{aligned}
            \end{align}
            This probability is at most $\delta$ for $N_P\geqq3(r^2\epsilon)^{-1}\log(6/\delta)$, in which case $|\Delta_P|\leqq\sqrt{2/3}\,r\sqrt{\epsilon}$ for all three Pauli means and $|e|\leqq r\sqrt{\epsilon}$ with probability at least $1-\delta$.
            The correction in Eq.\,\eqref{eq:coherence-estimator-square} satisfies
            \begin{align}
                \left|
                    \widehat{q}-|\widehat{c}|^2
                \right|
                \leqq
                \frac{1}{2(N_P-1)},
                \label{eq:coherence-bias-correction-bound}
            \end{align}
            where we used $1-\widehat{\langle P\rangle}^{\,2}\in[0,1]$ and $\sum_P|\gamma_P|^2=2$.
            Since positive semidefiniteness of the density matrix \eqref{eq:general-determinant} implies $|c|^2\leqq F(1-F)\leqq\epsilon$, the total error is bounded by
            \begin{align}
                \begin{aligned}
                    \left|
                        \widehat{q}-|c|^2
                    \right|
                    &\leqq
                    2|c||e|
                    +
                    |e|^2
                    +
                    \frac{1}{2(N_P-1)}
                    \\
                    &\leqq
                    2r\epsilon
                    +
                    r^2\epsilon
                    +
                    \frac{1}{2(N_P-1)}.
                \end{aligned}
                \label{eq:general-q-error}            \end{align}
            The theorem below uses a smaller tolerance parameter $\eta$ for the three Pauli means to bound this error by $r\epsilon/8$.

            \paragraph*{Combining the two parts.}
            We estimate $s:=F(1-F)=b+|c|^2$ by
            \begin{align}
                \widehat{s}:=\widehat{b}+\widehat{q}. \label{eq:general-s-estimator}
            \end{align}
            If the fidelity is sufficiently large $F\geqq1/2$, the infidelity is
            \begin{align}
                \epsilon=1-F=\frac{1-\sqrt{1-4s}}{2}.
            \end{align}
            For finite data, we clip $\widehat{s}$ to $[0,1/4]$ before inverting the relation between $s$ and $\epsilon$
            \begin{align}
                \begin{aligned}
                    \widetilde{s}
                    &:=
                    \min\!\left\{\frac{1}{4},\max\{0,\widehat{s}\}\right\},
                    \\
                    \widehat{\epsilon}
                    &:=
                    \frac{1-\sqrt{1-4\widetilde{s}}}{2},
                    \qquad
                    \widehat{F}:=1-\widehat{\epsilon}.
                \end{aligned}
                \label{eq:general-final-estimator}
            \end{align}
            The following theorem specifies the sample count required for relative error $r$ with failure probability at most $\delta$.
            \begin{thm}[Bell-coherence fidelity estimation] \label{thm:sample-complexity}
                Suppose $0<r\leqq1$ and $0<\delta<1/8$.
                For any single-qubit target state $\ket{\psi}$ and any state $\rho$ with $F=\bra{\psi}\rho\ket{\psi}=1-\epsilon$ and $0<\epsilon\leqq1/4$, the estimator in Eqs.\,\eqref{eq:coherence-estimator-square}, \eqref{eq:general-s-estimator}, and \eqref{eq:general-final-estimator} satisfies
                \begin{align}
                    |\widehat{\epsilon}-\epsilon|
                    \leqq
                    r\epsilon
                \end{align}
                with probability at least $1-\delta$ using
                \begin{align}
                    N_{\rm{total}}
                    =
                    O\!\left(
                        \frac{1}{r^2\epsilon}
                        \log\frac{1}{\delta}
                    \right)
                \end{align}
                copies.
            \end{thm}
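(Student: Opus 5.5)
The plan is to run the Bell and coherence analyses of the preceding paragraphs with tighter tolerances, so that $\widehat{s}$ approximates $s=F(1-F)=\epsilon(1-\epsilon)$ to additive accuracy $r\epsilon/4$. We then propagate this error through the clipping and through the inverse map $g(s)=(1-\sqrt{1-4s})/2$. Because the two parts use disjoint copies, a union bound over their failure events suffices. Each part is allotted failure probability $\delta/2$.

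For the Bell part, we apply the Bernstein bound \eqref{eq:berenstein-in-bell} with $t=r\epsilon/8$. Taking $N_B\geqq C_B(r^2\epsilon)^{-1}\log(4/\delta)$ then gives $|\widehat b-b|\leqq r\epsilon/8$ with probability at least $1-\delta/2$, where $C_B$ is an absolute constant (about $128\cdot 8/3$).

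For the coherence part, we rerun the argument leading to \eqref{eq:general-q-error}, but with the Pauli tolerance $\eta=\kappa r\sqrt{\epsilon}$ for a small absolute $\kappa$. By \eqref{eq:general-c-error-from-pauli} this gives $|e|\leqq\sqrt{3/2}\,\kappa r\sqrt\epsilon$. Using $|c|\leqq\sqrt\epsilon$ and $r\leqq1$, the first two terms are bounded by $(2\sqrt{3/2}\,\kappa+\tfrac32\kappa^2)r\epsilon\leqq r\epsilon/16$ once $\kappa\approx1/50$. Hoeffding plus a union bound over $X,Y,Z$ requires $N_P\geqq 2(\kappa^2r^2\epsilon)^{-1}\log(12/\delta)$ for failure probability at most $\delta/2$. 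The remaining bias-correction term $1/(2(N_P-1))$ is not controlled by concentration, so we treat it by choosing $N_P$ large enough. Since $\log(12/\delta)\geqq\log 96>4$ for $\delta<1/8$, the Hoeffding requirement already forces $N_P-1\gtrsim 8/(r\epsilon)$ up to constants, hence $1/(2(N_P-1))\leqq r\epsilon/16$. (The condition $\delta<1/8$ is presumably there exactly so that the log factor dominates these constants.) Hence $|\widehat q-|c|^2|\leqq r\epsilon/8$ and $|\widehat s-s|\leqq r\epsilon/4$. The total number of copies is $2N_B+3N_P=O((r^2\epsilon)^{-1}\log(1/\delta))$.

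The inversion step is where I expect the main care to be needed. Clipping to $[0,1/4]$ cannot increase the error, because $s=\epsilon(1-\epsilon)\in[0,3/16]\subset[0,1/4]$ for $\epsilon\leqq1/4$. Next, on the branch $F\geqq1/2$ we have $g(s)=\epsilon$ exactly. Since $g'(x)=1/\sqrt{1-4x}$ blows up at $x=1/4$, we need to confine $\widetilde s$ away from $1/4$. We have $\widetilde s\leqq s+r\epsilon/4\leqq\epsilon-\epsilon^2+\epsilon/4$. It remains to control the derivative on the interval between $s$ and $\widetilde s$. For $\epsilon\leqq1/4$ we have $1-4\widetilde s\geqq 1-5\epsilon+4\epsilon^2$, which vanishes at $\epsilon=1/4$, so a uniform derivative bound near the endpoint fails. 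A cleaner route avoids derivatives entirely. We write
\[
\widehat\epsilon-\epsilon=\frac{(\widetilde s-s)\cdot 2}{\sqrt{1-4\widetilde s}+\sqrt{1-4s}},
\]
and use $\sqrt{1-4s}=1-2\epsilon\geqq1/2$, so that $|\widehat\epsilon-\epsilon|\leqq 4|\widetilde s-s|\leqq r\epsilon$. This is exactly why the target accuracy for $\widehat s$ was chosen to be $r\epsilon/4$.
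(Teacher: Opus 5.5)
Your proposal is correct and follows the same route as the paper's proof. It uses Bernstein with $t=r\epsilon/8$ for the Bell part, Hoeffding with a Pauli tolerance proportional to $r\sqrt{\epsilon}$, and a bias-correction term made small by taking $N_P$ large; a $\delta/2$ union bound then gives $|\widehat s-s|\leqq r\epsilon/4$, and the rationalized inversion with denominator at least $\sqrt{1-4s}=1-2\epsilon\geqq1/2$ finishes it, with only the explicit constants differing (the paper takes $\eta=\sqrt{2/3}\,r\sqrt{\epsilon}/32$, $N_B=\lceil 192(r^2\epsilon)^{-1}\log(4/\delta)\rceil$, $N_P=\lceil 3072(r^2\epsilon)^{-1}\log(12/\delta)\rceil+1$, and your $C_B$ is larger than needed but still sufficient).
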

            \begin{proof}
                Suppose
                \begin{align}
                    N_B
                    &:=
                    \left\lceil
                        \frac{192}{r^2\epsilon}
                        \log\frac{4}{\delta}
                    \right\rceil, \label{eq:nb-choice}
                    \\
                    N_P
                    &:=
                    \left\lceil
                        \frac{3072}{r^2\epsilon}
                        \log\frac{12}{\delta}
                    \right\rceil
                    +1. \label{eq:n-choice}
                \end{align}
                Here $N_B$ is the number of trials in the Bell part, and $N_P$ is the number of outcomes measured in each Pauli basis in the coherence part.
                In the following discussion, we are going to allow an error of at most $r\epsilon/8$ to each estimator.
                Since $b\leqq\epsilon$, if we instead take $t:=r\epsilon/8$ in the Bernstein inequality \eqref{eq:berenstein-in-bell}, the denominator is bounded by
                \begin{align}
                    2b+\frac{2t}{3}
                    \leqq
                    2\epsilon
                    +
                    \frac{r\epsilon}{12}
                    <
                    3\epsilon,
                \end{align}
                so that the probability of getting a deviation greater than $t=r\epsilon/8$ is
                \begin{align}
                    \begin{aligned}
                        \Pr\!\left[
                            |\widehat{b}-b|
                            \geqq
                            \frac{r\epsilon}{8}
                        \right]
                        &\leqq
                        2\exp\!\left(
                            -\frac{N_Bt^2}{3\epsilon}
                        \right)
                        \\
                        &=
                        2\exp\!\left(
                            -\frac{N_Br^2\epsilon}{192}
                        \right)
                        \\
                        &\leqq
                        \frac{\delta}{2}.
                    \end{aligned}
                    \label{eq:single-qubit-explicit-bell-bound}
                \end{align}
                Here, the last inequality follows from the choice of $N_B$ \eqref{eq:nb-choice}.

                For the Pauli measurements in the coherence part, we take $\eta:=\sqrt{2/3}\,r\sqrt{\epsilon}/32$ in lieu of $\eta=\sqrt{2/3}\,r\sqrt{\epsilon}$ in Eq.\,\eqref{eq:general-c-error-from-pauli}, leading to
                \begin{align}
                    \max_{P=X,Y,Z}|\Delta_P|
                    \leqq
                    \eta
                    \quad\Longrightarrow\quad
                    |e|
                    \leqq
                    \frac{r\sqrt{\epsilon}}{32}.
                \end{align}
                By Hoeffding's inequality (Theorem \ref{thm:hoeffding-inequality}) and a union bound over the three Pauli means, the probability that any of the three deviations $|\Delta_P|$ exceeds $\eta$ is bounded as
                \begin{align}
                    \begin{aligned}
                        \Pr\!\left[
                            \max_{P=X,Y,Z}
                            |\Delta_P|
                            \geqq
                            \eta
                        \right]
                        &\leqq
                        6\exp\!\left(-\frac{N_P\eta^2}{2}\right)
                        \\
                        &=
                        6\exp\!\left(
                            -\frac{N_Pr^2\epsilon}{3072}
                        \right)
                        \\
                        &\leqq
                        \frac{\delta}{2},
                    \end{aligned}
                    \label{eq:single-qubit-explicit-pauli-bound}
                \end{align}
                where the last inequality follows from the choice of $N_P$ \eqref{eq:n-choice}.

                For the chosen $N_P$, Eq.\,\eqref{eq:coherence-bias-correction-bound} becomes
                \begin{align}
                    \left|
                        \widehat{q}-|\widehat{c}|^2
                    \right|
                    \leqq
                    \frac{1}{2(N_P-1)}
                    \leqq
                    \frac{r^2\epsilon}{6144\log(12/\delta)}
                    \leqq
                    \frac{r\epsilon}{1024},
                \end{align}
                where the last inequality follows from $r\leqq6\log(12/\delta)$.
                On the event $\max_{P=X,Y,Z}|\Delta_P|\leqq\eta$, positivity of the density matrix implies $|c|^2\leqq\epsilon$, so the error in $\widehat{q}$ obeys
                \begin{align}
                    \begin{aligned}
                        \left|
                            \widehat{q}-|c|^2
                        \right|
                        &\leqq
                        2\sqrt{\epsilon}\,
                        \frac{r\sqrt{\epsilon}}{32}
                        +
                        \frac{r^2\epsilon}{1024}
                        +
                        \frac{r\epsilon}{1024}
                        \\
                        &=
                        r\epsilon
                        \left(
                            \frac{1}{16}
                            +
                            \frac{r}{1024}
                            +
                            \frac{1}{1024}
                        \right)
                        \\
                        &\leqq
                        \frac{33r\epsilon}{512}
                        <
                        \frac{r\epsilon}{8}.
                    \end{aligned}
                    \label{eq:single-qubit-explicit-q-bound}
                \end{align}
                The penultimate inequality uses $r\leqq1$.

                By the union bound, $|\widehat{b}-b|\leqq r\epsilon/8$ and $\bigl|\widehat{q}-|c|^2\bigr|<r\epsilon/8$ hold simultaneously with probability at least $1-\delta$.
                The total error is then bounded by the triangle inequality as
                \begin{align}
                    |\widehat{s}-s|
                    \leqq
                    |\widehat{b}-b|
                    +
                    \left|\widehat{q}-|c|^2\right|
                    \leqq
                    \frac{r\epsilon}{4}.
                    \label{eq:single-qubit-explicit-s-bound}
                \end{align}
                Since $s\in[0,1/4]$, clipping $\widehat{s}$ to this interval cannot increase its distance from $s$, so $|\widetilde{s}-s|\leqq|\widehat{s}-s|$.
                The denominator in the expression below is at least $1/2$ because $\sqrt{1-4s}=1-2\epsilon\geqq1/2$, and then
                \begin{align}
                    |\widehat{\epsilon}-\epsilon|
                    =
                    \frac{2|\widetilde{s}-s|}{\sqrt{1-4\widetilde{s}}+\sqrt{1-4s}}
                    \leqq
                    4|\widehat{s}-s|
                    \leqq
                    r\epsilon.
                \end{align}
                Finally, each Bell part consumes two copies per shot, while the Pauli measurements in the coherence part consume $3N_P$ copies in total.
                We therefore land on the scaling
                \begin{align}
                    N_{\rm{total}}
                    =2N_B+3N_P=
                    O\!\left(
                        \frac{1}{r^2\epsilon}
                        \log\frac{1}{\delta}
                    \right).
                \end{align}
            \end{proof}

            The result above reveals that the $1/\epsilon$ dependence is a quadratic improvement over generic single copy benchmarking.
            This result is extended to the multi-qubit case in Appendix \ref{append:multi_qubit_generalization}.

        \subsection{Optimal dependence on $r$, $\epsilon$ and $\delta$}
            The upper bound of the sample complexity in Theorem \ref{thm:sample-complexity} is not only efficient but optimal.
            Indeed, even arbitrary collective measurements cannot improve its dependence on $\epsilon$, $r$, or $\delta$.
            Hence the lower bound below also applies, a fortiori, to Clifford-only protocols.
            We note that Theorem 4 of Ref.\,\cite{Lee:2025lcs} already establishes $N=\Omega(1/(r^2\epsilon))$ for arbitrary target states at the fixed success probability $2/3$ using the quantum Pinsker inequality.
            To compare this lower bound directly with Theorem \ref{thm:sample-complexity}, however, we state and prove below the optimal bound with explicit dependence on an arbitrary failure probability $\delta$.
            Our argument adapts the two-state construction of Ref.\,\cite{Lee:2025lcs} and replaces the quantum Pinsker inequality with the Bretagnolle--Huber inequality, adding the general factor $\log(1/\delta)$.

            \begin{thm}[Sample optimality] \label{thm:sample-optimality}
                Let $|\psi\rangle$ be a single-qubit target state, and suppose $0<r\leqq1/8$, $0<\delta<1/8$.
                For every sufficiently small $\epsilon>0$, any protocol that estimates the infidelity $\epsilon:=1-\langle\psi|\rho|\psi\rangle$ of an arbitrary single-qubit state $\rho$ to relative error $r$ with probability at least $1-\delta$ requires, in the worst case,
                \begin{align}
                    N
                    =
                    \Omega\!\left(
                        \frac{1}{r^2\epsilon}
                        \log\frac{1}{\delta}
                    \right)
                \end{align}
                copies.
                Therefore Theorem \ref{thm:sample-complexity} has optimal dependence on $\epsilon$, $r$, and $\delta$ up to constant factors.
            \end{thm}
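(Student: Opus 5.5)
We reduce estimation to hypothesis testing between two explicit states and bound the number of copies through their relative entropy. Fix the target $\ket{\psi}$ and, for small $\epsilon$, define
\begin{align}
    \rho_0=(1-\epsilon_0)\ket{\psi}\!\bra{\psi}+\epsilon_0\ket*{\psi^\perp}\!\bra*{\psi^\perp},
    \qquad
    \rho_1=(1-\epsilon_1)\ket{\psi}\!\bra{\psi}+\epsilon_1\ket*{\psi^\perp}\!\bra*{\psi^\perp},
\end{align}
with $\epsilon_0=\epsilon$ and $\epsilon_1=\epsilon(1+3r)$. Both states are diagonal in the same basis $\{\ket{\psi},\ket*{\psi^\perp}\}$, so they commute; this is what will make the divergence computation classical.

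\textbf{Step 1 (estimation implies testing).} Since $r\leqq1/8$, the intervals $[(1-r)\epsilon_0,(1+r)\epsilon_0]$ and $[(1-r)\epsilon_1,(1+r)\epsilon_1]$ are disjoint. Indeed, $(1+r)\epsilon<(1-r)(1+3r)\epsilon$ holds because $1+r<1+2r-3r^2$, which is equivalent to $3r^2<r$. Suppose a protocol achieves relative error $r$ with probability at least $1-\delta$. Then declaring ``$\rho_0$'' whenever $\widehat\epsilon$ falls in the first interval gives a test on $N$ copies whose two error probabilities are each at most $\delta$. Any measurement, collective and adaptive ones included, yields a classical outcome distribution $P_j$ for $\rho_j^{\otimes N}$. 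Hence
\begin{align}
    P_0(A^c)+P_1(A)\leqq 2\delta
\end{align}
for some event $A$.

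\textbf{Step 2 (Bretagnolle--Huber plus data processing).} We will invoke the Bretagnolle--Huber inequality from Appendix~\ref{append:technical-inequalities}:
\begin{align}
    P_0(A^c)+P_1(A)\geqq\tfrac12 e^{-D(P_0\|P_1)}.
\end{align}
Monotonicity of the quantum relative entropy under measurement channels, together with additivity, gives
\begin{align}
    D(P_0\|P_1)\leqq D(\rho_0^{\otimes N}\|\rho_1^{\otimes N})=N\,D(\rho_0\|\rho_1).
\end{align}
Because $\rho_0$ and $\rho_1$ commute, $D(\rho_0\|\rho_1)$ equals the classical divergence $D(\operatorname{Bern}(\epsilon_0)\|\operatorname{Bern}(\epsilon_1))$. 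Combining the displays,
\begin{align}
    2\delta\geqq\tfrac12 e^{-N D}
    \quad\Longrightarrow\quad
    N\geqq \frac{\log(1/(4\delta))}{D}.
\end{align}
Since $\delta<1/8$, we have $\log(1/(4\delta))\geqq\tfrac13\log(1/\delta)$, so this is already of the form $\Omega(\log(1/\delta)/D)$.

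\textbf{Step 3 (bounding the divergence).} This step is the main technical point. We use the $\chi^2$ upper bound
\begin{align}
    D(\operatorname{Bern}(p)\|\operatorname{Bern}(q))\leqq\frac{(p-q)^2}{q(1-q)}.
\end{align}
With $p=\epsilon_0$ and $q=\epsilon_1$ this gives
\begin{align}
    D\leqq\frac{9r^2\epsilon^2}{\epsilon(1+3r)(1-\epsilon_1)}\leqq 18\,r^2\epsilon
\end{align}
once $\epsilon_1\leqq1/2$; this is where ``sufficiently small $\epsilon$'' enters. Substituting into Step 2 yields $N=\Omega(\log(1/\delta)/(r^2\epsilon))$. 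The worst-case quantifier is satisfied because the protocol must succeed on both $\rho_0$ (infidelity $\epsilon$) and $\rho_1$ (infidelity $\Theta(\epsilon)$).

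The remaining care is bookkeeping. If the protocol is stated for an $\epsilon$ it is told in advance, choosing $\epsilon_1$ within a constant factor of $\epsilon$ keeps the bound in terms of $\epsilon$. If the protocol is stated in terms of the unknown true infidelity, the bound holds for whichever of the two infidelities is larger, up to constants. Matching this against Theorem~\ref{thm:sample-complexity} shows the dependence on $\epsilon$, $r$, and $\delta$ is optimal.
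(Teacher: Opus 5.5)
Your proposal is correct and follows essentially the same route as the paper. Both use a commuting pair of states diagonal in $\{\ket{\psi},\ket{\psi^\perp}\}$, reduce estimation to binary testing via disjoint relative-error intervals, and apply data processing with additivity of relative entropy, the Bretagnolle--Huber inequality, and the $\chi^2$ bound on the Bernoulli divergence. The only difference is cosmetic: you separate the hypotheses at $(1+3r)\epsilon$ instead of the paper's $(1+4r)\epsilon$, which changes the divergence constant ($18r^2\epsilon$ versus $32r^2\epsilon$) but not the conclusion.
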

            \begin{proof}
                Choose a state $|\psi^\perp\rangle$ orthogonal to $|\psi\rangle$ and consider the commuting one-parameter family
                \begin{align}
                    \rho_p
                    :=
                    (1-p)|\psi\rangle\langle\psi|
                    +
                    p|\psi^\perp\rangle\langle\psi^\perp| .
                \end{align}
                To start, we take
                \begin{align}
                    \rho_0:=\rho_{\epsilon},
                    \qquad
                    \rho_1:=\rho_{(1+4r)\epsilon},
                \end{align}
                whose infidelities are
                \begin{align}
                    1-\langle\psi|\rho_0|\psi\rangle=\epsilon,
                    \qquad
                    1-\langle\psi|\rho_1|\psi\rangle=(1+4r)\epsilon.
                \end{align}

                For a successful estimate, we demand that the estimator satisfy $\widehat{\epsilon}\leqq(1+r)\epsilon$ for $\rho_0$ and $\widehat{\epsilon}\geqq(1-r)(1+4r)\epsilon$ for $\rho_1$.
                These bounds are separated because, for $0<r\leqq 1/8$,
                \begin{align}
                    (1-r)(1+4r)
                    =
                    1+3r-4r^2
                    >
                    1+r.
                \end{align}
                Now, take the threshold
                \begin{align}
                    T
                    :=
                    \frac{1}{2}
                    \Big[
                        (1+r)\epsilon
                        +
                        (1-r)(1+4r)\epsilon
                    \Big],
                \end{align}
                so that $(1+r)\epsilon<T<(1-r)(1+4r)\epsilon$.
                We then define a binary decision rule $\varphi$ from the estimator $\widehat\epsilon$ by
                \begin{align} \label{eq:decision-rule}
                    \varphi(\widehat\epsilon)
                    :=
                    \begin{cases}
                        0, & \widehat\epsilon\leq T,\\
                        1, & \widehat\epsilon>T,
                    \end{cases}
                \end{align}
                where the decision $\varphi=0$ means that the input is identified as $\rho_0^{\otimes N}$, while $\varphi=1$ means that the input is identified as $\rho_1^{\otimes N}$.

                Let $\mathbb{P}_i$ denote the probability distribution of $\widehat\epsilon$ when the input state is $\rho_i^{\otimes N}$, for $i=0,1$.
                The assumed success probability under $\rho_0^{\otimes N}$ means that
                \begin{align}
                    \mathbb{P}_0
                    \left[
                        \widehat\epsilon\leq(1+r)\epsilon
                    \right]
                    \geq
                    1-\delta .
                \end{align}
                Since $(1+r)\epsilon<T$, we have $\mathbb{P}_0[\varphi=1]\leq\delta$.
                Likewise, the assumed success probability under $\rho_1^{\otimes N}$ means that
                \begin{align}
                    \mathbb{P}_1
                    \left[
                        \widehat\epsilon\geq(1-r)(1+4r)\epsilon
                    \right]
                    \geq
                    1-\delta,
                \end{align}
                and $T<(1-r)(1+4r)\epsilon$ ensures that $\mathbb{P}_1[\varphi=0]\leq\delta$.
                Therefore the average error probability of the test $\varphi$ is
                \begin{align}
                    P_{\rm err}(\varphi)
                    :=
                    \frac{1}{2}\mathbb{P}_0[\varphi=1]
                    +
                    \frac{1}{2}\mathbb{P}_1[\varphi=0]
                    \leq
                    \delta,
                \end{align}
                when equal prior probability $1/2$ is assigned to each state.

                It remains to lower-bound the number of copies required for this binary distinguishing problem.
                Since $\rho_0$ and $\rho_1$ commute, their quantum relative entropy $D$ is the classical Bernoulli relative entropy
                \begin{align}
                    \begin{aligned}
                        D(\rho_0\Vert\rho_1)
                        &=
                        \epsilon
                        \log\frac{\epsilon}{(1+4r)\epsilon}
                        +
                        (1-\epsilon)
                        \log\frac{1-\epsilon}{1-(1+4r)\epsilon} \\[0.5em]
                        &=D_{\mathrm{KL}}\!\left(
                                \operatorname{Bern}(\epsilon)
                                \Vert
                                \operatorname{Bern}((1+4r)\epsilon)
                            \right).
                    \end{aligned}
                \end{align}
                Using the bound for the Bernoulli distribution
                \begin{align}
                    D_{\mathrm{KL}}\!\left(
                        \operatorname{Bern}(p)
                        \Vert
                        \operatorname{Bern}(q)
                    \right)
                    \leqq
                    \frac{(p-q)^2}{q(1-q)},
                \end{align}
                for $p=\epsilon$ and $q=(1+4r)\epsilon$, the right hand side satisfies
                \begin{align}
                    D(\rho_0\Vert\rho_1)
                    \leqq
                    \frac{16r^2\epsilon}
                    {(1+4r)(1-(1+4r)\epsilon)}
                    \leqq
                    32r^2\epsilon,
                \end{align}
                and hence
                \begin{align} \label{eq:relative-entropy-bound}
                    D(\rho_0^{\otimes N}\Vert\rho_1^{\otimes N})
                    =
                    N D(\rho_0\Vert\rho_1)
                    \leqq
                    32Nr^2\epsilon.
                \end{align}

                For any two-outcome measurement $\{M_0,M_1\}$ used to distinguish $\rho_0^{\otimes N}$ from $\rho_1^{\otimes N}$ with equal prior probability $1/2$, define its average error probability by
                \begin{align}
                    P_{\rm err}(M_0,M_1)
                    :=
                    \frac{1}{2}\operatorname{tr}\!\left(M_1\rho_0^{\otimes N}\right)
                    +
                    \frac{1}{2}\operatorname{tr}\!\left(M_0\rho_1^{\otimes N}\right),
                \end{align}
                where outcome $0$ means guessing $\rho_0^{\otimes N}$ and outcome $1$ means guessing $\rho_1^{\otimes N}$.
                The measurement can be represented as a completely positive trace-preserving (CPTP) map that stores its outcome in a classical register,
                \begin{align}
                    \mathcal{M}(\sigma)
                    :=\sum_{a=0,1}\operatorname{tr}(M_a\sigma)\ket{a}\!\bra{a}.
                \end{align}
                For input $\rho_i^{\otimes N}$, the outcome distribution is $Q_i(a):=\operatorname{tr}(M_a\rho_i^{\otimes N})$, so the output state is diagonal with entries $Q_i(a)$.
                For these diagonal output states, quantum relative entropy equals the classical relative entropy of the two outcome distributions, and by the data processing inequality for quantum relative entropy,
                \begin{align}
                    \begin{aligned}
                        D_{\mathrm{KL}}(Q_0\Vert Q_1)
                        &=D\!\left(\mathcal{M}(\rho_0^{\otimes N})
                        \Vert\mathcal{M}(\rho_1^{\otimes N})\right)
                        \\
                        &\leqq D(\rho_0^{\otimes N}\Vert\rho_1^{\otimes N}).
                    \end{aligned}
                \end{align}
                Applying the classical Bretagnolle--Huber inequality (Theorem \ref{thm:bretagnolle-huber-inequality}) to $Q_0$ and $Q_1$, we obtain
                \begin{align}
                    \begin{aligned}
                        P_{\rm{err}}(M_0,M_1)
                        &\geqq\frac{1}{4}\exp\bigl(-D_{\mathrm{KL}}(Q_0\Vert Q_1)\bigl)
                        \\
                        &\geqq\frac{1}{4}\exp\bigl(-D(\rho_0^{\otimes N}\Vert\rho_1^{\otimes N})\bigl).
                    \end{aligned}
                \end{align}
                We apply this general bound to the two-outcome measurement obtained by grouping the estimator's outcomes according to the decision rule \eqref{eq:decision-rule}.
                Since $P_{\rm err}(\varphi)\leqq\delta$,
                \begin{align}
                    \delta
                    \geqq
                    \frac{1}{4}
                    \exp\bigl(
                        -N D(\rho_0\Vert\rho_1)
                    \bigl).
                \end{align}
                Combining this inequality with Eq.\,\eqref{eq:relative-entropy-bound},
                \begin{align}
                    N
                    \geqq
                    \frac{1}{32r^2\epsilon}
                    \log\frac{1}{4\delta}
                    \geqq
                    \frac{1}{96r^2\epsilon}
                    \log\frac{1}{\delta},
                \end{align}
                where the second inequality follows from $\delta<1/8$.
            \end{proof}

    \section{Application: the analog rotation resource state in the STAR architecture} \label{sec:theory-and-simulation}
        We now bring the general protocol to bear on the analog rotation resource state used in the STAR architecture \cite{Akahoshi:2023xck, Toshio:2024hut, Toshio:2026dpa}.
        The target resource state is the analog rotation resource state
        \begin{align} \label{eq:analog-rotation-resource-state}
            \ket*{m_{\theta}}:=R_Z(\theta)\!\ket{+}=\cos\theta\ket{+}+i\sin\theta\ket{-},
        \end{align}
        where we employ the notation $R_Z(\theta):=e^{iZ\theta}$ rather than the conventional $R_Z(\theta):=e^{-\frac{i}{2}Z\theta}$ for convenience.
        In the STAR architecture, this state is consumed directly as an alternative to implementing arbitrary rotations through high overhead magic $T$ state preparation and gate synthesis \cite{Akahoshi:2023xck, Toshio:2024hut, Toshio:2026dpa, Dawson:2005blj}.
        A self-contained review of the STAR architecture and the role of this resource state is given in more detail in Appendix \ref{append:star}.

        For later use, the tensor product of the two rotation magic states with different angles $\theta$, $\theta'$ is
        \begin{align}
            \begin{aligned}
                \ket*{m_{\theta}}\otimes\!\ket*{m_{\theta'}}
                &=
                \frac{1}{\sqrt{2}}\Big(
                    \cos(\theta+\theta')\ket*{\Phi_+}
                    \\
                    &\quad
                    +i\sin(\theta+\theta')\ket*{\Phi_-}
                    \\
                    &\quad
                    +\cos (\theta-\theta')\ket*{\Psi_+}
                    \\
                    &\quad
                    +i\sin(\theta-\theta')\ket*{\Psi_-}
                \Big),
            \end{aligned}
        \end{align}
        which simplifies when $\theta'=\theta$
        \begin{align}
            \begin{aligned}
                \ket*{m_{\theta}}\otimes\!\ket*{m_{\theta}}
                &=
                \frac{1}{\sqrt{2}}\Big(
                    \cos 2\theta\ket*{\Phi_+}
                    \\
                    &\quad
                    +i\sin 2\theta\ket*{\Phi_-}
                    +\ket*{\Psi_+}
                \Big).
            \end{aligned}
        \end{align}

        \subsection{Physical level} \label{subsec:physical_level}
            Before treating $\ket{m_{\theta}}$ as a logical state of a surface code patch, which is how the STAR architecture consumes it, we first perform the numerical simulations of the protocol on bare physical qubits.
            This allows us to isolate the advantage in Theorem \ref{thm:sample-complexity} from technical details in QEC that are inessential to the argument, such as encoding, transversal gate implementations, and decoding.
            In this subsection, we assume an idealized setting where noise enters only through the analog rotation that prepares the resource state, with the remaining operations, namely the Clifford operations or Pauli measurements, treated as noiseless.
            Although this is unrealistic on an actual noisy quantum device, we neglect additional error sources to focus on the effect of noise in the resource state preparation.
            We leave the more realistic fault-tolerant setting for Subsec.\,\ref{subsec:logical_level}, where errors also affect the other operations yet the advantage of the protocol is shown to survive under such noise.

            To clarify, the target of this subsection is to estimate the fidelity $F:=\bra*{m_{\theta}}\rho\ket*{m_{\theta}}$.
            With $\ket*{m_\theta^\perp}:=Z\ket*{m_\theta}$ and $c:=\bra*{m_\theta}\rho\ket*{m_\theta^\perp}$, the Bell part of Sec.\,\ref{sec:general_protocol} is            \begin{align}
                b
                :=
                \bra*{\Psi_-}\rho\otimes\rho\ket*{\Psi_-}
                =
                \det\rho
                =
                F(1-F)-|c|^2.
            \end{align}
            In principle, the BCFE protocol applies to any states under any noise model as we see in Sec.\,\ref{sec:general_protocol}, but we first restrict our attention to a specific error channel to exemplify essential physics of the protocol.
            For this purpose, we consider the depolarizing channel
            \begin{align}
                \mathcal{E}_0(\rho)=p\rho+\sum_{P=X,Y,Z}\epsilon_P P\rho P,
                \label{eq:physical-pauli-channel}
            \end{align}
            during state preparation \eqref{eq:analog-rotation-resource-state}, where $p:=1-\epsilon_X-\epsilon_Y-\epsilon_Z$.
            As mentioned above, we assume only this channel is noisy at the physical level, so the Bell measurement in the circuit of Fig.\,\ref{fig:backward} and the Pauli measurements used below carry no faults of their own.
            \begin{figure*}
                \centering
                \begin{quantikz}
                    \lstick{$\ket{0}$} & \gate{H} & \gate{R_Z(\theta)} & \ctrl{1} & \gate{H} & \meter{} \\
                    \lstick{$\ket{0}$} & \gate{H} & \gate{R_Z(\theta)} & \targ{}  & \qw & \meter{$Z=-1$} \\
                \end{quantikz}
                \caption{The first Hadamard gate and $R_Z(\theta)$ prepare the resource states.
                The subsequent $\CNOT$ and Hadamard gate implement the Bell measurement.}
                \label{fig:backward}
            \end{figure*}
            The prepared state is $\rho:=\mathcal{E}_0(\ket*{m_\theta}\!\bra*{m_\theta})$, and the projector onto the measurement outcome $Z=-1$ is $\Pi_{Z,-}:=(I-Z)/2$.
            After the CNOT, the normalized state of the first qubit conditioned on obtaining $Z_2=-1$ is
            \begin{align}
                \begin{aligned}
                    \rho'
                    &:=
                    \frac{1}{\operatorname{Pr}(Z_2=-1)}
                    \operatorname{Tr}_2\!\left[
                        (I\otimes\Pi_{Z,-})
                        \right.
                    \\
                    &\qquad\left.
                        \CNOT_{12}
                        (\rho\otimes\rho)
                        \CNOT_{12}^{\dagger}
                        (I\otimes\Pi_{Z,-})
                    \right].
                \end{aligned}
            \end{align}
            The final Hadamard gate converts the computational-basis measurement of the first qubit into an $X$ measurement before that gate.
            The conditional probability $\operatorname{Tr}[\rho'(I-X_1)/2]$ under the channel in Eq.\,\eqref{eq:physical-pauli-channel} is
            \begin{align}
                \begin{aligned}
                    &\operatorname{Pr}(X_1=-1\mid Z_2=-1)
                    \\
                    &=
                    2\epsilon_X\sin^2(2\theta)
                    +
                    2\epsilon_Y\cos^2(2\theta)
                    +
                    2\epsilon_Z
                    \\
                    &\quad
                    -
                    2\epsilon_X^2\sin^2(2\theta)
                    -
                    2\epsilon_Y^2\cos^2(2\theta)
                    -
                    2\epsilon_Z^2
                    \\
                    &\quad
                    -
                    4\epsilon_X\epsilon_Z\sin^2(2\theta)
                    -
                    4\epsilon_Y\epsilon_Z\cos^2(2\theta).
                \end{aligned}
            \end{align}
            For the isotropic case $\epsilon_X=\epsilon_Y=\epsilon_Z=\epsilon$,
            \begin{align}
                \begin{aligned}
                    \operatorname{Pr}(X_1=-1\mid Z_2=-1)
                    &=
                    4\epsilon(1-3\epsilon)+4\epsilon^2
                    \\
                    &=
                    4\epsilon-8\epsilon^2.
                \end{aligned}
            \end{align}
            This conditional probability is $O(\epsilon)$, so the same variance argument used for the Bell estimator gives a sample complexity of $O(1/(r^2\epsilon))$ for additive precision $r\epsilon$ at fixed confidence.
            Resolving it to additive precision $r\epsilon$ therefore requires $N=O(1/(r^2\epsilon))$ samples, in agreement with Theorem \ref{thm:sample-complexity}.

            \paragraph{Estimators used in the simulation.}
                For the target resource state $\ket*{m_\theta}$, the fidelity can be written as
                \begin{align}
                    F
                    =
                    \frac{1}{2}
                    \big(
                        1
                        +\cos(2\theta)\langle X\rangle
                        -\sin(2\theta)\langle Y\rangle
                    \big).
                \end{align}
                For this target, Eq.\,\eqref{eq:general-coherence} specializes to $c=\frac{1}{2}\big(\langle Z\rangle+ia\big)$ with $a:=\sin(2\theta)\langle X\rangle+\cos(2\theta)\langle Y\rangle$, for which
                \begin{align} \label{eq:physical-coherence-specialization}
                    |c|^2
                    =
                    \frac{1}{4}
                    \left(
                        \langle Z\rangle^2+a^2
                    \right).
                \end{align}
                Under the depolarizing channel in Eq.\,\eqref{eq:physical-pauli-channel}, $\langle Z\rangle=0$ follows from the action of the three errors,
                \begin{align}
                    \begin{aligned}
                        X\ket*{m_\theta}
                        &=
                        \ket*{m_{-\theta}},
                        \\
                        Y\ket*{m_\theta}
                        &=
                        -\ket*{m_{\pi/2-\theta}},
                        \\
                        Z\ket*{m_\theta}
                        &=
                        -i\ket*{m_{\theta+\pi/2}},
                    \end{aligned}
                \end{align}
                and the fact that $\bra*{m_\varphi}Z\ket*{m_\varphi}=0$ for any angle $\varphi$.
                The numerical implementation nevertheless measures all three Pauli observables and applies Eq.\,\eqref{eq:physical-coherence-specialization}, without substituting the model specific identity $\langle Z\rangle=0$.

                For $N_P$ independent outcomes $A_{P,j}\in\{+1,-1\}$ in basis $P$, define
                \begin{align}
                    \begin{aligned}
                        \widehat\mu_P
                        &:=
                        \frac{1}{N_P}
                        \sum_{j=1}^{N_P}A_{P,j},
                        \\
                        \widehat{\mu_P^2}
                        &:=
                        \frac{N_P\widehat\mu_P^2-1}{N_P-1}.
                    \end{aligned}
                \end{align}
                Since $A_{P,j}^2=1$, this correction satisfies $\mathbb{E}[\widehat{\mu_P^2}]=\langle P\rangle^2$.

                Using these estimators, an unbiased estimator of $|c|^2$ is given by
                \begin{align}
                    \begin{aligned}
                        \widehat q
                        &:=
                        \frac{1}{4}
                        \Big(
                            \widehat{\mu_Z^2}
                            +\sin^2(2\theta)\widehat{\mu_X^2}
                            +\cos^2(2\theta)\widehat{\mu_Y^2}
                            \\
                            &\hspace{7em}
                            +2\sin(2\theta)\cos(2\theta)
                            \widehat\mu_X\widehat\mu_Y
                        \Big).
                    \end{aligned}
                    \label{eq:physical-unbiased-q}
                \end{align}
                The physical simulation combines $\widehat b$ and $\widehat q$ through Eqs.\,\eqref{eq:general-s-estimator} and \eqref{eq:general-final-estimator}.
                Theorem \ref{thm:sample-complexity} already established the corresponding multiplicative error guarantee and sample complexity, so we do not repeat that analysis here.

                The error bars in the BCFE simulation are approximate standard errors computed by the first-order delta method from independent Bell and Pauli samples,
                \begin{align}
                    \widehat{\mathrm{SE}}(\widehat\epsilon)
                    =
                    \frac{1}{\sqrt{1-4\widetilde s}}
                    \left[
                        \frac{\widehat b(1-\widehat b)}{N_B}
                        +\sum_{P=X,Y,Z}
                        \left(\frac{\partial\widehat q}{\partial\widehat\mu_P}\right)^2
                        \frac{1-\widehat\mu_P^2}{N_P}
                    \right]^{1/2}.
                    \label{eq:bcfe-standard-error}
                \end{align}
                The derivatives include the bias correction in Eq.\,\eqref{eq:physical-unbiased-q}.
                The error bars for the Bell and coherence parts are the square roots of the first term and the sum, respectively.
                Here $N_B$ counts measured Bell pairs and $N_P$ counts outcomes in basis $P$.
                Clipping is applied to the point estimate, and the derivative of the square-root inversion is evaluated at $\widetilde s$, without accounting for the change in the sampling distribution caused by clipping.

                \begin{figure*}[t]
                    \centering
                    \begin{subfigure}[t]{0.48\textwidth}
                        \centering
                        \includegraphics[width=0.8\linewidth]{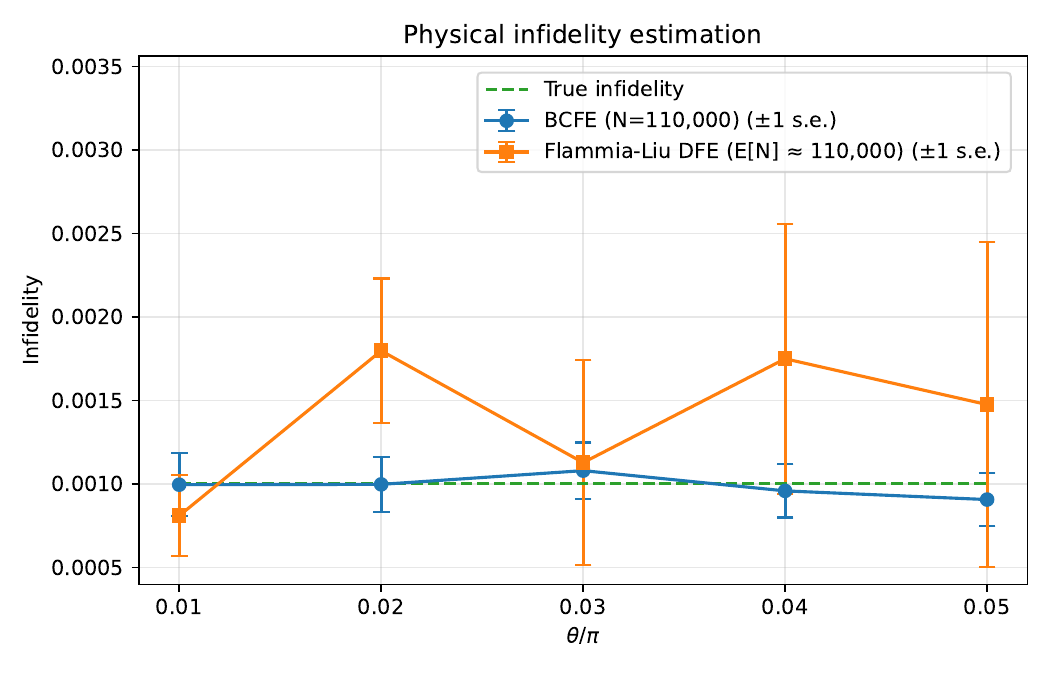}
                        \caption{$N_B=40{,}000$, $N_P=10{,}000$, and $N=110{,}000$.}
                    \end{subfigure}
                    \hfill
                    \begin{subfigure}[t]{0.48\textwidth}
                        \centering
                        \includegraphics[width=0.8\linewidth]{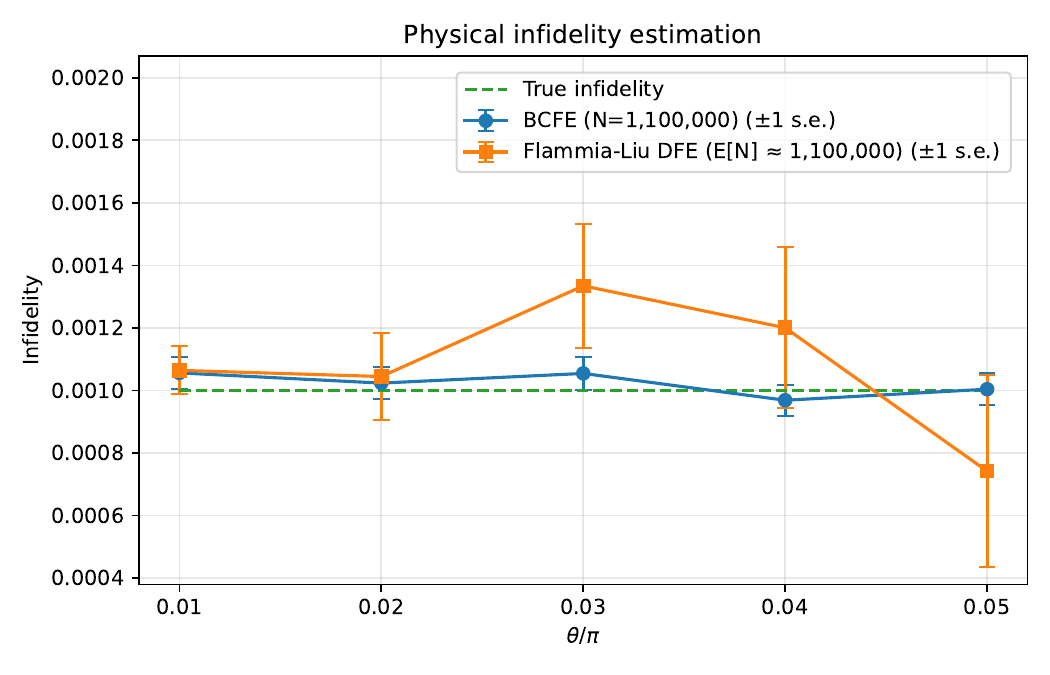}
                        \caption{$N_B=400{,}000$, $N_P=100{,}000$, and $N=1{,}100{,}000$.}
                    \end{subfigure}
                    \caption{Physical level fidelity estimation under the isotropic depolarizing channel with $\epsilon_X=\epsilon_Y=\epsilon_Z=5\times10^{-4}$.
                    The green dashed curve is the exact infidelity; blue circles and orange squares denote our BCFE protocol and Flammia--Liu DFE, respectively.
                    BCFE uses exactly $N$ resource states, while DFE uses approximately $N$ in expectation, with a fluctuating realized count.
                    Error bars indicate one standard error.
                    BCFE error bars use the delta-method approximation in Eq.\,\eqref{eq:bcfe-standard-error}.}
                    \label{fig:physical-simulation}
                \end{figure*}

                \begin{figure*}[t]
                    \centering
                    \begin{subfigure}[t]{0.48\textwidth}
                        \centering
                        \includegraphics[width=0.8\linewidth]{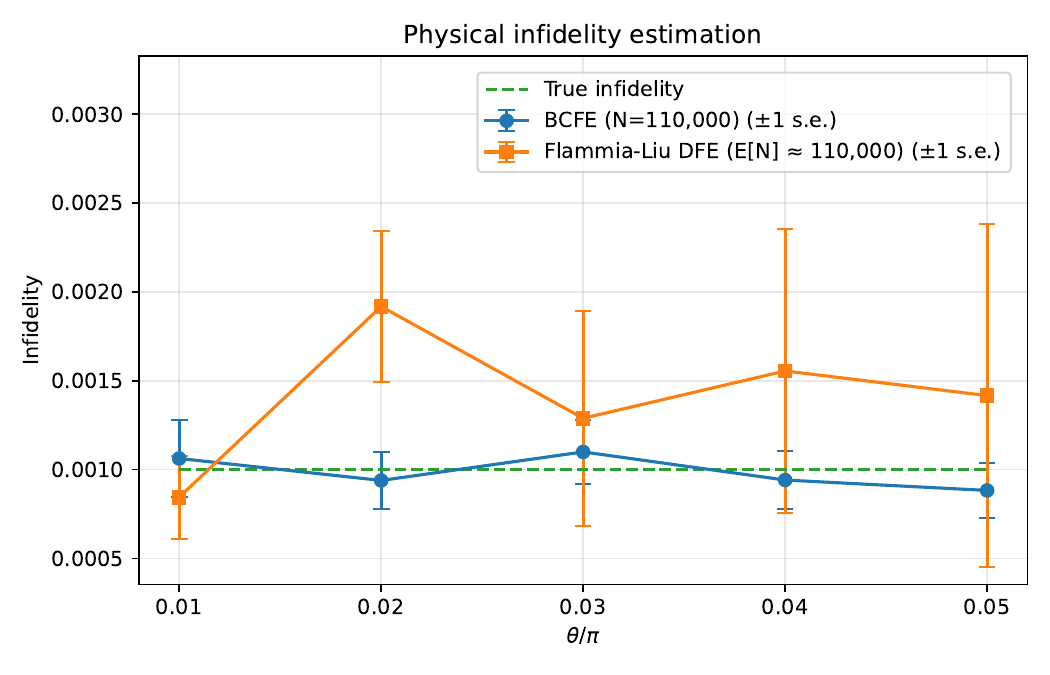}
                        \caption{$N_B=40{,}000$, $N_P=10{,}000$, and $N=110{,}000$.}
                    \end{subfigure}
                    \hfill
                    \begin{subfigure}[t]{0.48\textwidth}
                        \centering
                        \includegraphics[width=0.8\linewidth]{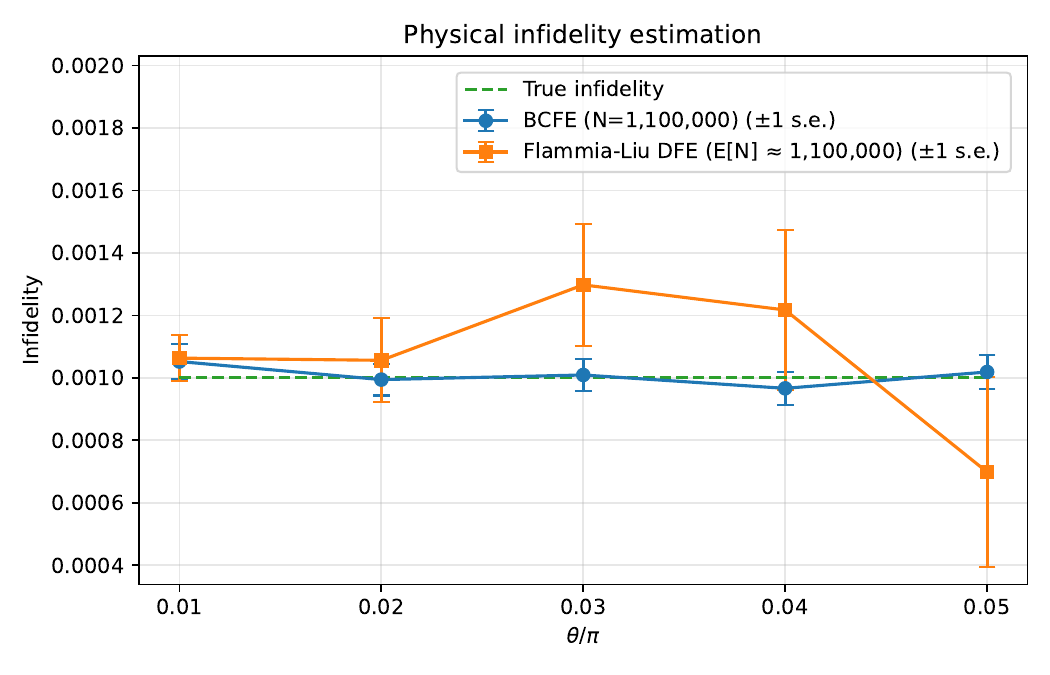}
                        \caption{$N_B=400{,}000$, $N_P=100{,}000$, and $N=1{,}100{,}000$.}
                    \end{subfigure}
                    \caption{Physical level fidelity estimation for the density matrix in Eq.\,\eqref{eq:physical-general-noisy-state}, without assuming a depolarizing channel.
                    Plotting conventions and the comparison of expected copy counts are the same as in Fig.\,\ref{fig:physical-simulation}.
                    Error bars indicate one standard error.
                    BCFE error bars use the delta-method approximation in Eq.\,\eqref{eq:bcfe-standard-error}.}
                    \label{fig:physical-simulation-general}
                \end{figure*}

                We compare this protocol with Flammia--Liu direct fidelity estimation (DFE) \cite{Flammia:2011fda} at approximately equal expected numbers of prepared resource states.
                For $N_B$ Bell shots and $N_P$ measurements in each Pauli basis, BCFE consumes $N=2N_B+3N_P$ copies of the analog rotation state $\ket{m_{\theta}}$, and the DFE parameters are chosen to match this expected cost up to integer rounding.
                See Appendix \ref{append:flammia_liu} for the DFE sampling procedure and its statistical properties.

                Fig.\,\ref{fig:physical-simulation} exhibits the comparison for the isotropic specialization of Eq.\,\eqref{eq:physical-pauli-channel}, with $\epsilon_X=\epsilon_Y=\epsilon_Z=5\times10^{-4}$ and hence $1-F=10^{-3}$.
                We use Qulacs \cite{Suzuki:2020nbf} for all Pauli and Bell measurement samples.
                At $N=110{,}000$, the proposed estimator already lies close to the exact infidelity, while the DFE standard errors are larger and increase with the angle $\theta$.
                At $N=1{,}100{,}000$, the proposed BCFE estimates nearly coincide with the exact curve throughout the angle range, but the DFE error bars remain visibly larger.

                At this point, some reader may not yet be convinced that the observed statistical advantage is not specific to the depolarizing channel.
                However, this advantage is not because of the choice of noise channel.
                To ground this claim, we further consider a noisy state specified directly in the target basis
                \begin{align}
                    \begin{aligned}
                        \rho
                        &=
                        \begin{pmatrix}
                            1-\epsilon & c \\
                            c^* & \epsilon
                        \end{pmatrix},
                        \qquad
                        \epsilon=10^{-3},
                        \\
                        c
                        &=
                        0.20\sqrt{(1-\epsilon)\epsilon}\,e^{0.35\pi i}.
                    \end{aligned}
                    \label{eq:physical-general-noisy-state}
                \end{align}
                This state has comparably large coherence in the target basis and is not obtained by applying the Pauli channel in Eq.\,\eqref{eq:physical-pauli-channel}.
                Fig.\,\ref{fig:physical-simulation-general} confirms the same ordering of statistical errors, with smaller estimated standard errors for BCFE at both budgets across the sampled angles.
                Theorem \ref{thm:sample-complexity} holds for arbitrary $\rho$, so the sample complexity advantage is independent of the error channel.
                This result demonstrates that the same finite sample advantage persists for noisy density matrices beyond the depolarizing channel.

        \subsection{Logical level} \label{subsec:logical_level}
            We next turn to the logical level, where the protocol runs on encoded logical qubits and its Bell measurement must itself be made fault-tolerant.
            The physical level simulations in the preceding subsection assume ideal measurements after the preparation of the analog rotation resource state $\ket{m_{\theta}}$, so it is unclear whether the advantage in sample complexity (Figs.\,\ref{fig:physical-simulation},\,\ref{fig:physical-simulation-general}) can survive under the realistic noise that afflicts an actual device.
            The logical simulations therefore repeat the comparison of Subsec.\,\ref{subsec:physical_level} under the circuit-level noise.
            Note that the following FTQC setup is exactly the one adopted in the STAR architecture (see Appendix \ref{append:star} for a review).

            Consider two distance-$d$ rotated surface code patches for the Bell part, as illustrated for $d=3$ in Fig.\,\ref{fig:two-surface-code-patches}.
            To reduce the cost on a classical computer, the simulations allocate one ancilla instead of $d^2-1$ ancillas to each patch for the stabilizer measurements following Ref.\,\cite{Katsuda:2022mcn}.
            These measurements are performed sequentially, and the ancilla is reset and reused after each readout.
            The logical resource state $\ket{m_{\theta,L}}$ is prepared by applying $e^{i\alpha Z}$ along a path $\gamma$ on which the logical $Z$ on the surface code acts (see Fig.\,\ref{fig:two-surface-code-patches}), followed by post-selection on the trivial stabilizer syndrome \cite{Choi:2023bvh}
            \begin{align}
                \begin{aligned}
                    \prod_{v\in\gamma}e^{i\alpha Z_v}\ket*{+_L}
                    &=
                    \sum_{n=0}^{|\gamma|}
                    i^n\cos^{|\gamma|-n}\!\alpha\,\sin^n\!\alpha
                    \\
                    &\quad\times
                    \sum_{\substack{\gamma'\subset\gamma \\ |\gamma'|=n}}
                    \prod_{i\in\gamma'}Z_i\ket*{+_L}
                    \\
                    &=
                    \cos^d\alpha\ket*{+_L}
                    +i^d\sin^d\alpha\ket*{-_L}
                    \\
                    &\quad
                    +(\text{states outside the logical subspace})
                    \\
                    &\rightarrow
                    \frac{
                        \cos^d\alpha\ket*{+_L}
                        +i^d\sin^d\alpha\ket*{-_L}
                    }{
                        \sqrt{\sin^{2d}\alpha+\cos^{2d}\alpha}
                    } .
                \end{aligned}
                \label{eq:physical_rotation}
            \end{align}
            For odd $d$, the physical angle $\alpha$ and the target logical angle $\theta$ satisfy
            \begin{align}
                \theta=\arctan\bigl((-1)^{(d-1)/2}\tan^d\alpha\bigl),
            \end{align}
            where we choose $\alpha,\theta\in(-\pi/2,\pi/2)$.
            For the $d=3$ simulations with $\theta>0$, we therefore choose $\alpha=-\arctan((\tan\theta)^{1/3})$.
            The post-selection also determines which resource preparation faults enter the accepted state.
            A single $X$ or $Y$ fault after a physical rotation flips a $Z$-type stabilizer syndrome and therefore has zero probability of passing the ideal post-selection.
            Such a fault changes the acceptance rate but does not contribute to the conditional state.
            Motivated by this observation, we adopt a dephasing channel for resource preparation, with an independent $Z$ fault after each physical rotation on $\gamma$
            \begin{align} \label{eq:dephasing-channel}
                \mathcal{E}_Z(\rho)
                =
                (1-p_Z)\rho + p_Z Z\rho Z.
            \end{align}

            To examine the protocol under circuit-level noise, the logical simulations instead implement the Bell measurement with the circuit in Fig.\,\ref{fig:logical-physical-circuit} for the $d=3$ surface codes in Fig.\,\ref{fig:two-surface-code-patches}.
            It is the logical analog of Fig.\,\ref{fig:backward}, and each patch is first encoded into $\ket*{+_L}$ from physical $\ket*{0}^{\otimes 9}$ states.
            The two resource states are then generated by physical rotations along a length three logical path and projected back to the code subspace.

            In practice, each patch undergoes three stabilizer measurement rounds denoted by $\Pi_L$, and only a syndrome whose outcomes are all zero is accepted to suppress the effect of noise during projection.
            The ancilla assigned to each patch is reset after every stabilizer measurement, and the CNOT ordering orients any resulting hook error so that it does not reduce the effective code distance \cite{Katsuda:2022mcn}.
            The Bell part applies a transversal $\CNOT$ from the first patch to the second, measures every data qubit in the second patch destructively in the $Z$ basis, and post-selects the outcome $Z_L^{(2)}=-1$ after classical decoding.
            The surviving trials then proceed to destructive physical $X$ measurement and classical decoding on the first patch.
            Since the $\CNOT$ is transversal, a single gate fault produces at most one data error in each patch.
            The resulting errors in the logical $X$ and $Z$ readout are corrected by a classical decoder using minimum weight perfect matching (MWPM) on the physical measurement outcomes.

            The corresponding single patch circuit used for the coherence term $|c|^2$ is shown in Fig.\,\ref{fig:logical-coherence-circuit}.
            It uses the same resource state preparation and projection, followed by one of the three destructive logical Pauli $X$, $Y$, and $Z$ measurements.
            The destructive logical $X$ and $Z$ measurements are also decoded from all nine physical outcomes by the MWPM decoder.
            The $Y_L$ measurement does not determine a complete surface code syndrome, so this part is not decoded.

            \begin{figure}[tbp]
                \centering
                \begin{minipage}[c]{0.45\columnwidth}
                    \centering
                    \begin{tikzpicture}[scale=0.9]
                        \definecolor{pastel_blue}{RGB}{143, 190, 207}
                        \definecolor{pastel_red}{RGB}{255, 117, 110}
                        \fill[pastel_blue,opacity=0.7] (0,1) rectangle (1,2);
                        \fill[pastel_blue,opacity=0.7] (1,0) rectangle (2,1);
                        \fill[pastel_red,opacity=0.7] (1,1) rectangle (2,2);
                        \fill[pastel_red,opacity=0.7] (0,0) rectangle (1,1);

                        \fill[pastel_blue,opacity=0.7]
                        (1,2) arc[start angle=180,end angle=0,radius=.5] -- cycle;
                        \fill[pastel_blue,opacity=0.7]
                        (1,0) arc[start angle=0,end angle=-180,radius=.5] -- cycle;
                        \fill[pastel_red,opacity=0.7]
                        (0,1) arc[start angle=270,end angle=90,radius=.5] -- cycle;
                        \fill[pastel_red,opacity=0.7]
                        (2,0) arc[start angle=-90,end angle=90,radius=.5] -- cycle;

                        \draw[line width=1.2pt, black!70!white] (0,0) grid (2,2);

                        \foreach \x in {0,1,2}
                        {
                          \foreach \y in {0,1,2}
                          {
                            \fill[black!70!white] (\x,\y) circle (0.11);
                          }
                        }
                        \draw[line width=3.0pt] (0,2) -- (2,2);
                        \foreach \x in {0,1,2}
                        {
                          \fill (\x,2) circle (0.11);
                          \node[below=2pt,font=\scriptsize] at (\x,2) {$e^{i\alpha Z}$};
                        }
                        \node[above=2pt,font=\normalsize] at (0.5,2) {$\gamma$};
                    \end{tikzpicture}
                \end{minipage}
                \begin{minipage}[c]{0.45\columnwidth}
                    \centering
                    \begin{tikzpicture}[scale=0.9]
                        \definecolor{pastel_blue}{RGB}{143, 190, 207}
                        \definecolor{pastel_red}{RGB}{255, 117, 110}
                        \fill[pastel_blue,opacity=0.7] (0,1) rectangle (1,2);
                        \fill[pastel_blue,opacity=0.7] (1,0) rectangle (2,1);
                        \fill[pastel_red,opacity=0.7] (1,1) rectangle (2,2);
                        \fill[pastel_red,opacity=0.7] (0,0) rectangle (1,1);

                        \fill[pastel_blue,opacity=0.7]
                        (1,2) arc[start angle=180,end angle=0,radius=.5] -- cycle;
                        \fill[pastel_blue,opacity=0.7]
                        (1,0) arc[start angle=0,end angle=-180,radius=.5] -- cycle;
                        \fill[pastel_red,opacity=0.7]
                        (0,1) arc[start angle=270,end angle=90,radius=.5] -- cycle;
                        \fill[pastel_red,opacity=0.7]
                        (2,0) arc[start angle=-90,end angle=90,radius=.5] -- cycle;

                        \draw[line width=1.2pt, black!70!white] (0,0) grid (2,2);

                        \foreach \x in {0,1,2}
                        {
                          \foreach \y in {0,1,2}
                          {
                            \fill[black!70!white] (\x,\y) circle (0.11);
                          }
                        }
                    \end{tikzpicture}
                \end{minipage}
                \caption{Two $d=3$ rotated surface code patches.
                The rotations $e^{i\alpha Z}$ act on the three data qubits of the upper horizontal logical $Z_L$ path $\gamma$ in each patch.
                The path is marked only in the left patch.}
                \label{fig:two-surface-code-patches}
            \end{figure}

            \begin{figure*}[t!]
                \centering
                \begin{subfigure}[t]{0.58\textwidth}
                    \centering
                    \scalebox{0.45}{\begin{quantikz}[row sep=0.015cm, column sep=0.055cm]
                        \lstick{$\ket{0}_{v_1}^{(1)}$} & \gate{H}\gategroup[wires=18,steps=9,style={fill=blue!8,draw=blue!25,rounded corners},background]{} & \gate{H} & \qw & \qw & \qw & \qw & \ctrl{3} & \qw & \qw & \phantomgate{\qquad} & \qw & \qw & \qw & \gate{R_Z(\alpha)}\gategroup[wires=18,steps=2,style={fill=green!10,draw=green!30!black,rounded corners},background]{} & \gate[9]{\Pi_L^{(1)}} & \qw & \qw & \ctrl{9}\gategroup[wires=18,steps=9,style={fill=red!7,draw=red!25,rounded corners},background]{} & \qw & \qw & \qw & \qw & \qw & \qw & \qw & \qw & \qw & \qw & \phantomgate{\qquad} & \meter{$X$}\gategroup[wires=9,steps=1,style={fill=orange!8,draw=orange!40!black,rounded corners},background]{} \\
                        \lstick{$\ket{0}_{v_2}^{(1)}$} & \gate{H} & \qw & \targ{} & \ctrl{3} & \qw & \qw & \qw & \qw & \qw & \qw & \qw & \qw & \qw & \gate{R_Z(\alpha)} & \qw & \qw & \qw & \qw & \ctrl{9} & \qw & \qw & \qw & \qw & \qw & \qw & \qw & \qw & \qw & \qw & \meter{$X$} \\
                        \lstick{$\ket{0}_{v_3}^{(1)}$} & \gate{H} & \gate{H} & \ctrl{-1} & \qw & \qw & \qw & \qw & \ctrl{3} & \qw & \qw & \qw & \qw & \qw & \gate{R_Z(\alpha)} & \qw & \qw & \qw & \qw & \qw & \ctrl{9} & \qw & \qw & \qw & \qw & \qw & \qw & \qw & \qw & \qw & \meter{$X$} \\
                        \lstick{$\ket{0}_{u_1}^{(1)}$} & \gate{H} & \qw & \qw & \qw & \targ{} & \qw & \targ{} & \qw & \qw & \qw & \qw & \qw & \qw & \qw & \qw & \qw & \qw & \qw & \qw & \qw & \ctrl{9} & \qw & \qw & \qw & \qw & \qw & \qw & \qw & \qw & \meter{$X$} \\
                        \lstick{$\ket{0}_{u_2}^{(1)}$} & \gate{H} & \qw & \qw & \targ{} & \qw & \qw & \qw & \qw & \targ{} & \qw & \qw & \qw & \qw & \qw & \qw & \qw & \qw & \qw & \qw & \qw & \qw & \ctrl{9} & \qw & \qw & \qw & \qw & \qw & \qw & \qw & \meter{$X$} \\
                        \lstick{$\ket{0}_{u_3}^{(1)}$} & \gate{H} & \qw & \qw & \qw & \qw & \targ{} & \qw & \targ{} & \qw & \qw & \qw & \qw & \qw & \qw & \qw & \qw & \qw & \qw & \qw & \qw & \qw & \qw & \ctrl{9} & \qw & \qw & \qw & \qw & \qw & \qw & \meter{$X$} \\
                        \lstick{$\ket{0}_{u_4}^{(1)}$} & \gate{H} & \qw & \targ{} & \qw & \ctrl{-3} & \qw & \qw & \qw & \qw & \qw & \qw & \qw & \qw & \qw & \qw & \qw & \qw & \qw & \qw & \qw & \qw & \qw & \qw & \ctrl{9} & \qw & \qw & \qw & \qw & \qw & \meter{$X$} \\
                        \lstick{$\ket{0}_{u_5}^{(1)}$} & \gate{H} & \gate{H} & \ctrl{-1} & \qw & \qw & \qw & \qw & \qw & \ctrl{-3} & \qw & \qw & \qw & \qw & \qw & \qw & \qw & \qw & \qw & \qw & \qw & \qw & \qw & \qw & \qw & \ctrl{9} & \qw & \qw & \qw & \qw & \meter{$X$} \\
                        \lstick{$\ket{0}_{u_6}^{(1)}$} & \gate{H} & \gate{H} & \qw & \qw & \qw & \ctrl{-3} & \qw & \qw & \qw & \qw & \qw & \qw & \qw & \qw & \qw & \qw & \qw & \qw & \qw & \qw & \qw & \qw & \qw & \qw & \qw & \ctrl{9} & \qw & \qw & \qw & \meter{$X$} \\
                        \lstick{$\ket{0}_{v_1}^{(2)}$} & \gate{H} & \gate{H} & \qw & \qw & \qw & \qw & \ctrl{3} & \qw & \qw & \phantomgate{\qquad} & \qw & \qw & \qw & \gate{R_Z(\alpha)} & \gate[9]{\Pi_L^{(2)}} & \qw & \qw & \targ{} & \qw & \qw & \qw & \qw & \qw & \qw & \qw & \qw & \phantomgate{\qquad} & \meter{$Z$}\gategroup[wires=9,steps=1,style={fill=yellow!14,draw=orange!55!black,rounded corners},background]{} \\
                        \lstick{$\ket{0}_{v_2}^{(2)}$} & \gate{H} & \qw & \targ{} & \ctrl{3} & \qw & \qw & \qw & \qw & \qw & \qw & \qw & \qw & \qw & \gate{R_Z(\alpha)} & \qw & \qw & \qw & \qw & \targ{} & \qw & \qw & \qw & \qw & \qw & \qw & \qw & \qw & \meter{$Z$} \\
                        \lstick{$\ket{0}_{v_3}^{(2)}$} & \gate{H} & \gate{H} & \ctrl{-1} & \qw & \qw & \qw & \qw & \ctrl{3} & \qw & \qw & \qw & \qw & \qw & \gate{R_Z(\alpha)} & \qw & \qw & \qw & \qw & \qw & \targ{} & \qw & \qw & \qw & \qw & \qw & \qw & \qw & \meter{$Z$} \\
                        \lstick{$\ket{0}_{u_1}^{(2)}$} & \gate{H} & \qw & \qw & \qw & \targ{} & \qw & \targ{} & \qw & \qw & \qw & \qw & \qw & \qw & \qw & \qw & \qw & \qw & \qw & \qw & \qw & \targ{} & \qw & \qw & \qw & \qw & \qw & \qw & \meter{$Z$} \\
                        \lstick{$\ket{0}_{u_2}^{(2)}$} & \gate{H} & \qw & \qw & \targ{} & \qw & \qw & \qw & \qw & \targ{} & \qw & \qw & \qw & \qw & \qw & \qw & \qw & \qw & \qw & \qw & \qw & \qw & \targ{} & \qw & \qw & \qw & \qw & \qw & \meter{$Z$} \\
                        \lstick{$\ket{0}_{u_3}^{(2)}$} & \gate{H} & \qw & \qw & \qw & \qw & \targ{} & \qw & \targ{} & \qw & \qw & \qw & \qw & \qw & \qw & \qw & \qw & \qw & \qw & \qw & \qw & \qw & \qw & \targ{} & \qw & \qw & \qw & \qw & \meter{$Z$} \\
                        \lstick{$\ket{0}_{u_4}^{(2)}$} & \gate{H} & \qw & \targ{} & \qw & \ctrl{-3} & \qw & \qw & \qw & \qw & \qw & \qw & \qw & \qw & \qw & \qw & \qw & \qw & \qw & \qw & \qw & \qw & \qw & \qw & \targ{} & \qw & \qw & \qw & \meter{$Z$} \\
                        \lstick{$\ket{0}_{u_5}^{(2)}$} & \gate{H} & \gate{H} & \ctrl{-1} & \qw & \qw & \qw & \qw & \qw & \ctrl{-3} & \qw & \qw & \qw & \qw & \qw & \qw & \qw & \qw & \qw & \qw & \qw & \qw & \qw & \qw & \qw & \targ{} & \qw & \qw & \meter{$Z$} \\
                        \lstick{$\ket{0}_{u_6}^{(2)}$} & \gate{H} & \gate{H} & \qw & \qw & \qw & \ctrl{-3} & \qw & \qw & \qw & \qw & \qw & \qw & \qw & \qw & \qw & \qw & \qw & \qw & \qw & \qw & \qw & \qw & \qw & \qw & \qw & \targ{} & \qw & \meter{$Z$} \\
                        \lstick{$\ket{0}_{a}^{(1)}$} & \qw & \qw & \qw & \qw & \qw & \qw & \qw & \qw & \qw & \qw & \qw & \qw & \qw & \qw & \gate{(\mathcal{M}_{\rm{stab}}^{(1)})^{\times3}} & \meter{$Z$} & \gate{\ket{0}} & \qw & \qw & \qw & \qw & \qw & \qw & \qw & \qw & \qw & \qw & \qw & \qw & \qw \\
                        \lstick{$\ket{0}_{a}^{(2)}$} & \qw & \qw & \qw & \qw & \qw & \qw & \qw & \qw & \qw & \qw & \qw & \qw & \qw & \qw & \gate{(\mathcal{M}_{\rm{stab}}^{(2)})^{\times3}} & \meter{$Z$} & \gate{\ket{0}} & \qw & \qw & \qw & \qw & \qw & \qw & \qw & \qw & \qw & \qw & \qw & \qw & \qw
                    \end{quantikz}
                    }
                    \caption{Logical Bell measurement with a transversal $\CNOT$ for two $d=3$ patches.}
                    \label{fig:logical-physical-circuit}
                \end{subfigure}
                \hfill
                \begin{subfigure}[t]{0.38\textwidth}
                    \centering
                    \scalebox{0.65}{\begin{quantikz}[row sep=0.018cm, column sep=0.10cm]
                        \lstick{$\ket{0}_{v_1}$} & \gate{H}\gategroup[wires=9,steps=9,style={fill=blue!8,draw=blue!25,rounded corners},background]{} & \gate{H} & \qw & \qw & \qw & \qw & \ctrl{3} & \qw & \qw & \phantomgate{\qquad} & \gate{R_Z(\alpha)}\gategroup[wires=9,steps=2,style={fill=green!10,draw=green!30!black,rounded corners},background]{} & \gate[9]{\Pi_L} & \qw \\
                        \lstick{$\ket{0}_{v_2}$} & \gate{H} & \qw & \targ{} & \ctrl{3} & \qw & \qw & \qw & \qw & \qw & \qw & \gate{R_Z(\alpha)} & \qw & \qw \\
                        \lstick{$\ket{0}_{v_3}$} & \gate{H} & \gate{H} & \ctrl{-1} & \qw & \qw & \qw & \qw & \ctrl{3} & \qw & \qw & \gate{R_Z(\alpha)} & \qw & \qw \\
                        \lstick{$\ket{0}_{u_1}$} & \gate{H} & \qw & \qw & \qw & \targ{} & \qw & \targ{} & \qw & \qw & \qw & \qw & \qw & \qw \\
                        \lstick{$\ket{0}_{u_2}$} & \gate{H} & \qw & \qw & \targ{} & \qw & \qw & \qw & \qw & \targ{} & \qw & \qw & \qw & \qw \\
                        \lstick{$\ket{0}_{u_3}$} & \gate{H} & \qw & \qw & \qw & \qw & \targ{} & \qw & \targ{} & \qw & \qw & \qw & \qw & \qw \\
                        \lstick{$\ket{0}_{u_4}$} & \gate{H} & \qw & \targ{} & \qw & \ctrl{-3} & \qw & \qw & \qw & \qw & \qw & \qw & \qw & \qw \\
                        \lstick{$\ket{0}_{u_5}$} & \gate{H} & \gate{H} & \ctrl{-1} & \qw & \qw & \qw & \qw & \qw & \ctrl{-3} & \qw & \qw & \qw & \qw \\
                        \lstick{$\ket{0}_{u_6}$} & \gate{H} & \gate{H} & \qw & \qw & \qw & \ctrl{-3} & \qw & \qw & \qw & \qw & \qw & \qw & \qw
                    \end{quantikz}
                    }
                    \begin{align*}
                        \begin{array}{c|ccccc}
                            \text{logical readout} & v_1 & v_2 & v_3 & u_1 & u_4 \\
                            \hline
                            X_L & X & - & - & X & X \\
                            Y_L & Y & Z & Z & X & X \\
                            Z_L & Z & Z & Z & - & -
                        \end{array}
                    \end{align*}
                    \caption{Logical Pauli readout on one patch for $|c|^2$.}
                    \label{fig:logical-coherence-circuit}
                \end{subfigure}
                \caption{Circuit implementations of the logical Bell measurement on two $d=3$ rotated surface code patches and the coherence measurement on one such patch.
                In panel (a), the wires $v_1,v_2,v_3,u_1,\ldots,u_6$ are the nine data qubits of each patch, and $\gamma=\{v_1,v_2,v_3\}$ is the logical $Z_L$ path shown in Fig.\,\ref{fig:two-surface-code-patches}.
                The ancilla $a^{(j)}$ performs three complete rounds of the stabilizer measurements $\mathcal{M}_{\rm{stab}}^{(j)}$ sequentially and is reset after every stabilizer readout.
                The transversal $\CNOT$ is followed by destructive physical $Z$ readout and classical decoding on the second patch.
                Trials with decoded outcome $Z_L^{(2)}=-1$ proceed to destructive physical $X$ readout and classical decoding on the first patch, whereas outcome $+1$ terminates the trial before this readout.
                In panel (b), the same single-patch preparation is followed by the postselected projection $\Pi_L$ and one terminal readout setting, $X_L$, $Y_L$, or $Z_L$, whose measured physical support is listed below the circuit.}
                \label{fig:logical-circuits}
            \end{figure*}

            Before performing the numerical simulations, let us evaluate the contribution of the dephasing channel \eqref{eq:dephasing-channel} on the logical path $\gamma$ from first principles.
            For a subset of $k$ ($0\leqq k\leqq d$) faulty qubits on $\gamma$, two terms from Eq.\,\eqref{eq:physical_rotation} pass the stabilizer projection onto the code subspace, namely,
            \begin{align}
                \begin{aligned}
                    \ket*{\phi_k}
                    &=
                    i^k\cos^{d-k}\!\alpha\,\sin^k\!\alpha\ket*{+_L}
                    \\
                    &\qquad+
                    i^{d-k}\cos^k\!\alpha\,\sin^{d-k}\!\alpha\ket*{-_L}.
                \end{aligned}
            \end{align}
            All subsets of the same size $k$ produce the same unnormalized logical state $\ket*{\phi_k}$.
            For $N$ attempts to prepare one resource state, the number of accepted trials with no path error is
            \begin{align}
                n_0
                =
                N(1-p_Z)^d
                \left(
                    \cos^{2d}\!\alpha+\sin^{2d}\!\alpha
                \right).
            \end{align}
            For $1\leqq k\leqq d$, the corresponding count for exactly $k$ errors on $\gamma$ is
            \begin{align}
                \begin{aligned}
                    n_k
                    &=
                    N\binom{d}{k}p_Z^k(1-p_Z)^{d-k}
                    \\
                    &\quad\times
                    \left(
                        \cos^{2(d-k)}\!\alpha\,\sin^{2k}\!\alpha
                        +\cos^{2k}\!\alpha\,\sin^{2(d-k)}\!\alpha
                    \right).
                \end{aligned}
            \end{align}

            The normalized single-copy state conditioned on acceptance is
            \begin{align}
                \rho
                =
                \frac{
                    \displaystyle\sum_{k=0}^{d}
                    \binom{d}{k}p_Z^k(1-p_Z)^{d-k}
                    \ket*{\phi_k}\!\bra*{\phi_k}
                }{
                    \displaystyle\sum_{k=0}^{d}
                    \binom{d}{k}p_Z^k(1-p_Z)^{d-k}
                    \braket*{\phi_k}{\phi_k}
                }.
            \end{align}
            Its off-diagonal element is proportional to
            \begin{align}
                \begin{aligned}
                    &\sum_{k=0}^d
                    \binom{d}{k}
                    p_Z^k(1-p_Z)^{d-k}
                    \langle+_L|\phi_k\rangle
                    \langle\phi_k|-_L\rangle
                    \\
                    &\qquad=
                    (-1)^d i^d
                    (\cos\alpha\sin\alpha)^d
                    \sum_{k=0}^d
                    \binom{d}{k}
                    (-p_Z)^k(1-p_Z)^{d-k}
                    \\
                    &\qquad=
                    (-1)^d i^d
                    (\cos\alpha\sin\alpha)^d
                    (1-2p_Z)^d.
                \end{aligned}
            \end{align}
            For two independently prepared copies, the Bell part outcome is $b=(1-\operatorname{tr}\rho^2)/2$, while the post-selection on $Z_2=-1$ succeeds with probability $1/2$.
            The exact conditional probability is therefore
            \begin{align}
                \begin{aligned}
                    &\operatorname{Pr}(X_1=-1\mid Z_2=-1)
                    \\
                    &=
                    2
                    \frac{
                        \begin{aligned}
                            &\Big[
                                \big((1-p_Z)\cos^2\!\alpha+p_Z\sin^2\!\alpha\big)
                                \\
                                &\qquad\times
                                \big((1-p_Z)\sin^2\!\alpha+p_Z\cos^2\!\alpha\big)
                            \Big]^d
                            \\
                            &\quad-
                            (\cos^2\!\alpha\sin^2\!\alpha)^d
                            (1-2p_Z)^{2d}
                        \end{aligned}
                    }{
                        \begin{aligned}
                            \Big[
                                &\big((1-p_Z)\cos^2\!\alpha+p_Z\sin^2\!\alpha\big)^d
                                \\
                                &+
                                \big((1-p_Z)\sin^2\!\alpha+p_Z\cos^2\!\alpha\big)^d
                            \Big]^2
                        \end{aligned}
                    }.
                \end{aligned}
                \label{eq:explicit_pr}
            \end{align}
            At first order, only one path error on either copy contributes, and Eq.\,\eqref{eq:explicit_pr} reduces to
            \begin{align}
                \begin{aligned}
                    \operatorname{Pr}(X_1=-1\mid Z_2=-1)
                    &=
                    \frac{
                        2d p_Z
                        (\cos^2\!\alpha\sin^2\!\alpha)^{d-1}
                    }{
                        \left(
                            \cos^{2d}\!\alpha+\sin^{2d}\!\alpha
                        \right)^2
                    }
                    \\
                    &\quad+
                    O(p_Z^2).
                \end{aligned}
                \label{eq:first_order_limit}
            \end{align}
            Since this probability is proportional to the physical error rate, it retains the $O(1/\epsilon)$ sample complexity found at the physical level in Subsec.\,\ref{subsec:physical_level}.

            Below are several observations of the simulation in Figs.\,\ref{fig:bell-benchmark-on-qec},\,\ref{fig:logical-realistic-shot-benchmark}.
            Figures\,\ref{fig:bell-benchmark-on-qec} and \ref{fig:logical-realistic-shot-benchmark} use $d=3$, $\theta/\pi\in\{0.01,0.05,0.1\}$, and $p_{\rm{prep}}=0.005$, with an independent $Z$ fault of probability $p_Z=p_{\rm{prep}}/3$ after each physical rotation.
            Rows (I)--(III) use $p_{\rm{op}}=p_{\rm{proj}}=0.001$, $0.003$, and $0.005$, respectively.
            Here $p_{\rm{op}}$ is the two-qubit depolarizing probability after each physical CNOT in the transversal logical $\CNOT$ of the Bell part, whereas $p_{\rm{proj}}$ applies after each CNOT in the stabilizer measurements of every branch.
            The larger simulation (Fig.\,\ref{fig:bell-benchmark-on-qec}) uses $N_B=400{,}000$ raw Bell attempts and $N_P=100{,}000$ raw attempts in each Pauli basis for the coherence part, while the smaller simulation (Fig.\,\ref{fig:logical-realistic-shot-benchmark}) uses $N_B=40{,}000$ and $N_P=10{,}000$.
            These counts are taken before the post-selection by the stabilizer measurement.
            Only attempts that pass the post-selection enter the estimators.
            In the Bell part, the subsequent physical $X$ measurement is performed only on the smaller subset with decoded $Z_L^{(2)}=-1$, while the denominator of the singlet probability estimator $\widehat b$ still contains every pair accepted by the syndrome post-selection.
            The BCFE error bars are computed using Eq.\,\eqref{eq:bcfe-standard-error}, with $N_B$ and $N_P$ set to the realized counts after syndrome post-selection and the error calculation conditioned on these counts.
            Since a Bell attempt consumes two resource states and each Pauli attempt consumes one, the total raw resource state budget is
            \begin{align}
                N=2N_B+3N_P.
            \end{align}
            The two figures therefore use $N=1{,}100{,}000$ and $110{,}000$, respectively, and DFE is assigned the same value of $N$ in each comparison.

            \begin{figure*}[t]
                \centering

                \begin{subfigure}{0.32\textwidth}
                    \centering
                    \includegraphics[width=\linewidth,trim=0 0 0 25pt,clip]{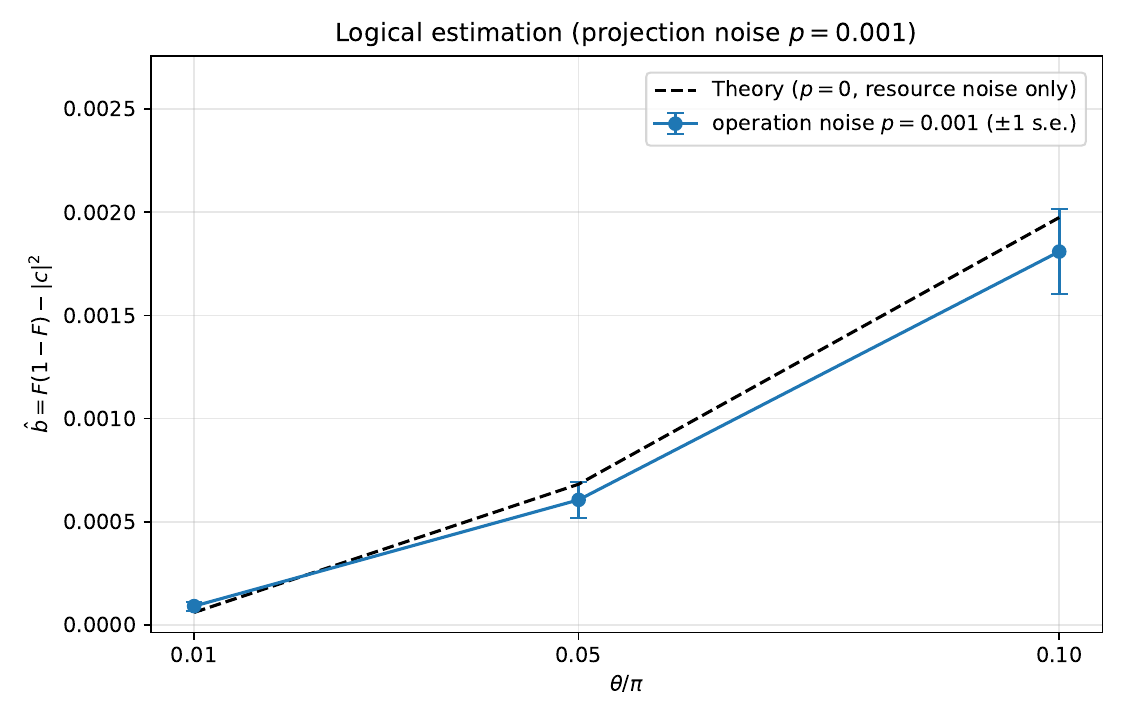}
                    \caption*{(I-a) Bell term}
                \end{subfigure}
                \begin{subfigure}{0.32\textwidth}
                    \centering
                    \includegraphics[width=\linewidth,trim=0 0 0 25pt,clip]{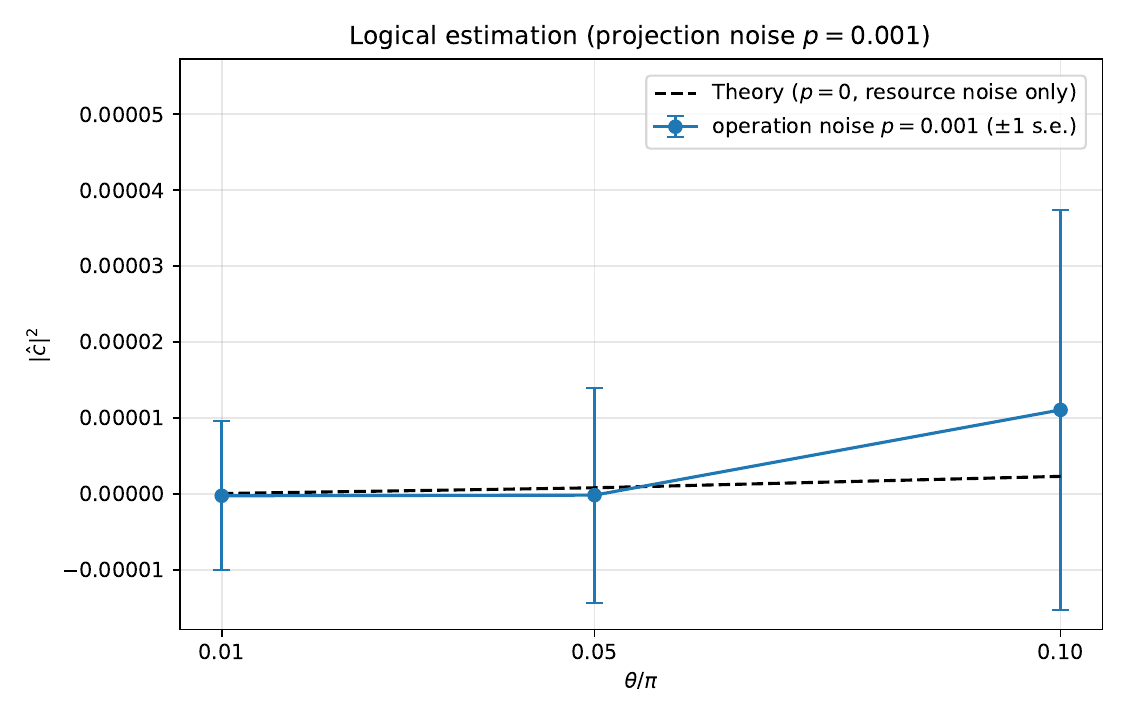}
                    \caption*{(I-b) Coherence term}
                \end{subfigure}
                \begin{subfigure}{0.32\textwidth}
                    \centering
                    \includegraphics[width=\linewidth,trim=0 0 0 25pt,clip]{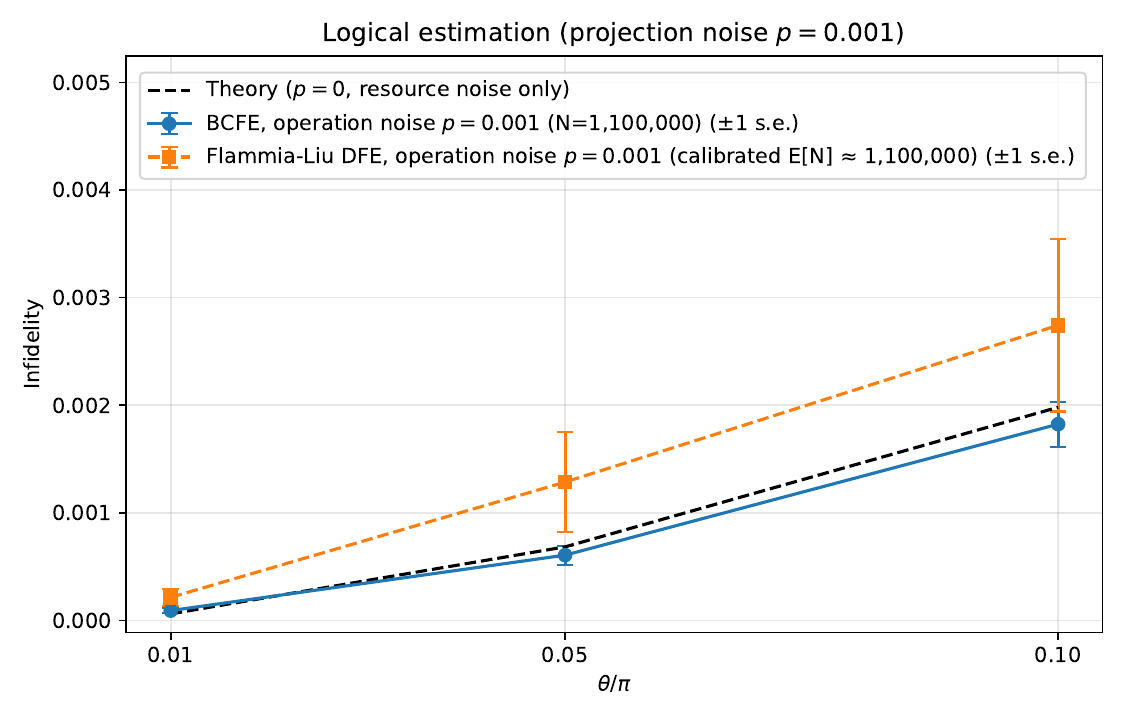}
                    \caption*{(I-c) Infidelity}
                \end{subfigure}

                \vspace{0.4em}

                \begin{subfigure}{0.32\textwidth}
                    \centering
                    \includegraphics[width=\linewidth,trim=0 0 0 25pt,clip]{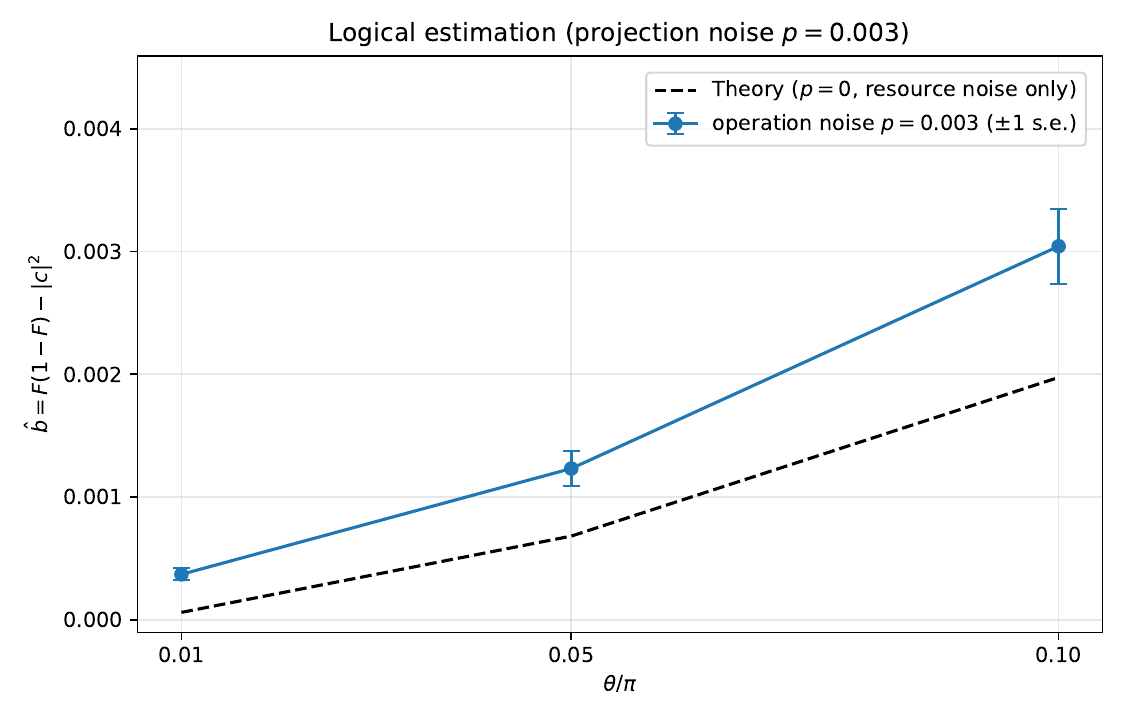}
                    \caption*{(II-a) Bell term}
                \end{subfigure}
                \begin{subfigure}{0.32\textwidth}
                    \centering
                    \includegraphics[width=\linewidth,trim=0 0 0 25pt,clip]{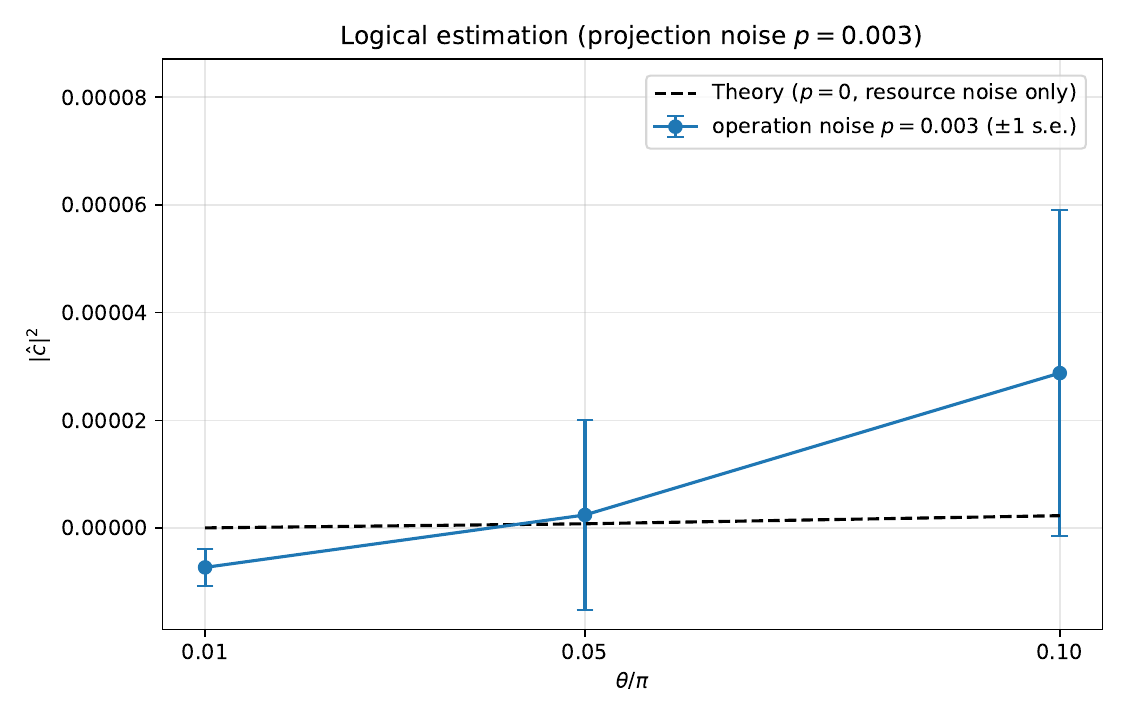}
                    \caption*{(II-b) Coherence term}
                \end{subfigure}
                \begin{subfigure}{0.32\textwidth}
                    \centering
                    \includegraphics[width=\linewidth,trim=0 0 0 25pt,clip]{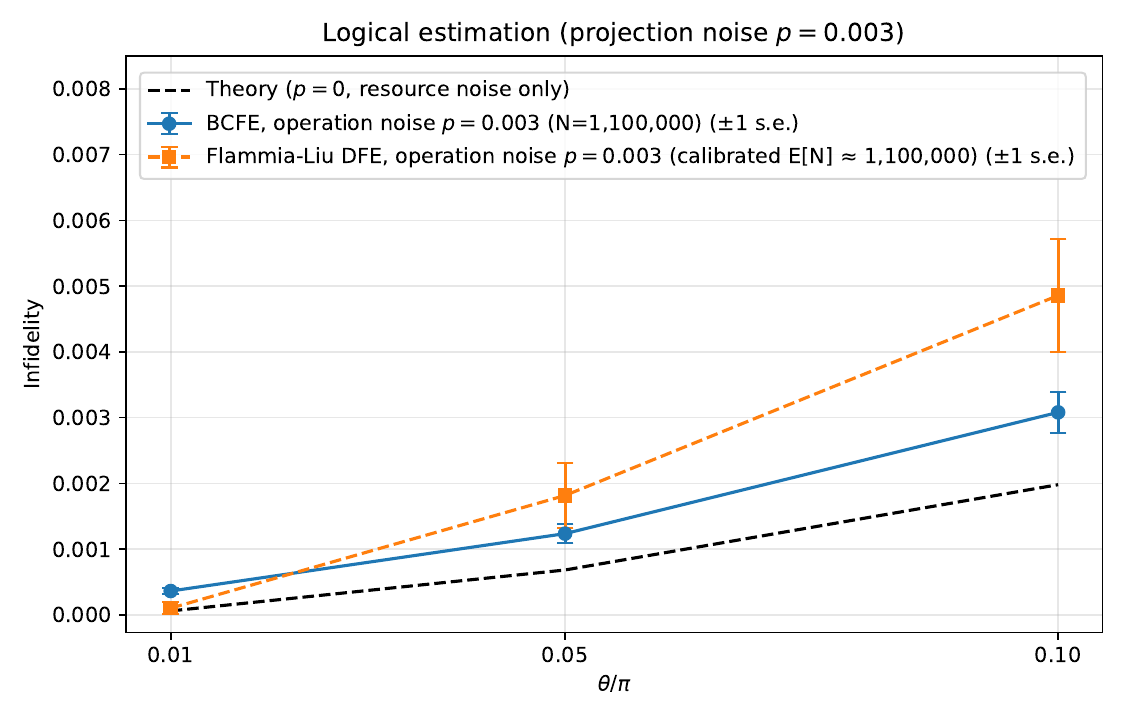}
                    \caption*{(II-c) Infidelity}
                \end{subfigure}

                \vspace{0.4em}

                \begin{subfigure}{0.32\textwidth}
                    \centering
                    \includegraphics[width=\linewidth,trim=0 0 0 25pt,clip]{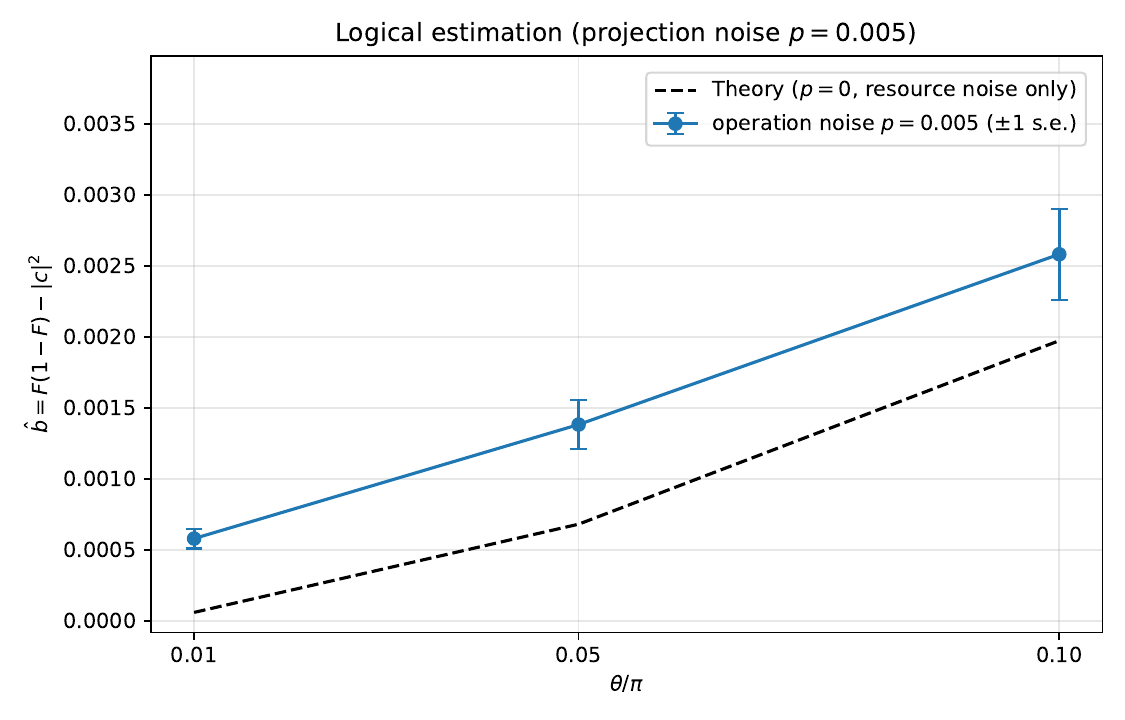}
                    \caption*{(III-a) Bell term}
                \end{subfigure}
                \begin{subfigure}{0.32\textwidth}
                    \centering
                    \includegraphics[width=\linewidth,trim=0 0 0 25pt,clip]{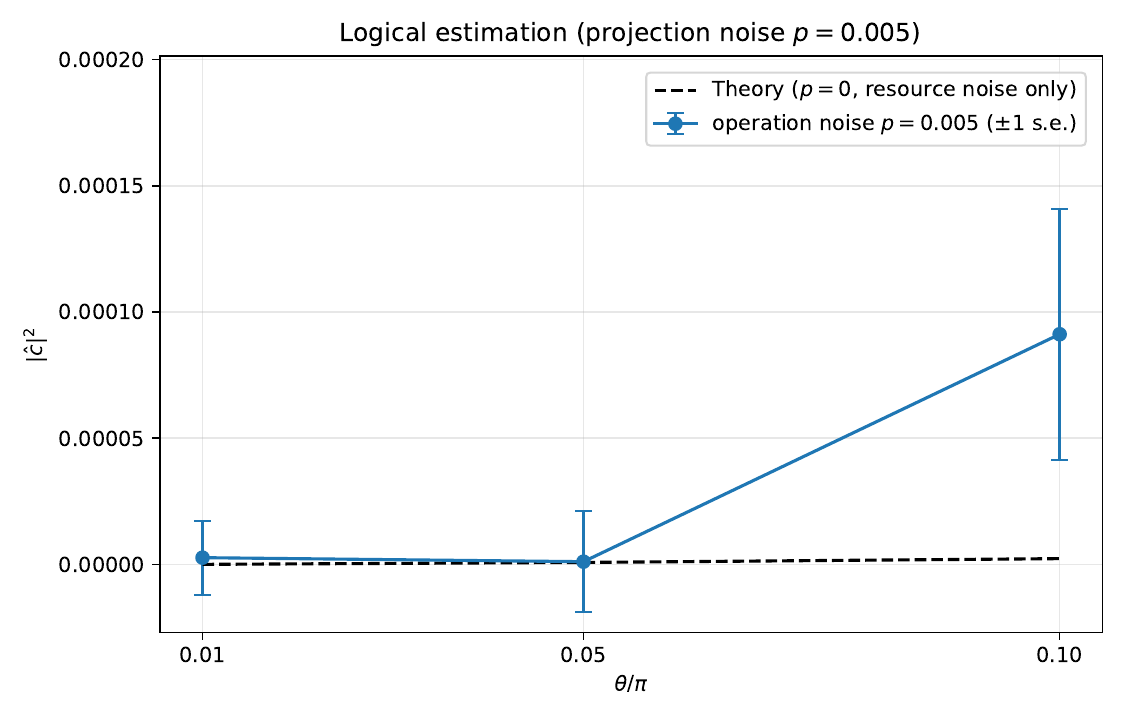}
                    \caption*{(III-b) Coherence term}
                \end{subfigure}
                \begin{subfigure}{0.32\textwidth}
                    \centering
                    \includegraphics[width=\linewidth,trim=0 0 0 25pt,clip]{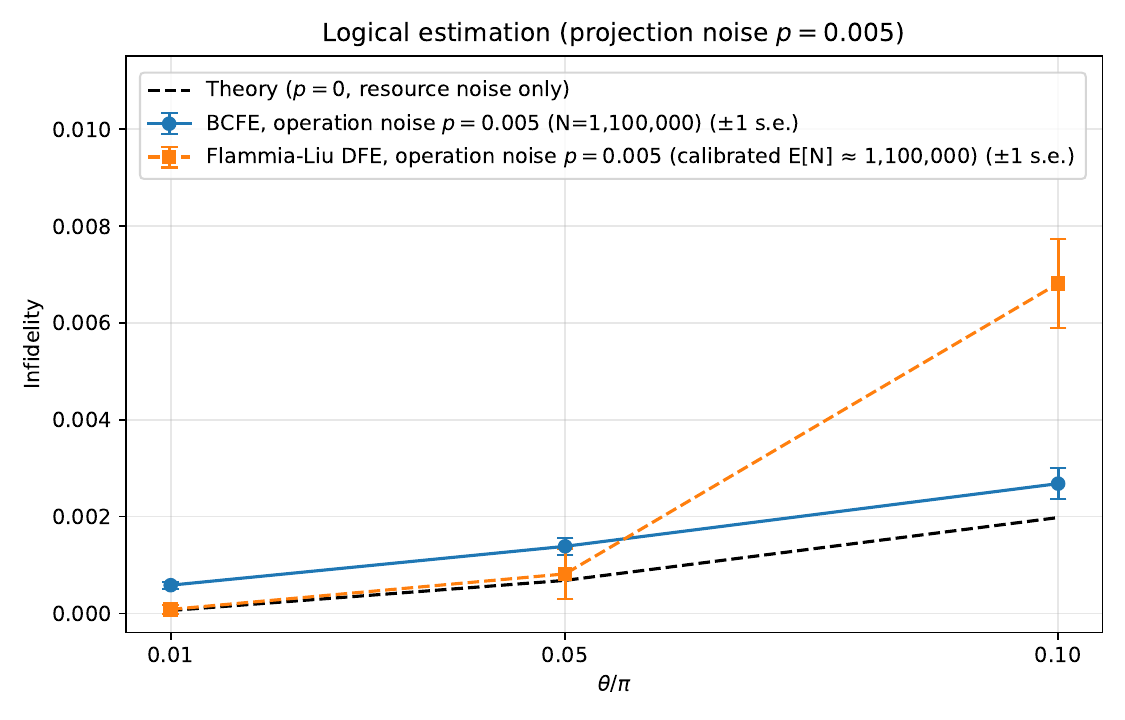}
                    \caption*{(III-c) Infidelity}
                \end{subfigure}

                \caption{Logical level fidelity estimation with $1{,}100{,}000$ raw resource states, $d=3$ surface code patches, and preparation parameter $p_{\rm{prep}}=0.005$.
                Each rotated physical qubit is independently subject to a $Z$ fault with probability $p_Z=p_{\rm{prep}}/3$.
                Rows (I), (II), and (III) use $p_{\rm{op}}=p_{\rm{proj}}=0.001$, $0.003$, and $0.005$, respectively.
                Here $p_{\rm{op}}$ is the two-qubit depolarizing probability after each physical CNOT in the transversal logical $\CNOT$ of the Bell part, while $p_{\rm{proj}}$ is the corresponding probability after each CNOT in the three complete stabilizer measurement rounds used for the syndrome post-selection.
                The reference curve includes the resource preparation faults specified by $p_{\rm{prep}}$ and sets $p_{\rm{op}}=p_{\rm{proj}}=0$.
                Columns (a), (b), and (c) display the Bell term, the coherence term, and the resulting infidelity estimate.
                The Bell part applies a transversal CNOT, destructively measures and decodes $Z_L^{(2)}$, and performs the destructive $X_L^{(1)}$ measurement only when the decoded $Z_L^{(2)}$ outcome is $-1$.
                Each point uses $N_B=400{,}000$ raw Bell attempts and $N_P=100{,}000$ raw attempts in each Pauli basis.
                These counts precede the syndrome post-selection.
                Flammia--Liu DFE in column (c) uses the two-stage estimator with its expected raw consumption calibrated to $2N_B+3N_P=1{,}100{,}000$ using the acceptance rate observed in the BCFE single-copy experiments.
                Physical readout errors are absent in this comparison.
                Error bars indicate one standard error.
                BCFE error bars use the delta-method approximation in Eq.\,\eqref{eq:bcfe-standard-error}.
                Below the error-correction threshold, increasing the code distance is expected to suppress the contribution of residual logical errors in the measurement circuits to the deviation from the reference infidelity.}
                \label{fig:bell-benchmark-on-qec}
            \end{figure*}

            \begin{figure*}[t]
                \centering

                \begin{subfigure}{0.32\textwidth}
                    \centering
                    \includegraphics[width=\linewidth,trim=0 0 0 25pt,clip]{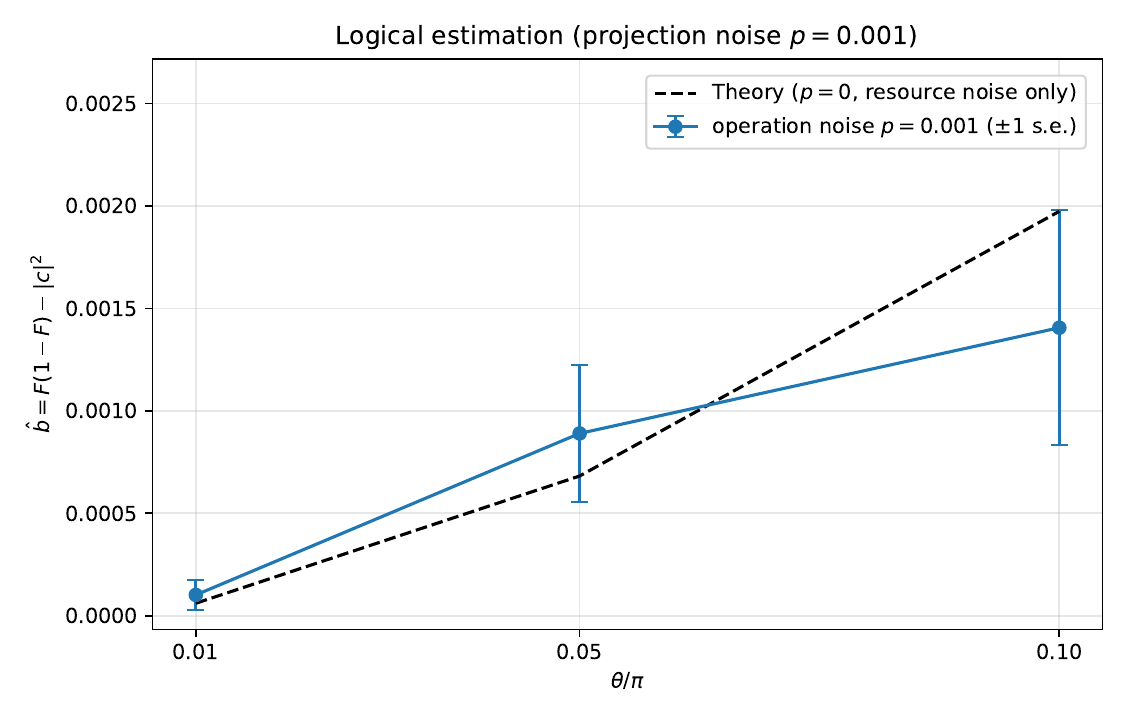}
                    \caption*{(I-a) Bell term}
                \end{subfigure}
                \begin{subfigure}{0.32\textwidth}
                    \centering
                    \includegraphics[width=\linewidth,trim=0 0 0 25pt,clip]{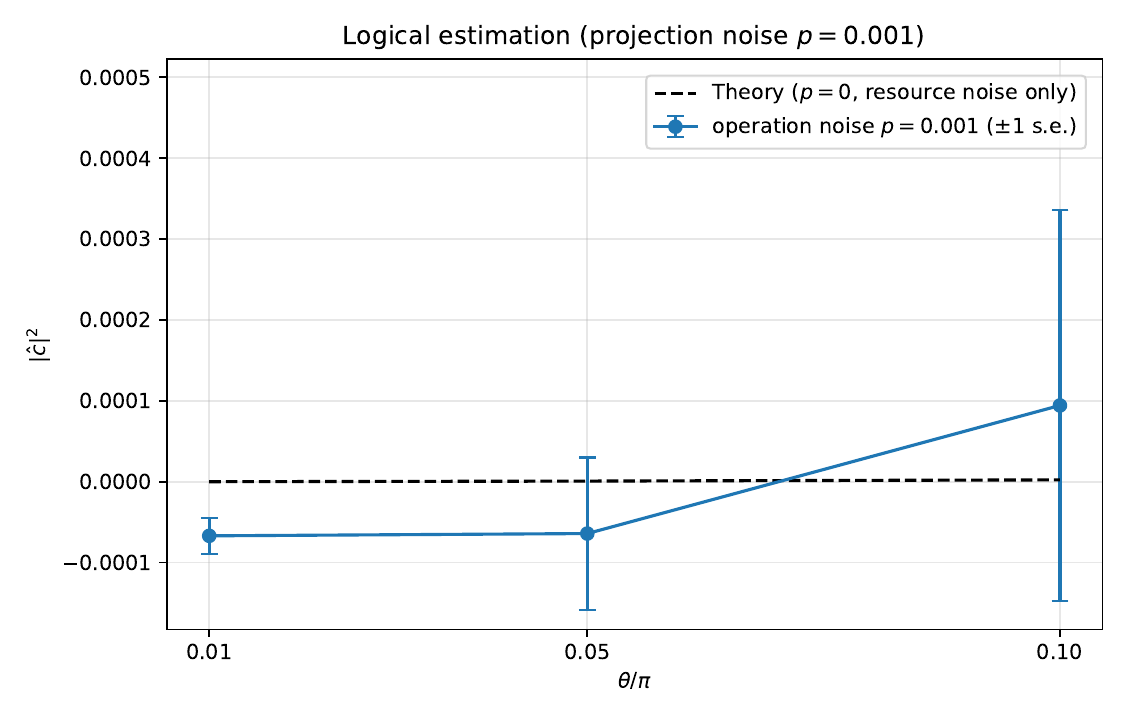}
                    \caption*{(I-b) Coherence term}
                \end{subfigure}
                \begin{subfigure}{0.32\textwidth}
                    \centering
                    \includegraphics[width=\linewidth,trim=0 0 0 25pt,clip]{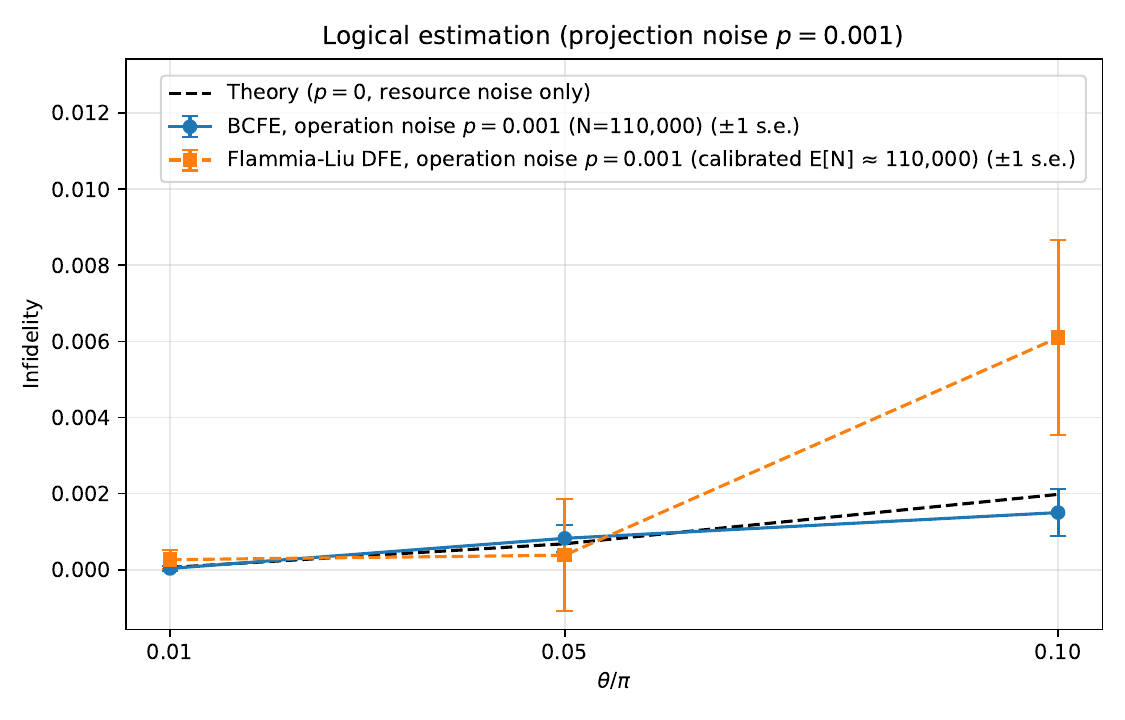}
                    \caption*{(I-c) Infidelity}
                \end{subfigure}

                \vspace{0.4em}

                \begin{subfigure}{0.32\textwidth}
                    \centering
                    \includegraphics[width=\linewidth,trim=0 0 0 25pt,clip]{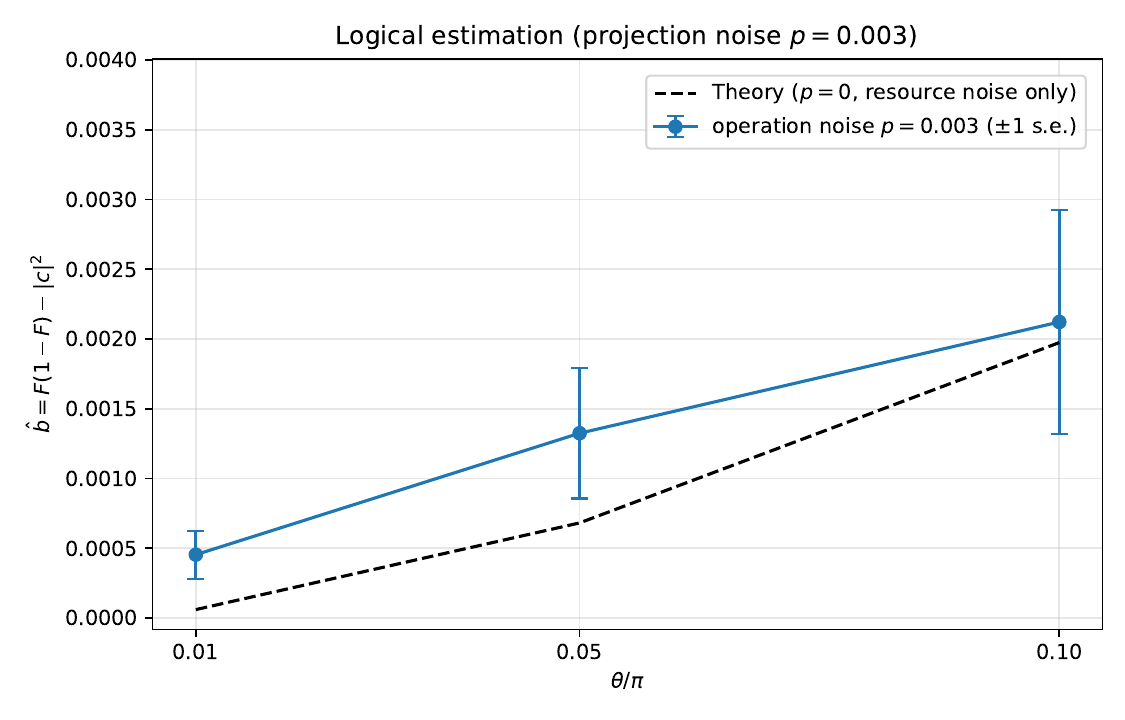}
                    \caption*{(II-a) Bell term}
                \end{subfigure}
                \begin{subfigure}{0.32\textwidth}
                    \centering
                    \includegraphics[width=\linewidth,trim=0 0 0 25pt,clip]{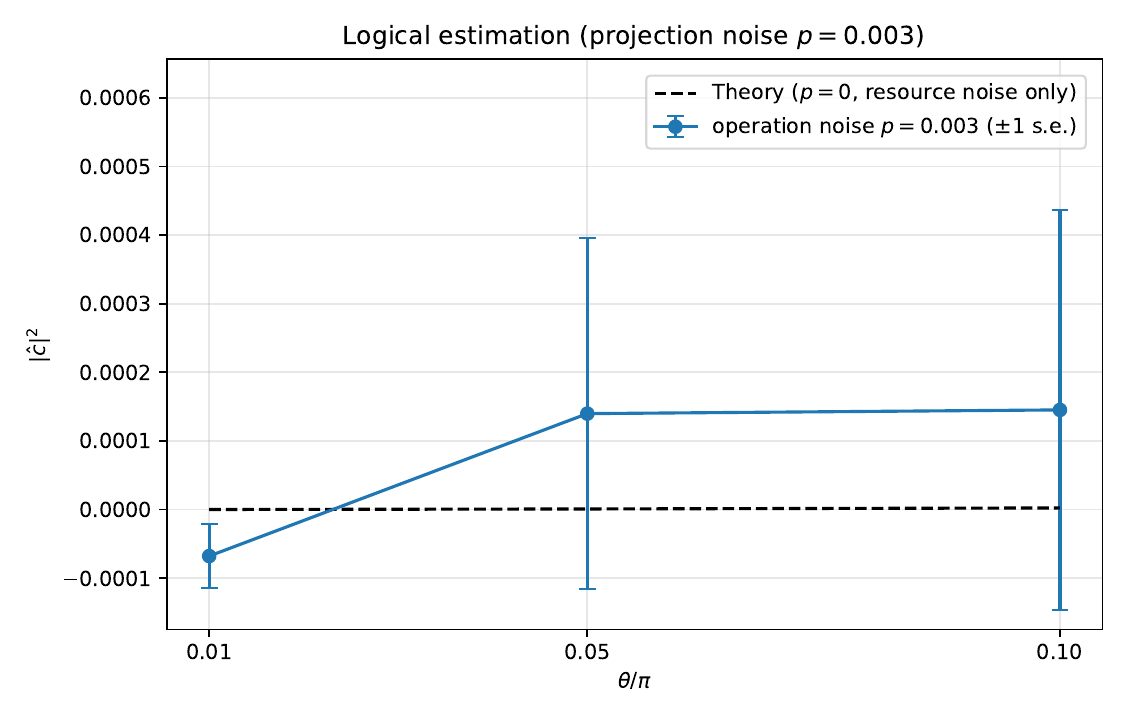}
                    \caption*{(II-b) Coherence term}
                \end{subfigure}
                \begin{subfigure}{0.32\textwidth}
                    \centering
                    \includegraphics[width=\linewidth,trim=0 0 0 25pt,clip]{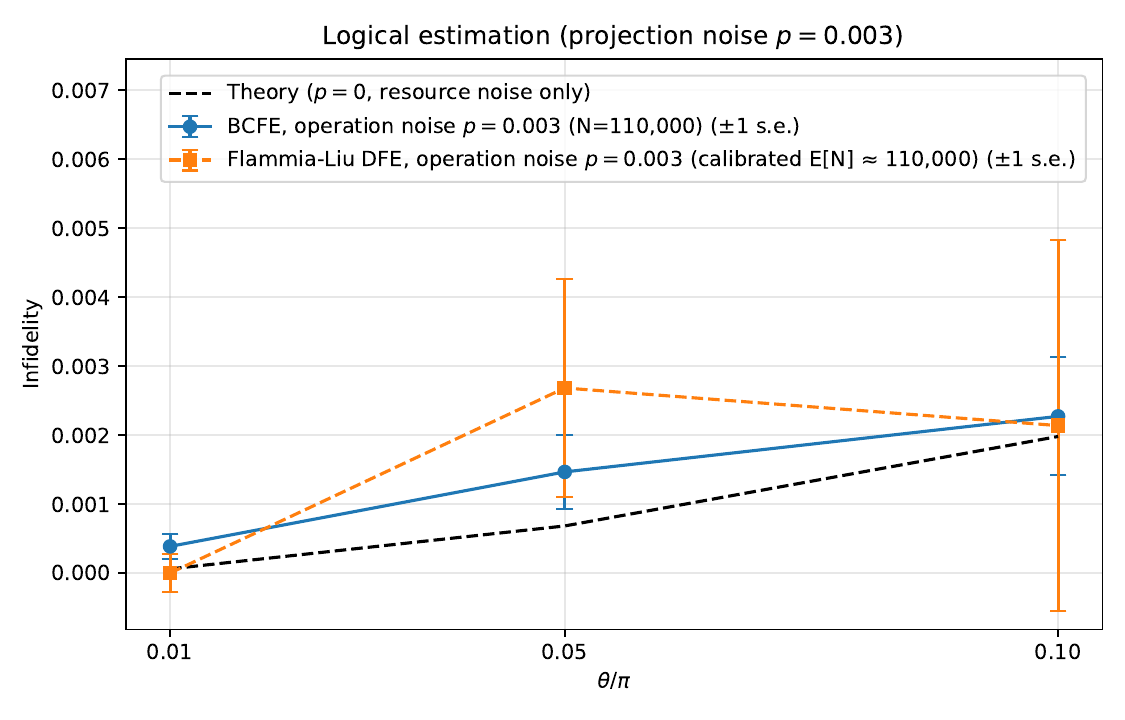}
                    \caption*{(II-c) Infidelity}
                \end{subfigure}

                \vspace{0.4em}

                \begin{subfigure}{0.32\textwidth}
                    \centering
                    \includegraphics[width=\linewidth,trim=0 0 0 25pt,clip]{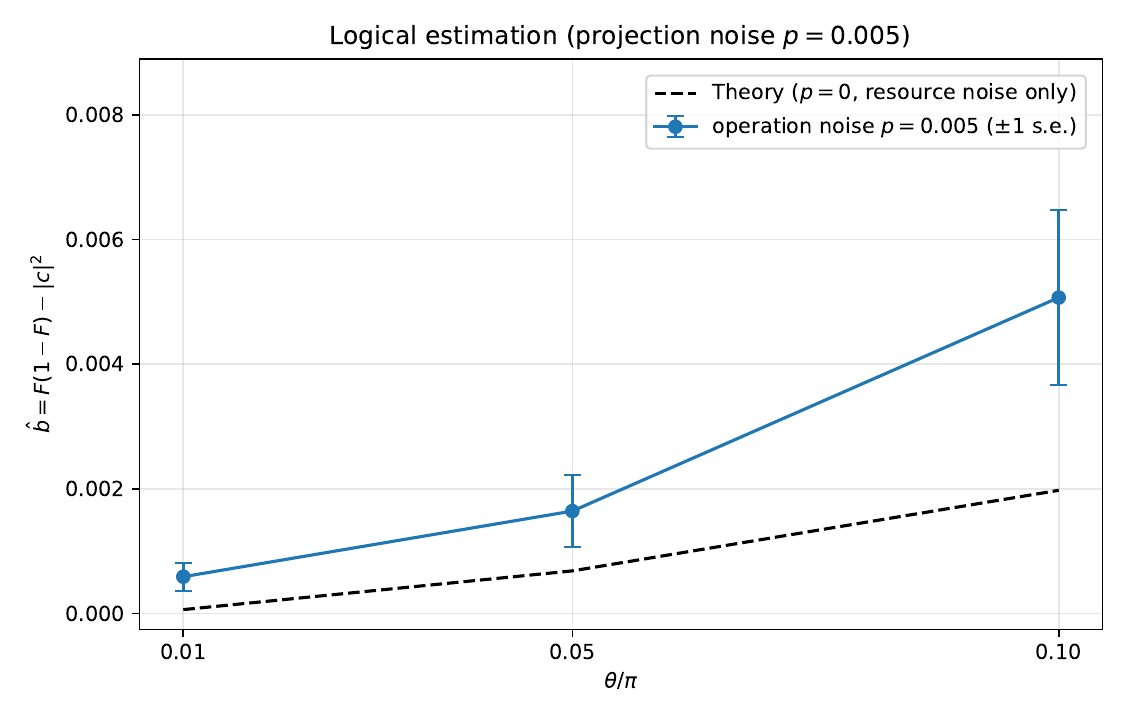}
                    \caption*{(III-a) Bell term}
                \end{subfigure}
                \begin{subfigure}{0.32\textwidth}
                    \centering
                    \includegraphics[width=\linewidth,trim=0 0 0 25pt,clip]{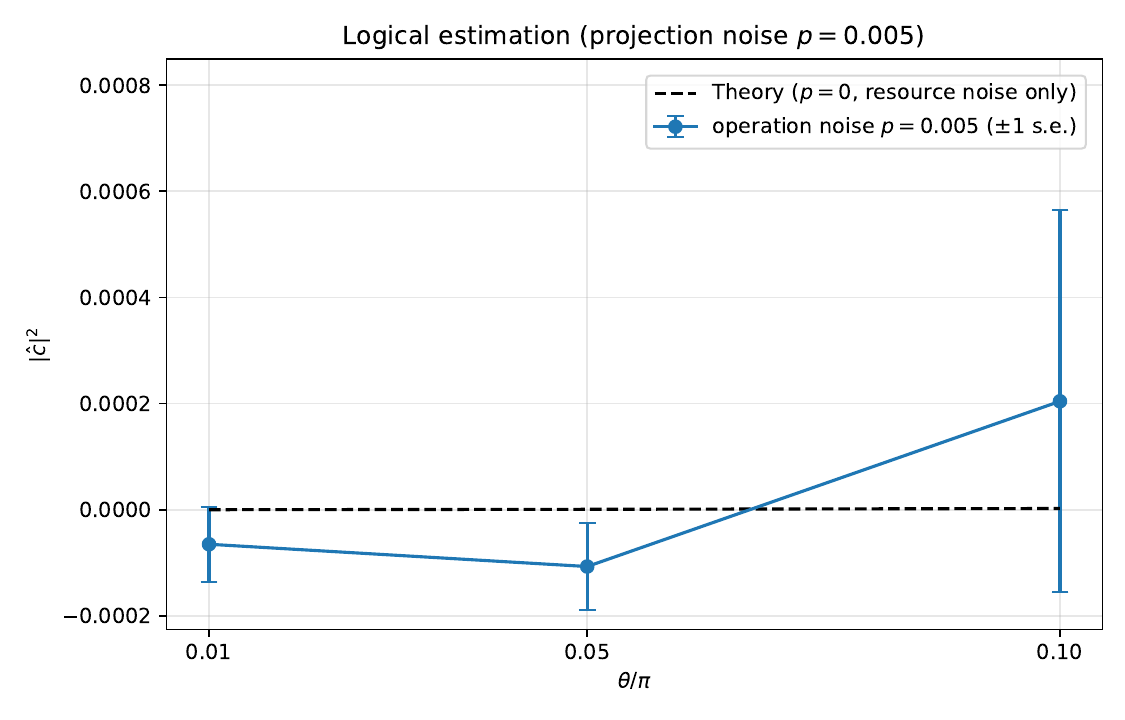}
                    \caption*{(III-b) Coherence term}
                \end{subfigure}
                \begin{subfigure}{0.32\textwidth}
                    \centering
                    \includegraphics[width=\linewidth,trim=0 0 0 25pt,clip]{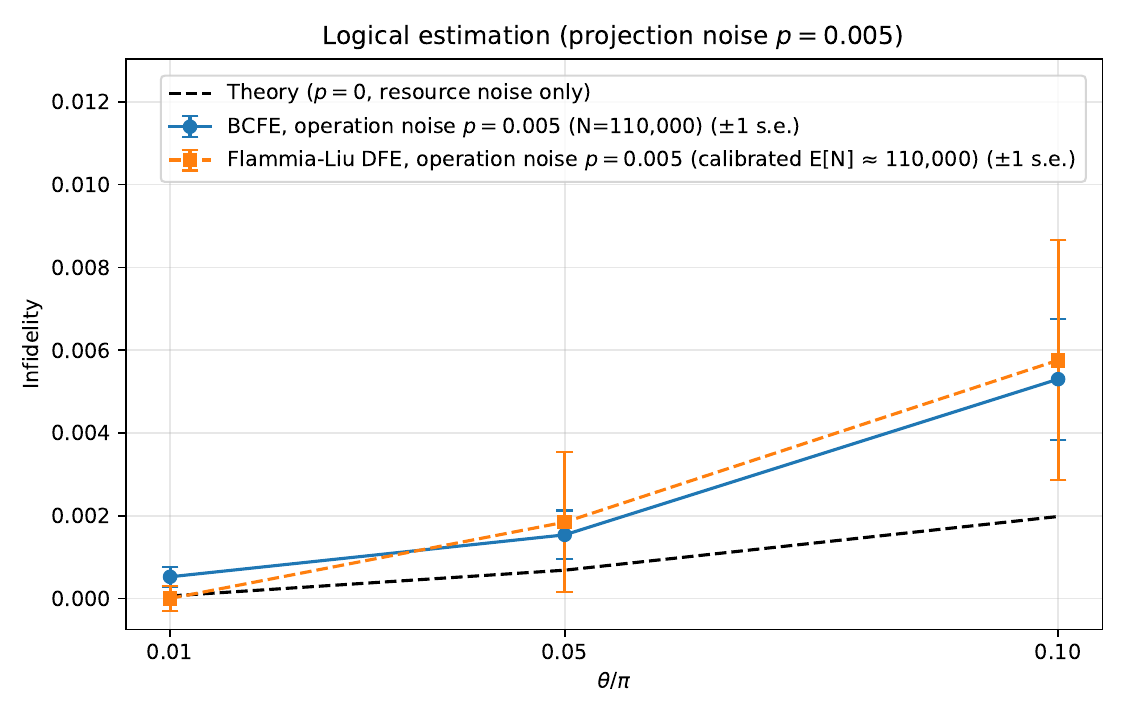}
                    \caption*{(III-c) Infidelity}
                \end{subfigure}

                \caption{Logical level fidelity estimation with a raw budget of $110{,}000$ resource states, chosen to represent an experimentally realistic data set.
                The circuit, noise rates, and arrangement of panels are the same as in Fig.\,\ref{fig:bell-benchmark-on-qec}.
                Each point uses $N_B=40{,}000$ raw Bell attempts and $N_P=10{,}000$ raw attempts in each Pauli basis, for a total of $2N_B+3N_P=110{,}000$ resource states.
                Flammia--Liu DFE uses the two-stage estimator with its expected raw consumption calibrated to this budget using the acceptance rate observed in the BCFE single-copy experiments.
                BCFE error bars use the delta-method approximation in Eq.\,\eqref{eq:bcfe-standard-error}.
                Below the error-correction threshold, increasing the code distance is expected to suppress the contribution of residual logical errors in the measurement circuits to the deviation from the reference infidelity.}
                \label{fig:logical-realistic-shot-benchmark}
            \end{figure*}

            Figure\,\ref{fig:bell-benchmark-on-qec} shows that the estimated infidelity retains its increase with $\theta$ under all three paired error settings.
            At $p_{\rm{op}}=p_{\rm{proj}}=0.001$, the proposed estimates closely follow the reference curve.
            At the two larger noise rates, the estimates lie above the reference curve, with a maximum displacement of approximately $1.10\times10^{-3}$ at $p_{\rm{op}}=p_{\rm{proj}}=0.003$ and $\theta/\pi=0.1$ in the simulation with $1{,}100{,}000$ resource states.
            However, this discrepancy need not represent an intrinsic limitation of BCFE.
            Indeed, for $d=3$, the logical error rate due to noise in the Bell measurement is $O(p_{\rm{op}}^2)$, with $p_{\rm{op}}^2$ ranging from $10^{-6}$ to $2.5\times10^{-5}$ under the simulated noise rates $p_{\rm{op}}\in\{0.001,0.003,0.005\}$.
            This contribution can be comparable to the reference infidelity of approximately $6\times10^{-5}$ to $2\times10^{-3}$, corresponding to half the conditional Bell probability in eq.\,\eqref{eq:first_order_limit} to leading order.
            As the code distance $d$ increases below the error-correction threshold, the leading contribution to the logical error rate is suppressed as $O(p_{\rm{op}}^{(d+1)/2})$, so this source of discrepancy is expected to asymptote to zero.
            The estimator continues to resolve infidelities of order $10^{-3}$.
            The coherence contribution stays near zero on this scale, although its residual discrepancy may partly reflect the absence of syndrome decoding for $Y_L$.

            Figure\,\ref{fig:logical-realistic-shot-benchmark} repeats the same simulation with one tenth as many resource states, $N_B=40{,}000$ and $N_P=10{,}000$.
            The statistical fluctuations increase, as expected, but the Bell part and the reconstructed infidelity retain their overall increase with $\theta$, and every reconstructed value remains below $6\times10^{-3}$.
            At both resource budgets, the estimated standard errors shown for our protocol are also smaller than those of DFE.
            Thus the behavior in Fig.\,\ref{fig:bell-benchmark-on-qec} is not merely a restatement of the ideal measurement result at the physical level.
            It is obtained with the transversal $\CNOT$, the noisy syndrome post-selection, and decoding of the destructive logical measurements.
            These results demonstrate that the protocol can estimate the resource fidelity under the stated physical noise model when its Bell measurement is implemented fault-tolerantly.

    \section{Conclusion and discussion} \label{sec:conclusion}
            We have developed the Bell-coherence fidelity estimation (BCFE) protocol that can be applied to arbitrary quantum resource states.
            It achieves the optimal $1/\epsilon$ sample complexity $O\!\left(r^{-2}\epsilon^{-1}\log(1/\delta)\right)$ for relative precision $r\epsilon$ with only Clifford gates and Pauli measurements.
            The key observation is that the infidelity can be reconstructed from two experimentally accessible pieces of information.
            A target independent Bell measurement estimates the determinant of the noisy state, and ordinary Pauli measurements estimate the coherence in the target basis.
            We also proved that the dependence on $r$ and $\epsilon$ is optimal up to constant factors, even when collective measurements are allowed.

            We applied BCFE to analog rotation resource states in the STAR architecture and evaluated it for both physical qubits and logical qubits in $d=3$ rotated surface codes.
            In the simulations using surface codes, the observed standard errors indicate that BCFE requires up to approximately $27$ times fewer resource states than DFE for the same standard error in the infidelity estimate.
            This estimate follows from standard errors $1.3$--$5.2$ times smaller at comparable expected resource consumption.
            At a physical CNOT error rate of $0.001$ in both the stabilizer measurements and the logical Bell measurement, the estimates closely follow the reference infidelities computed with ideal projection and measurement circuits.
            The transversal CNOT and decoded readout correct the effects of single physical gate errors, although the residual deviations at higher noise rates show that the implemented protection does not remove all effects of circuit noise.

            Finally, we briefly comment on possible future directions.
            One possible direction is to improve the numerical constants and dimension dependence of the multi-qubit extension in Appendix \ref{append:multi_qubit_generalization}.
            At fixed dimension, the dependence on the estimation parameters is optimal.
            However, the necessary dependence on the number of qubits under Clifford and Pauli measurement restrictions remains unresolved.
            Global Clifford measurements improve the dimension dependence of the sample complexity bound, but their circuit depth and classical processing can be costly.
            A sharper analysis of the squared-distance estimator, together with measurement schemes adapted to the structure of the target state, could reduce the number of samples without requiring such demanding circuits.

            A second direction involves the assumption that resource states are prepared independently in the same state $\rho$.
            The Bell measurement probes $\rho^{\otimes2}$ under this assumption, whereas correlations between preparations or drift during data acquisition can change both its statistics and its relation to the Pauli estimates used to reconstruct the fidelity.
            For such sources, the quantity to be estimated must first be specified, for example the mean fidelity over a sequence of preparations.
            It remains to determine under what assumptions on the correlations and drift BCFE can estimate this quantity with a controlled error, and whether the $O(1/\epsilon)$ sample complexity can be retained.

            For applications to the STAR architecture, a further direction is to bound the diamond-norm error of the implemented analog rotation in terms of the resource state infidelities measured by BCFE.
            In particular, the resource states are consumed through gate teleportation and the repeat-until-success (RUS) protocol reviewed in Appendix \ref{append:star}.
            Such bounds should account for the sequence of resource states consumed throughout the RUS protocol.
            We could then determine the resource state fidelities and estimation accuracies needed to keep the accumulated error within a prescribed tolerance in an algorithm using many analog rotations.

    \acknowledgments
        Research at Perimeter Institute is supported in part by the Government of Canada through the Department of Innovation, Science and Economic Development and by the Province of Ontario through the Ministry of Colleges and Universities.
        KF is supported by MEXT Quantum Leap Flagship Program (MEXT Q-LEAP) Grant No.\,JPMXS0120319794, JST COI-NEXT Grant No.\,JPMJPF2014, JST Moonshot R\&D Grant No.\,JPMJMS256E, and JST CREST JPMJCR24I3. \\

    \noindent\paragraph*{Note added.}
        During the finalization of this manuscript, we became aware of Ref.\,\cite{Daguerre:2025logical}, whose experiment part employs the same Bell and single-copy Pauli measurements to bound the fidelity and states $F\ge 1-O(1/N_{\rm shots})$ scaling.
        We establish finite-sample guarantees for infidelity estimation, including the statistical error of the single-copy contribution, and also formulate a general protocol for multi-qubit target states.

    \appendix

    \section{Technical inequalities} \label{append:technical-inequalities}
        In this appendix, we collect the four probabilistic inequalities used in the sample complexity proofs.
        The first three control the finite sample fluctuations of the estimators, while the fourth converts relative entropy into a lower bound on the error probability of binary hypothesis testing.
        Only a variance bound enters Chebyshev's inequality, whose tail decays polynomially in the deviation rather than exponentially.
        Bernstein's inequality retains the total variance in the exponential bound, which becomes tighter as that variance decreases.
        Hoeffding's inequality instead uses only the range of each random variable, making it suitable when no comparably sharp variance bound is available.

        \begin{thm}[Chebyshev's inequality] \label{thm:chebyshev-inequality}
            Let $Y$ be a real random variable with finite variance.
            For every $t>0$,
            \begin{align}
                \Pr\!\left[
                    \left|
                        Y-\mathbb{E}[Y]
                    \right|
                    \geqq t
                \right]
                \leqq
                \frac{\operatorname{Var}(Y)}{t^2}.
                \label{eq:technical-chebyshev}
            \end{align}
            In particular, the mean $\overline Y_N:=N^{-1}\sum_{j=1}^{N}Y_j$ of $N$ independent copies of $Y$ obeys
            \begin{align}
                \Pr\!\left[
                    \left|
                        \overline Y_N-\mathbb{E}[Y]
                    \right|
                    \geqq t
                \right]
                \leqq
                \frac{\operatorname{Var}(Y)}{Nt^2}.
                \label{eq:technical-chebyshev-average}
            \end{align}
        \end{thm}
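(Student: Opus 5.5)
Chebyshev's inequality (Theorem~\ref{thm:chebyshev-inequality}) follows from Markov's inequality applied to the squared deviation, and the averaged form then follows from additivity of variance for independent summands.

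First, I establish Markov's inequality for a nonnegative random variable $Z$ and $a>0$. The pointwise bound $a\,\mathbf{1}[Z\geqq a]\leqq Z$ holds. Taking expectations gives $\Pr[Z\geqq a]\leqq\mathbb{E}[Z]/a$. No integrability issue arises, because finite variance implies that $\mathbb{E}[Y]$ exists and that $Z$ has finite mean.

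Second, I set $Z:=(Y-\mathbb{E}[Y])^2$ and $a:=t^2$. Since $t>0$ and both sides are nonnegative, the events $\{|Y-\mathbb{E}[Y]|\geqq t\}$ and $\{Z\geqq t^2\}$ coincide. Markov's inequality then yields Eq.~\eqref{eq:technical-chebyshev} directly, because $\mathbb{E}[Z]=\operatorname{Var}(Y)$.

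Third, for the averaged statement I apply Eq.~\eqref{eq:technical-chebyshev} to $\overline Y_N$. Linearity gives $\mathbb{E}[\overline Y_N]=\mathbb{E}[Y]$. For the variance, I expand
\begin{align}
    \operatorname{Var}(\overline Y_N)
    =
    \frac{1}{N^2}\sum_{j,k}\operatorname{Cov}(Y_j,Y_k).
\end{align}
Independence makes the off-diagonal covariances vanish, so $\operatorname{Var}(\overline Y_N)=N^{-2}\cdot N\operatorname{Var}(Y)=\operatorname{Var}(Y)/N$. Substituting this into Eq.~\eqref{eq:technical-chebyshev} gives Eq.~\eqref{eq:technical-chebyshev-average}.

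There is no real obstacle here. The only point that needs care is the vanishing of the cross terms. It requires that each $Y_j$ has finite second moment, so that the products $Y_jY_k$ are integrable, and that the copies are independent (pairwise uncorrelated would suffice), so that $\mathbb{E}[Y_jY_k]=\mathbb{E}[Y_j]\mathbb{E}[Y_k]$ for $j\neq k$.
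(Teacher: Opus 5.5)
Your proof is correct: Markov's inequality applied to $(Y-\mathbb{E}[Y])^2$ with threshold $t^2$ gives the first bound, and independence gives $\operatorname{Var}(\overline Y_N)=\operatorname{Var}(Y)/N$, which yields the averaged form. The paper gives no proof of this statement and simply lists Chebyshev's inequality in Appendix~\ref{append:technical-inequalities} as a standard tool, so your argument is the standard derivation it implicitly relies on.
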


        \begin{thm}[Bernstein's inequality] \label{thm:bernstein-inequality}
            Let $X_1,\ldots,X_N$ be independent real random variables.
            Suppose that every centered variable is bounded almost surely
            \begin{align}
                |X_j-\mathbb{E}[X_j]|\leqq b.
            \end{align}
            Define their total variance by
            \begin{align}
                V
                :=
                \sum_{j=1}^{N}\operatorname{Var}(X_j).
            \end{align}
            For every $t>0$,
            \begin{align}
                \Pr\!\left[
                    \left|
                        \sum_{j=1}^{N}
                        \left(
                            X_j-\mathbb{E}[X_j]
                        \right)
                    \right|
                    \geqq t
                \right]
                \leqq
                2\exp\!\left(
                    -\frac{t^2}{2V+2bt/3}
                \right).
                \label{eq:technical-bernstein}
            \end{align}
            In particular, if $b=1$ and $\operatorname{Var}(X_j)\leqq\sigma^2$ for every $j$, then
            \begin{align}
                \Pr\!\left[
                    \left|
                        \frac{1}{N}
                        \sum_{j=1}^{N}
                        \left(
                            X_j-\mathbb{E}[X_j]
                        \right)
                    \right|
                    \geqq t
                \right]
                \leqq
                2\exp\!\left(
                    -\frac{Nt^2}{2\sigma^2+2t/3}
                \right).
                \label{eq:technical-bernstein-average}
            \end{align}
        \end{thm}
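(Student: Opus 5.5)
The plan is the standard Chernoff (exponential moment) argument, applied to the centered variables $Y_j:=X_j-\mathbb{E}[X_j]$, which satisfy $\mathbb{E}[Y_j]=0$, $|Y_j|\leqq b$, and $\mathbb{E}[Y_j^2]=\operatorname{Var}(X_j)=:\sigma_j^2$. Write $S:=\sum_{j}Y_j$. I will bound the upper tail $\Pr[S\geqq t]$ and obtain the lower tail by applying the same bound to $-Y_j$, which satisfy identical hypotheses. A union bound over the two tails then produces the prefactor $2$.

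\textbf{Step 1 (moment generating function of one term).} Fix $0<\lambda<3/b$. Expanding the exponential and using $\mathbb{E}[Y_j]=0$ gives
\begin{align}
\mathbb{E}[e^{\lambda Y_j}]=1+\sum_{k\geqq2}\frac{\lambda^k\mathbb{E}[Y_j^k]}{k!}.
\end{align}
The key moment bound is $|\mathbb{E}[Y_j^k]|\leqq b^{k-2}\sigma_j^2$, which follows from $|Y_j|^k\leqq b^{k-2}Y_j^2$. Combined with $k!\geqq 2\cdot3^{k-2}$ for $k\geqq2$, it yields
\begin{align}
\mathbb{E}[e^{\lambda Y_j}]\leqq1+\frac{\lambda^2\sigma_j^2}{2}\sum_{m\geqq0}\Big(\frac{\lambda b}{3}\Big)^m=1+\frac{\lambda^2\sigma_j^2}{2(1-\lambda b/3)}\leqq\exp\!\Big(\frac{\lambda^2\sigma_j^2}{2(1-\lambda b/3)}\Big).
\end{align}
I expect this step to be the main point requiring care. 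It relies on the geometric-series comparison $k!/2\geqq3^{k-2}$, which can be checked by induction on $k$ (it holds at $k=2$, and the left side grows by a factor $k+1\geqq3$ at each step). It also needs $\lambda b<3$ so that the series converges.

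\textbf{Step 2 (independence and Markov).} By independence, $\mathbb{E}[e^{\lambda S}]=\prod_j\mathbb{E}[e^{\lambda Y_j}]\leqq\exp\!\big(\lambda^2V/(2(1-\lambda b/3))\big)$. Markov's inequality applied to $e^{\lambda S}$ then gives
\begin{align}
\Pr[S\geqq t]\leqq\exp\!\Big(-\lambda t+\frac{\lambda^2V}{2(1-\lambda b/3)}\Big).
\end{align}

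\textbf{Step 3 (choice of $\lambda$).} Rather than optimizing exactly, I take $\lambda:=t/(V+bt/3)$, which automatically satisfies $\lambda b<3$. With this choice $1-\lambda b/3=V/(V+bt/3)$, so the quadratic term equals $t^2/(2(V+bt/3))$, while $\lambda t=t^2/(V+bt/3)$. The exponent is therefore $-t^2/(2V+2bt/3)$, as claimed in Eq.\,\eqref{eq:technical-bernstein}.

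\textbf{Step 4 (averaged form).} The specialization in Eq.\,\eqref{eq:technical-bernstein-average} follows by applying Eq.\,\eqref{eq:technical-bernstein} with $t\mapsto Nt$, $b=1$, and $V\leqq N\sigma^2$. This gives the exponent $-N^2t^2/(2N\sigma^2+2Nt/3)=-Nt^2/(2\sigma^2+2t/3)$.
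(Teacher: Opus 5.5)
Your proof is correct. It is the standard Cram\'er--Chernoff derivation of Bernstein's inequality: the moment bound $|\mathbb{E}[Y_j^k]|\leqq b^{k-2}\sigma_j^2$, the comparison $k!\geqq2\cdot3^{k-2}$, the product over independent terms, and the choice $\lambda=t/(V+bt/3)$. The paper gives no proof of its own to compare against. It lists this inequality in Appendix~\ref{append:technical-inequalities} as a standard tool, without proof or citation.

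One detail in Step~3 needs a patch. Your $\lambda$ gives $\lambda b=3bt/(3V+bt)$, which is strictly below $3$ only when $V>0$, so the claim that $\lambda b<3$ holds automatically fails when $V=0$. At $V=0$ with $b>0$, $\lambda b=3$ exactly and the geometric series in Step~1 diverges. That case is trivial, though. $V=0$ forces every $Y_j=0$ almost surely, so $S=0$ and the tail probability vanishes for all $t>0$. Add this one-line remark.

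In Step~1, exchanging $\mathbb{E}$ with the exponential series is justified by dominated convergence, since $|Y_j|\leqq b$. Step~4 is fine: with $V\leqq N\sigma^2$, the bound only weakens when you enlarge the denominator of the exponent.
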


        \begin{thm}[Hoeffding's inequality \cite{Hoeffding:1963inequalities}] \label{thm:hoeffding-inequality}
            Let $X_1,\ldots,X_N$ be independent real random variables with $X_j\in[a_j,b_j]$ almost surely.
            For every $t>0$,
            \begin{align}
                \Pr\!\left[
                    \sum_{j=1}^{N}
                    \left(
                        X_j-\mathbb{E}[X_j]
                    \right)
                    \geqq t
                \right]
                \leqq
                \exp\!\left[
                    -\frac{2t^2}
                    {\sum_{j=1}^{N}(b_j-a_j)^2}
                \right],
                \label{eq:technical-hoeffding-one-sided}
            \end{align}
            and hence
            \begin{align}
                \begin{aligned}
                    &\Pr\!\left[
                        \left|
                            \sum_{j=1}^{N}
                            \left(
                                X_j-\mathbb{E}[X_j]
                            \right)
                        \right|
                        \geqq t
                    \right]
                    \\
                    &\qquad\leqq
                    2\exp\!\left[
                        -\frac{2t^2}
                        {\sum_{j=1}^{N}(b_j-a_j)^2}
                    \right].
                \end{aligned}
                \label{eq:technical-hoeffding-two-sided}
            \end{align}
        \end{thm}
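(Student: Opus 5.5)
The plan is the standard Chernoff--Cram\'er route: exponentiate, apply Markov's inequality, factorize using independence, and bound each factor by Hoeffding's lemma. Write $S:=\sum_{j=1}^{N}(X_j-\mathbb{E}[X_j])$. For any $\lambda>0$, the map $x\mapsto e^{\lambda x}$ is increasing and positive, so Markov's inequality gives
\begin{align}
    \Pr[S\geqq t]
    \leqq
    e^{-\lambda t}\,\mathbb{E}\bigl[e^{\lambda S}\bigr]
    =
    e^{-\lambda t}\prod_{j=1}^{N}\mathbb{E}\bigl[e^{\lambda(X_j-\mathbb{E}[X_j])}\bigr].
\end{align}
The factorization uses the independence of $X_1,\ldots,X_N$. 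The remaining work is to bound each moment generating factor and then optimize over $\lambda$.

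The key step, and the only nontrivial one, is Hoeffding's lemma. It states that if $Y$ satisfies $\mathbb{E}[Y]=0$ and $Y\in[a,b]$ almost surely, then
\begin{align}
    \mathbb{E}[e^{\lambda Y}]\leqq\exp\bigl(\lambda^2(b-a)^2/8\bigr).
\end{align}
I would apply it to $Y_j:=X_j-\mathbb{E}[X_j]$, which lies in an interval of length $b_j-a_j$. The proof proceeds as follows.

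\emph{Step 1 (reduction to two points).} By convexity of $e^{\lambda y}$ on $[a,b]$,
\begin{align}
    e^{\lambda y}\leqq \frac{b-y}{b-a}e^{\lambda a}+\frac{y-a}{b-a}e^{\lambda b}.
\end{align}
Taking expectations and using $\mathbb{E}[Y]=0$ gives
\begin{align}
    \mathbb{E}[e^{\lambda Y}]\leqq (1-\theta)e^{\lambda a}+\theta e^{\lambda b}=e^{L(h)},
\end{align}
where $\theta:=-a/(b-a)\in[0,1]$, $h:=\lambda(b-a)$, and $L(h):=-\theta h+\log(1-\theta+\theta e^{h})$.

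\emph{Step 2 (Taylor bound).} One checks that $L(0)=0$ and $L'(0)=0$. Moreover, $L''(h)=u(1-u)\leqq1/4$, where $u:=\theta e^{h}/(1-\theta+\theta e^{h})\in[0,1]$. Taylor's theorem with remainder then yields $L(h)\leqq h^2/8$, which is the lemma.

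I expect this second-derivative computation to be the main obstacle. It is routine but must be carried out carefully, in particular the identification of $L''$ as a Bernoulli variance $u(1-u)$.

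Combining the pieces gives
\begin{align}
    \Pr[S\geqq t]\leqq\exp\Bigl(-\lambda t+\tfrac{\lambda^2}{8}\textstyle\sum_j(b_j-a_j)^2\Bigr).
\end{align}
Minimizing the quadratic exponent at $\lambda=4t/\sum_j(b_j-a_j)^2$ yields the one-sided bound \eqref{eq:technical-hoeffding-one-sided}. For the lower tail, apply the same argument to the variables $-X_j\in[-b_j,-a_j]$; the interval lengths are unchanged, so $\Pr[S\leqq -t]$ obeys the same bound. A union bound over the two tails gives the factor $2$ in \eqref{eq:technical-hoeffding-two-sided}.

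As a consistency check with its use in Theorem~\ref{thm:sample-complexity}: for $\pm1$-valued outcomes, $b_j-a_j=2$. Applying the bound to the sample mean with threshold $N\eta$ gives $2\exp(-N\eta^2/2)$, which is the bound used there.
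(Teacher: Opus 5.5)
Your proof is correct. The paper gives no proof of its own and simply cites Hoeffding's 1963 paper, whose argument is exactly the one you give: the Chernoff bound, the convexity reduction, and the Taylor bound via $L''(h)=u(1-u)\le 1/4$. The only case left implicit is $\sum_j(b_j-a_j)^2=0$; there $S=0$ almost surely, so the bound holds trivially with the right-hand side read as $0$.
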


        \begin{thm}[Bretagnolle--Huber inequality] \label{thm:bretagnolle-huber-inequality}
            Let $P$ and $Q$ be probability measures with $P$ absolutely continuous with respect to $Q$.
            Their relative entropy is
            \begin{align}
                D_{\mathrm{KL}}(P\Vert Q)
                :=
                \int
                \log\!\left(
                    \frac{\mathrm{d}P}{\mathrm{d}Q}
                \right)
                \mathrm{d}P.
            \end{align}
            Every measurable event $E$ satisfies
            \begin{align}
                P(E^{\mathrm{c}})+Q(E)
                \geqq
                \frac{1}{2}
                \exp\!\left[
                    -D_{\mathrm{KL}}(P\Vert Q)
                \right].
                \label{eq:technical-bretagnolle-huber}
            \end{align}
            Thus, a binary test that selects $P$ on $E$ and $Q$ on $E^{\mathrm{c}}$ has equal-prior average error probability
            \begin{align}
                P_{\mathrm{err}}
                =
                \frac{1}{2}P(E^{\mathrm{c}})
                +
                \frac{1}{2}Q(E)
                \geqq
                \frac{1}{4}
                \exp\!\left[
                    -D_{\mathrm{KL}}(P\Vert Q)
                \right].
                \label{eq:technical-bretagnolle-huber-testing}
            \end{align}
        \end{thm}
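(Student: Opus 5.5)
The plan is the standard three-link chain
\begin{align}
    P(E^{\mathrm c})+Q(E)\geqq\int\min(p,q)\,\mathrm d\mu\geqq\frac12\Bigl(\int\sqrt{pq}\,\mathrm d\mu\Bigr)^2\geqq\frac12 e^{-D_{\mathrm{KL}}(P\Vert Q)} .
\end{align}
Each link is proved separately, and the testing form \eqref{eq:technical-bretagnolle-huber-testing} then follows by dividing by two.

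\emph{Setup and first link.} If $D_{\mathrm{KL}}(P\Vert Q)=\infty$ the right-hand side vanishes and there is nothing to prove, so assume it is finite. Choose the dominating measure $\mu:=P+Q$ and write $p=\mathrm dP/\mathrm d\mu$, $q=\mathrm dQ/\mathrm d\mu$. For any measurable $E$,
\begin{align}
    P(E^{\mathrm c})+Q(E)=\int_{E^{\mathrm c}}p\,\mathrm d\mu+\int_E q\,\mathrm d\mu\geqq\int\min(p,q)\,\mathrm d\mu ,
\end{align}
since on each of $E$ and $E^{\mathrm c}$ the integrand is at least $\min(p,q)$.

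\emph{Second link.} Write $\sqrt{pq}=\sqrt{\min(p,q)\max(p,q)}$ pointwise and apply Cauchy--Schwarz:
\begin{align}
    \Bigl(\int\sqrt{pq}\,\mathrm d\mu\Bigr)^2\leqq\int\min(p,q)\,\mathrm d\mu\int\max(p,q)\,\mathrm d\mu .
\end{align}
Because $\min(p,q)+\max(p,q)=p+q$ integrates to $2$, we have $\int\max(p,q)\,\mathrm d\mu\leqq2$. This gives $\int\min(p,q)\,\mathrm d\mu\geqq\frac12\bigl(\int\sqrt{pq}\,\mathrm d\mu\bigr)^2$.

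\emph{Third link: the main obstacle.} Here we lower-bound the Bhattacharyya coefficient by $e^{-D/2}$, and this is where absolute continuity must be handled carefully. First restrict the integral to $\{p>0\}$, which can only decrease it. On this set, absolute continuity of $P$ with respect to $Q$ gives $q>0$ for $P$-almost every point. So
\begin{align}
    \int\sqrt{pq}\,\mathrm d\mu\geqq\int_{\{p>0\}}p\sqrt{q/p}\,\mathrm d\mu=\mathbb E_P\bigl[\sqrt{q/p}\,\bigr] ,
\end{align}
an expectation under the probability measure $P$ of a quantity that is positive $P$-almost surely. Jensen's inequality for the concave logarithm yields
\begin{align}
    \log\mathbb E_P\bigl[\sqrt{q/p}\,\bigr]\geqq\tfrac12\,\mathbb E_P[\log(q/p)]=-\tfrac12 D_{\mathrm{KL}}(P\Vert Q) .
\end{align}
Finiteness of $D_{\mathrm{KL}}(P\Vert Q)$ makes this expectation well defined. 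Squaring and exponentiating gives $\bigl(\int\sqrt{pq}\,\mathrm d\mu\bigr)^2\geqq e^{-D_{\mathrm{KL}}(P\Vert Q)}$.

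Chaining the three links proves \eqref{eq:technical-bretagnolle-huber}. For the hypothesis-testing form, multiply both sides by $\tfrac12$, which gives $P_{\mathrm{err}}=\tfrac12P(E^{\mathrm c})+\tfrac12Q(E)\geqq\tfrac14 e^{-D_{\mathrm{KL}}(P\Vert Q)}$ as stated in \eqref{eq:technical-bretagnolle-huber-testing}.
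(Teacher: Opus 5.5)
Your proof is correct and complete. The chain $P(E^{\mathrm c})+Q(E)\geqq\int\min(p,q)\,\mathrm d\mu\geqq\frac12\bigl(\int\sqrt{pq}\,\mathrm d\mu\bigr)^2\geqq\frac12 e^{-D_{\mathrm{KL}}(P\Vert Q)}$ is the standard textbook proof of Bretagnolle--Huber. You also handle its two delicate points properly: the case $D_{\mathrm{KL}}(P\Vert Q)=\infty$, and the use of $P\ll Q$ to get $q>0$ $P$-almost surely before applying Jensen on $\{p>0\}$.

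The paper itself gives no proof to compare against. It lists this inequality in Appendix~\ref{append:technical-inequalities} as a standard tool and only invokes it, for two-outcome distributions, in the proof of Theorem~\ref{thm:sample-optimality}. Your general measure-theoretic argument covers that use.
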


    \section{Multi-qubit generalization of BCFE} \label{append:multi_qubit_generalization}
        This appendix extends the single-qubit BCFE protocol of Sec.\,\ref{sec:general_protocol} to an $m$-qubit target state.

        \subsection{The protocol}
            As in the single-qubit protocol of Sec.\,\ref{sec:general_protocol}, we estimate the infidelity from a Bell measurement and a correction determined from single-copy measurements.
            In the multi-qubit formulation, this correction is expressed through the squared Hilbert--Schmidt distance from the target state and plays the role of the coherence part.
            \paragraph{Exact relation and Bell measurement.}
            Suppose the Hilbert space has dimension $D=2^m$, and let
            \begin{align}
                P
                :=
                |\psi\rangle\langle\psi|
            \end{align}
            be the projector onto a given $m$-qubit target state.
            For a prepared state $\rho$, define
            \begin{align}
                \begin{aligned}
                    F
                    &:=
                    \langle\psi|\rho|\psi\rangle,
                    \qquad
                    \epsilon
                    :=
                    1-F,
                    \\
                    \Delta
                    &:=
                    \rho-P,
                    \qquad
                    h
                    :=
                    \operatorname{tr}(\Delta^2).
                \end{aligned}
            \end{align}

            Suppose $\SWAP$ exchanges two $m$-qubit registers and let $\Pi_-:=(I-\SWAP)/2$ project onto their antisymmetric subspace.
            The probability of the antisymmetric outcome is
            \begin{align}
                b
                :=
                \operatorname{tr}\!\left[
                    \Pi_-\rho^{\otimes2}
                \right]
                =
                \frac{1-\operatorname{tr}\rho^2}{2}.
            \end{align}
            Combined with the squared distance $h$, this purity measurement reproduces the infidelity exactly since
            \begin{align}
                \begin{aligned}
                    b+\frac{h}{2}
                    &=
                    \frac{1-\operatorname{tr}\rho^2}{2}
                    +
                    \frac{
                        \operatorname{tr}\rho^2
                        +1
                        -2F
                    }{2}
                    \\
                    &=
                    1-F
                    =
                    \epsilon.
                \end{aligned}
                \label{eq:multiqubit-direct-infidelity}
            \end{align}
            Thus the two quantities determine $\epsilon$ linearly, with no branch choice or nonlinear inversion.
            They also vanish with the infidelity:
            \begin{align}
                h
                \leqq
                2\epsilon,
                \qquad
                b
                \leqq
                \frac{1-F^2}{2}
                \leqq
                \epsilon.
                \label{eq:multiqubit-component-bounds}
            \end{align}
            The first inequality follows from $\operatorname{tr}\rho^2\leqq1$, while the second uses $\operatorname{tr}\rho^2\geqq F^2$.

            The antisymmetric measurement does not require a controlled-$\SWAP$ gate.
            The register exchange is the product of the pairwise exchanges, $\SWAP=\prod_{j=1}^{m}\SWAP_j$, and a Bell measurement resolves the eigenvalue of every $\SWAP_j$.
            Let $s_j\in\{0,1\}$ indicate whether the $j$th pair is measured in the singlet state.
            The measured eigenvalue of the register exchange and the final classical bit are
            \begin{align}
                \begin{aligned}
                    \lambda_{\SWAP}
                    &=
                    (-1)^{\sum_{j=1}^{m}s_j},
                    \\
                    g
                    &:=
                    s_1\mathbin{\oplus}\cdots\mathbin{\oplus}s_m,
                    \qquad
                    \Pr(g=1)
                    =
                    b.
                \end{aligned}
                \label{eq:multiqubit-bell-parity}
            \end{align}
            Figure~\ref{fig:multiqubit-bell-circuit} illustrates the circuit applied in parallel to the $m$ qubit pairs.
            \begin{figure}[t]
                \centering
                \begin{quantikz}[column sep=0.35cm]
                    \lstick{$q_j^{(1)}$} & \ctrl{1} & \gate{H} & \meter{} & \rstick{$a_j$} \\
                    \lstick{$q_j^{(2)}$} & \targ{} & \qw & \meter{} & \rstick{$b_j$}
                \end{quantikz}
                \par\smallskip
                $s_j:=a_jb_j,\qquad g:=s_1\mathbin{\oplus}\cdots\mathbin{\oplus}s_m.$
                \caption{Bell measurement circuit for the antisymmetric measurement.
                The circuit is applied to the $j$th qubits of two independently prepared $m$-qubit registers for every $j=1,\ldots,m$.
                The output $(a_j,b_j)=(1,1)$ identifies the local singlet.
                The antisymmetric outcome is recorded when $g=1$, equivalently when an odd number of qubit pairs are singlets.}
                \label{fig:multiqubit-bell-circuit}
            \end{figure}
            Each trial consumes two independently prepared copies of $\rho$.

            \paragraph{Pauli estimate of the squared distance.}
            Let $\mathcal{P}_m:=\{I,X,Y,Z\}^{\otimes m}$ be the $m$-qubit Pauli group, with the orthogonal relation $\operatorname{tr}(RS)=D\delta_{R,S}$ for $R,S\in\mathcal{P}_m$.
            For every Pauli string, the known target and the prepared state have expectation values
            \begin{align}
                \begin{aligned}
                    p_R
                    &:=
                    \operatorname{tr}(RP),
                    \qquad
                    m_R
                    :=
                    \operatorname{tr}(R\rho),
                    \\
                    g_R
                    &:=
                    m_R-p_R
                    =
                    \operatorname{tr}(R\Delta).
                \end{aligned}
            \end{align}
            By completeness of the Pauli basis, the squared distance is expressed as
            \begin{align}
                h
                =
                \frac{1}{D}
                \sum_{R\in\mathcal P_m}
                g_R^2.
                \label{eq:multiqubit-pauli-distance}
            \end{align}
            The identity string contributes nothing because $g_I=\operatorname{tr}\Delta=0$.

            For each $R\neq I$, we measure $R$ on $N_P$ independent copies of $\rho$ and denote the outcomes by $X_{R,1},\ldots,X_{R,N_P}\in\{-1,+1\}$.
            The following U-statistic of degree two \cite{Hoeffding:1948statistics} removes the bias that would arise from squaring a sample mean
            \begin{align}
                \begin{aligned}
                    U_R
                    &:=
                    \frac{1}{N_P(N_P-1)}
                    \sum_{\substack{1\leqq i,j\leqq N_P\\i\neq j}}
                    (X_{R,i}-p_R)(X_{R,j}-p_R)
                    \\
                    &=
                    \frac{
                        \left[
                            \sum_{i=1}^{N_P}
                            (X_{R,i}-p_R)
                        \right]^2
                        -
                        \sum_{i=1}^{N_P}
                        (X_{R,i}-p_R)^2
                    }{
                        N_P(N_P-1)
                    }.
                \end{aligned}
                \label{eq:multiqubit-pauli-u-statistic}
            \end{align}
            Here, the expectation value is $\mathbb E[U_R]=g_R^2$.
            The values $p_R$ enter only through this classical calculation.
            In particular, the quantum circuit never applies $P$ or performs the measurement $\{P,I-P\}$ so that it does not violate the fundamental demand that any operations must be either Clifford gates or Pauli measurements.

            One estimation block combines these Pauli data with an independent empirical antisymmetric frequency $Y_-$ obtained from $N_B$ Bell trials.
            Define
            \begin{align}
                \begin{aligned}
                    \widehat h_{\rm Pauli}
                    &:=
                    \frac{1}{D}
                    \sum_{R\in\mathcal P_m\setminus\{I\}}
                    U_R,
                    \\
                    Y
                    &:=
                    Y_-+\frac{1}{2}\widehat h_{\rm Pauli}.
                \end{aligned}
                \label{eq:multiqubit-pauli-block-estimator}
            \end{align}
            Equations\,\eqref{eq:multiqubit-direct-infidelity} and \eqref{eq:multiqubit-pauli-distance} ensure $\mathbb E[Y]=\epsilon$.
            The block is repeated independently an odd number $L$ of times.
            If $Y^{(1)},\ldots,Y^{(L)}$ are the resulting estimates, the final output is
            \begin{align}
                \begin{aligned}
                    \widehat\epsilon_{\rm{med}}
                    &:=
                    \operatorname{med}
                    \big(
                        Y^{(1)},\ldots,Y^{(L)}
                    \big),
                    \\
                    \widehat\epsilon
                    &:=
                    \min\!\left\{
                        1,
                        \max\!\left\{
                            0,
                            \widehat\epsilon_{\rm{med}}
                        \right\}
                    \right\},
                    \\
                    \widehat F
                    &:=
                    1-\widehat\epsilon.
                \end{aligned}
            \end{align}

            The concentration of this estimator does not depend on the Pauli expansion of the target state.
            \begin{thm}[Multi-qubit extension]
                \label{thm:multi-qubit-extension}
                Fix $0<r\leqq1$ and $0<\delta<1/8$.
                Let $|\psi\rangle$ be a given $m$-qubit target state and let $\rho$ have infidelity $0<\epsilon=1-\langle\psi|\rho|\psi\rangle\leqq1$.
                The estimator above can be implemented using only the antisymmetric measurement and Pauli string measurements, and it satisfies
                \begin{align}
                    |\widehat\epsilon-\epsilon|
                    \leqq
                    r\epsilon
                \end{align}
                with probability at least $1-\delta$ using
                \begin{align}
                    \begin{aligned}
                        N_{\rm Pauli}
                        &=
                        O\!\left[
                            \left(
                                \frac{D}{r^2\epsilon}
                                +
                                \frac{D^2}{r\epsilon}
                            \right)
                            \log\frac{1}{\delta}
                        \right]
                        \\
                        &=
                        O\!\left(
                            \frac{D^2}{r^2\epsilon}
                            \log\frac{1}{\delta}
                        \right)
                    \end{aligned}
                \end{align}
                copies of $\rho$.
                The hidden numerical constants are universal and independent of the target state.
            \end{thm}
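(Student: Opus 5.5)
The plan is a median-of-means argument. First I bound the variance of a single block estimator $Y=Y_-+\tfrac12\widehat h_{\rm Pauli}$. Chebyshev's inequality (Theorem~\ref{thm:chebyshev-inequality}) then makes each block $rε$-accurate with probability at least $3/4$. Finally, the median of $L=O(\log(1/\delta))$ independent blocks boosts this to confidence $1-\delta$.

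Unbiasedness is already in place: $\mathbb E[Y]=\epsilon$ follows from Eqs.\,\eqref{eq:multiqubit-direct-infidelity} and \eqref{eq:multiqubit-pauli-distance} together with $\mathbb E[U_R]=g_R^2$. Implementability with only Bell measurements (Fig.\,\ref{fig:multiqubit-bell-circuit}) and Pauli-string measurements is also immediate, since $p_R$ enters only classically.

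\textbf{Step 1 (variance of one block).} The Bell frequency $Y_-$ is independent of the Pauli data. It has variance $b(1-b)/N_B\leqq\epsilon/N_B$ by Eq.\,\eqref{eq:multiqubit-component-bounds}.

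For each $R\neq I$, write $Z_i:=X_{R,i}-p_R$. These are i.i.d.\ with mean $g_R$ and variance $\sigma_R^2=1-m_R^2\leqq1$. The standard Hoeffding decomposition of a degree-two U-statistic with kernel $Z_iZ_j$ gives
\begin{align}
\operatorname{Var}(U_R)=\frac{4\sigma_R^2g_R^2}{N_P}+\frac{2\sigma_R^4}{N_P(N_P-1)}\leqq\frac{4g_R^2}{N_P}+\frac{4}{N_P^2}.
\end{align}
Measurements for different $R$ use disjoint copies, so the $U_R$ are independent. Summing and using Eq.\,\eqref{eq:multiqubit-pauli-distance} with $h\leqq2\epsilon$ yields
\begin{align}
\operatorname{Var}(\widehat h_{\rm Pauli})\leqq\frac{1}{D^2}\sum_{R\neq I}\Bigl(\frac{4g_R^2}{N_P}+\frac{4}{N_P^2}\Bigr)\leqq\frac{8\epsilon}{DN_P}+\frac{4}{N_P^2}.
\end{align}
Hence
\begin{align}
\operatorname{Var}(Y)\leqq\frac{\epsilon}{N_B}+\frac{2\epsilon}{DN_P}+\frac{1}{N_P^2}.
\end{align}
The last term is the one that does not shrink with $h$, and it is what produces the $D^2/(r\epsilon)$ contribution. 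Carrying out this U-statistic computation carefully, including the variance identity and the independence across Pauli strings, is the step I expect to be the main technical obstacle; everything after it is routine.

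\textbf{Step 2 (single-block accuracy).} I choose
\begin{align}
N_B=\Bigl\lceil\frac{12}{r^2\epsilon}\Bigr\rceil,\qquad N_P=\Bigl\lceil\frac{24}{Dr^2\epsilon}+\frac{\sqrt{12}}{r\epsilon}\Bigr\rceil+1.
\end{align}
With these choices each of the three terms in $\operatorname{Var}(Y)$ is at most $r^2\epsilon^2/12$, so $\operatorname{Var}(Y)\leqq r^2\epsilon^2/4$. Chebyshev's inequality then gives $\Pr[|Y-\epsilon|\geqq r\epsilon]\leqq1/4$.

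\textbf{Step 3 (median and clipping).} The median of $L$ blocks deviates by at least $r\epsilon$ only if at least $L/2$ blocks fail. Each block fails independently with probability at most $1/4$. Hoeffding's inequality (Theorem~\ref{thm:hoeffding-inequality}) applied to the failure indicators bounds this event by $\exp(-L/8)$, so the odd choice $L=\lceil8\log(1/\delta)\rceil$ suffices. Clipping to $[0,1]$ cannot increase the distance to $\epsilon\in(0,1]$.

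\textbf{Step 4 (counting copies).} The Pauli part uses
\begin{align}
L(D^2-1)N_P=O\!\left[\left(\frac{D}{r^2\epsilon}+\frac{D^2}{r\epsilon}\right)\log\frac1\delta\right]
\end{align}
copies. This is $O\!\left(D^2r^{-2}\epsilon^{-1}\log(1/\delta)\right)$ because $r\leqq1$. The Bell part uses $2LN_B=O(r^{-2}\epsilon^{-1}\log(1/\delta))$ copies, which is dominated by the Pauli part. All constants are universal and independent of $|\psi\rangle$, since the target enters only through $h\leqq2\epsilon$ and $\sigma_R^2\leqq1$.
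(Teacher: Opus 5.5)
Your proposal is correct and follows the paper's proof essentially step for step. Both use the exact U-statistic variance $4g_R^2v_R/N_P+2v_R^2/(N_P(N_P-1))$, independence across Pauli strings to get a per-block bound $\epsilon/N_B+2\epsilon/(DN_P)+O(1/N_P^2)$, Chebyshev for per-block failure probability at most $1/4$, and a Hoeffding-controlled median over an odd number $L\geqq 8\log(1/\delta)$ of blocks, followed by clipping and the same copy count. The only differences are constants: the paper takes $N_B=\lceil 24/(r^2\epsilon)\rceil$ and $N_P=\lceil 48/(Dr^2\epsilon)+4/(r\epsilon)\rceil$, and it keeps the term $1/(2N_P(N_P-1))$ instead of bounding it by $1/N_P^2$.
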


            \begin{proof}
                Fix $R\neq I$ and write
                \begin{align}
                    Z_{R,i}
                    :=
                    X_{R,i}-p_R,
                    \qquad
                    v_R
                    :=
                    \operatorname{Var}(Z_{R,i})
                    =
                    1-m_R^2
                    \leqq
                    1.
                \end{align}
                The mean of $Z_{R,i}$ is $g_R$.
                The centered decomposition of Eq.\,\eqref{eq:multiqubit-pauli-u-statistic} is
                \begin{align}
                    \begin{aligned}
                        U_R-g_R^2
                        &=
                        \frac{2g_R}{N_P}
                        \sum_{i=1}^{N_P}
                        (Z_{R,i}-g_R)
                        \\
                        &\quad
                        +
                        \frac{1}{N_P(N_P-1)}
                        \sum_{\substack{1\leqq i,j\leqq N_P\\i\neq j}}
                        \big(
                            Z_{R,i}-g_R
                        \big)
                        \big(
                            Z_{R,j}-g_R
                        \big).
                    \end{aligned}
                \end{align}
                The conditional expectation of the second term with respect to any one centered variable is zero, so it is uncorrelated with the first term.
                The exact variance is
                \begin{align}
                    \operatorname{Var}(U_R)
                    =
                    \frac{4g_R^2v_R}{N_P}
                    +
                    \frac{2v_R^2}{N_P(N_P-1)}.
                    \label{eq:multiqubit-pauli-u-variance}
                \end{align}

                Distinct Pauli strings are measured on disjoint copies.
                Their estimates are therefore independent, and Eqs.\,\eqref{eq:multiqubit-pauli-distance} and \eqref{eq:multiqubit-pauli-u-variance} imply
                \begin{align}
                    \begin{aligned}
                        \operatorname{Var}\!\left(
                            \frac{1}{2}
                            \widehat h_{\rm Pauli}
                        \right)
                        &\leqq
                        \frac{1}{4D^2}
                        \sum_{R\neq I}
                        \left[
                            \frac{4g_R^2}{N_P}
                            +
                            \frac{2}{N_P(N_P-1)}
                        \right]
                        \\
                        &=
                        \frac{h}{DN_P}
                        +
                        \frac{D^2-1}{2D^2N_P(N_P-1)}
                        \\
                        &\leqq
                        \frac{2\epsilon}{DN_P}
                        +
                        \frac{1}{2N_P(N_P-1)}.
                    \end{aligned}
                    \label{eq:multiqubit-pauli-block-variance}
                \end{align}

                The $N_B$ Bell outcomes in the same block are independent Bernoulli variables with mean $b$.
                Equations\,\eqref{eq:multiqubit-component-bounds} and \eqref{eq:multiqubit-pauli-block-variance} bound the variance of the combined block estimate by
                \begin{align}
                    \operatorname{Var}(Y)
                    \leqq
                    \frac{\epsilon}{N_B}
                    +
                    \frac{2\epsilon}{DN_P}
                    +
                    \frac{1}{2N_P(N_P-1)}.
                    \label{eq:multiqubit-combined-block-variance}
                \end{align}

                One explicit choice is
                \begin{align}
                    N_B
                    :=
                    \left\lceil
                        \frac{24}{r^2\epsilon}
                    \right\rceil,
                    \qquad
                    N_P
                    :=
                    \left\lceil
                        \frac{48}{Dr^2\epsilon}
                        +
                        \frac{4}{r\epsilon}
                    \right\rceil.
                    \label{eq:multiqubit-pauli-block-sizes}
                \end{align}
                Since $N_P\geqq4$, the three terms in Eq.\,\eqref{eq:multiqubit-combined-block-variance} are at most $r^2\epsilon^2/24$, $r^2\epsilon^2/24$, and $r^2\epsilon^2/16$, respectively.
                Hence
                \begin{align}
                    \operatorname{Var}(Y)
                    \leqq
                    \frac{7}{48}
                    r^2\epsilon^2
                    <
                    \frac{1}{4}
                    r^2\epsilon^2.
                \end{align}
                Chebyshev's inequality (Theorem \ref{thm:chebyshev-inequality}) now bounds the failure probability of one block:
                \begin{align}
                    \Pr\!\left[
                        |Y-\epsilon|>r\epsilon
                    \right]
                    \leqq
                    \frac{
                        \operatorname{Var}(Y)
                    }{
                        r^2\epsilon^2
                    }
                    \leqq
                    \frac{1}{4}.
                \end{align}

                Choose an odd number of independent blocks satisfying $L\geqq8\log(1/\delta)$.
                For block $\ell$, introduce the indicator
                \begin{align}
                    I^{(\ell)}
                    :=
                    \mathbf 1\!\left\{
                        |Y^{(\ell)}-\epsilon|
                        >
                        r\epsilon
                    \right\},
                    \qquad
                    \mathbb E[I^{(\ell)}]
                    \leqq
                    \frac{1}{4}.
                \end{align}
                If the median differs from $\epsilon$ by more than $r\epsilon$, at least half of the blocks have failed.
                Hence
                \begin{align}
                    \begin{aligned}
                        \Pr\!\left[
                            \left|
                                \widehat\epsilon_{\rm{med}}
                                -\epsilon
                            \right|
                            >
                            r\epsilon
                        \right]
                        &\leqq
                        \Pr\!\left[
                            \frac{1}{L}
                            \sum_{\ell=1}^{L}I^{(\ell)}
                            \geqq
                            \frac{1}{2}
                        \right]
                        \\
                        &\leqq
                        \Pr\!\left[
                            \frac{1}{L}
                            \sum_{\ell=1}^{L}
                            \left(
                                I^{(\ell)}
                                -\mathbb E[I^{(\ell)}]
                            \right)
                            \geqq
                            \frac{1}{4}
                        \right]
                        \\
                        &\leqq
                        \exp\Bigl(
                            -2L\cdot\frac{1}{4^2}
                        \Bigl)
                        =
                        e^{-L/8}
                        \leqq
                        \delta.
                    \end{aligned}
                \end{align}
                The third inequality is Hoeffding's inequality (Theorem \ref{thm:hoeffding-inequality}) for the independent variables $I^{(1)},\ldots,I^{(L)}\in\{0,1\}$.
                Projection onto $[0,1]$ cannot increase the distance from $\epsilon\in[0,1]$.

                One block consumes
                \begin{align}
                    2N_B
                    +
                    (D^2-1)N_P
                \end{align}
                copies, where the factor of two accounts for the two inputs to each Bell measurement.
                Equation\,\eqref{eq:multiqubit-pauli-block-sizes} makes this count
                \begin{align}
                    O\!\left(
                        \frac{D}{r^2\epsilon}
                        +
                        \frac{D^2}{r\epsilon}
                    \right)
                \end{align}
                per block.
                The $L$ independent blocks produce the stated total.
            \end{proof}

            The first term in Eq.\,\eqref{eq:multiqubit-pauli-block-variance} contributes at most $O(\sqrt{\epsilon/(DN_P)})$ to the standard deviation.
            It plays the same role as the intermediate accuracy of order $r\sqrt{\epsilon}$ in the proof of Theorem \ref{thm:sample-complexity}, but it now controls the squared distance as a whole.
            For $D=2$, Eq.\,\eqref{eq:multiqubit-pauli-distance} contains the three nonidentity Pauli observables, so the measurement settings reduce to those used in Sec.\,\ref{sec:general_protocol}.

            \paragraph{Global Clifford measurements.}
            The preceding construction assigns a separate batch to each of the $D^2-1$ nonidentity Pauli strings.
            Classical shadows obtained from random measurements over the full $m$-qubit Clifford group avoid this setting-by-setting allocation and reduce the dimensional overhead \cite{Huang:2020classical}.
            Here a global Clifford acts on the entire register and may contain entangling gates; it is not restricted to a tensor product of single qubit Clifford gates.

            For each copy, we choose a uniformly random Clifford $U$, apply it to $\rho$, and measure every qubit in the computational basis.
            If the outcome is $b$, the corresponding snapshot is
            \begin{align}
                \widehat\rho
                :=
                (D+1)
                U^\dagger
                |b\rangle\langle b|
                U
                -
                I,
                \qquad
                \mathbb E[\widehat\rho]
                =
                \rho.
                \label{eq:multiqubit-global-clifford-snapshot}
            \end{align}
            The second identity follows from the fact that the Clifford group forms a unitary 2-design.
            The snapshot has unit trace but need not be positive.
            We split the snapshots into two independent batches of $N_P$ and denote their averages by $\overline\rho^{(1)}$ and $\overline\rho^{(2)}$.
            Their cross-correlation
            \begin{align}
                \widehat h_{\rm Clifford}
                :=
                \operatorname{tr}\!\left[
                    \left(
                        \overline\rho^{(1)}-P
                    \right)
                    \left(
                        \overline\rho^{(2)}-P
                    \right)
                \right]
                \label{eq:multiqubit-global-clifford-estimator}
            \end{align}
            is unbiased for $h$.
            Although $P$ occurs in the classical calculation,
            \begin{align}
                \operatorname{tr}(P\widehat\rho)
                =
                (D+1)
                \left|
                    \langle\psi|
                    U^\dagger
                    |b\rangle
                \right|^2
                -
                1,
            \end{align}
            no target projection is applied to the quantum state.

            For this measurement ensemble, the snapshot obeys the following second moment bound for every traceless Hermitian operator $O$ \cite{Huang:2020classical}:
            \begin{align}
                \mathbb E\!\left[
                    \operatorname{tr}\!\left(
                        O(\widehat\rho-\rho)
                    \right)^2
                \right]
                \leqq
                3\operatorname{tr}(O^2).
                \label{eq:multiqubit-global-clifford-moment}
            \end{align}
            Let $\Sigma$ be the covariance operator of $\widehat\rho-\rho$ on the real vector space of traceless Hermitian operators, equipped with the Hilbert--Schmidt inner product $\langle A,B\rangle_{\mathrm{HS}}:=\operatorname{tr}(AB)$.
            The operator norm satisfies $\|\Sigma\|_{\rm op}\leqq3$ by Eq.\,\eqref{eq:multiqubit-global-clifford-moment}.
            Moreover, every snapshot satisfies $\operatorname{tr}(\widehat\rho^2)=D^2+D-1$, and hence
            \begin{align}
                \operatorname{tr}\Sigma
                =
                \mathbb E\!\left[
                    \operatorname{tr}\!\left[
                        (\widehat\rho-\rho)^2
                    \right]
                \right]
                =
                D^2+D-1-\operatorname{tr}\rho^2
                \leqq
                D^2+D.
            \end{align}
            Since $\Sigma$ is positive semidefinite, $\operatorname{tr}(\Sigma^2)\leqq\|\Sigma\|_{\rm op}\operatorname{tr}\Sigma\leqq3(D^2+D)$.

            Write $E_a:=\overline\rho^{(a)}-\rho$ for $a=1,2$.
            The difference between Eq.\,\eqref{eq:multiqubit-global-clifford-estimator} and $h$ is
            \begin{align}
                \widehat h_{\rm Clifford}-h
                =
                \operatorname{tr}(\Delta E_1)
                +
                \operatorname{tr}(\Delta E_2)
                +
                \operatorname{tr}(E_1E_2).
            \end{align}
            The three terms are pairwise uncorrelated because the batches are independent and $\mathbb E[E_a]=0$.
            Consequently,
            \begin{align}
                \begin{aligned}
                    \operatorname{Var}(
                        \widehat h_{\rm Clifford}
                    )
                    &=
                    \frac{2}{N_P}
                    \langle\Delta,\Sigma\Delta\rangle_{\mathrm{HS}}
                    +
                    \frac{1}{N_P^2}
                    \operatorname{tr}(\Sigma^2)
                    \\
                    &\leqq
                    \frac{6h}{N_P}
                    +
                    \frac{3(D^2+D)}{N_P^2}
                    \\
                    &\leqq
                    \frac{12\epsilon}{N_P}
                    +
                    \frac{3(D^2+D)}{N_P^2}.
                \end{aligned}
                \label{eq:multiqubit-global-clifford-variance}
            \end{align}
            An independent Bell estimate from $N_B$ trials therefore satisfies
            \begin{align}
                \operatorname{Var}\!\left(
                    Y_-+\frac{1}{2}
                    \widehat h_{\rm Clifford}
                \right)
                \leqq
                \frac{\epsilon}{N_B}
                +
                \frac{3\epsilon}{N_P}
                +
                \frac{3(D^2+D)}{4N_P^2}.
            \end{align}
            The same argument based on independent blocks and their median applies to $Y_-+\widehat h_{\rm Clifford}/2$.
            The required batch sizes are
            \begin{align}
                N_B
                =
                \Theta\!\left(
                    \frac{1}{r^2\epsilon}
                \right),
                \qquad
                N_P
                =
                \Theta\!\left(
                    \frac{1}{r^2\epsilon}
                    +
                    \frac{D}{r\epsilon}
                \right).
            \end{align}
            Since the two shadow batches use $2N_P$ copies, the total is
            \begin{align}
                \begin{aligned}
                    N_{\rm Clifford}
                    &=
                    O\!\left[
                        \left(
                            \frac{1}{r^2\epsilon}
                            +
                            \frac{D}{r\epsilon}
                        \right)
                        \log\frac{1}{\delta}
                    \right].
                \end{aligned}
                \label{eq:multiqubit-global-clifford-copy-bound}
            \end{align}
            This bound counts copies rather than gates.
            Uniform global Clifford circuits generally contain entangling gates, and evaluating $|\langle\psi|U^\dagger|b\rangle|^2$ need not be efficient for an arbitrary target representation.

        \subsection{Optimal dependence on $r$, $\epsilon$ and $\delta$}
            To establish optimality in $r$, $\epsilon$, and $\delta$, we adapt the proof of Theorem \ref{thm:sample-optimality} to an arbitrary $m$-qubit target $\ket{\psi}$.
            The only change is to choose a state $\ket{\phi}$ orthogonal to $\ket{\psi}$ in the $m$-qubit Hilbert space and restrict the prepared states to the family
            \begin{align}
                \rho_p
                :=(1-p)\ket{\psi}\!\bra{\psi}
                +p\ket{\phi}\!\bra{\phi}.
            \end{align}
            These states lie in the two-dimensional subspace spanned by $\ket{\psi}$ and $\ket{\phi}$, so all remaining steps of that proof apply unchanged, including the bound for arbitrary collective measurements.
            We therefore obtain the following lower bound.

            \begin{thm}[Sample optimality for multi-qubit states] \label{thm:multiqubit-sample-optimality}
                Suppose $0<r\leqq1/8$ and $0<\delta<1/8$.
                For any $m$-qubit target state $\ket{\psi}$ and every sufficiently small $\epsilon>0$, any protocol that estimates the infidelity $\epsilon:=1-\bra{\psi}\rho\ket{\psi}$ of an arbitrary $m$-qubit state $\rho$ to relative error $r$ with probability at least $1-\delta$ requires, in the worst case,
                \begin{align}
                    N
                    =
                    \Omega\!\left(
                        \frac{1}{r^2\epsilon}
                        \log\frac{1}{\delta}
                    \right)
                \end{align}
                copies, even when arbitrary collective measurements are allowed.
            \end{thm}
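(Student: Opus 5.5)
The plan is to reduce the multi-qubit lower bound to the two-hypothesis argument already used in the proof of Theorem \ref{thm:sample-optimality}. Fix the $m$-qubit target $\ket{\psi}$. Pick any unit vector $\ket{\phi}$ orthogonal to $\ket{\psi}$ (this exists since $D=2^m\geqq2$). Consider the commuting family $\rho_p=(1-p)\ket{\psi}\!\bra{\psi}+p\ket{\phi}\!\bra{\phi}$, whose infidelity with respect to $\ket{\psi}$ is exactly $p$. As in the single-qubit case, set $\rho_0:=\rho_\epsilon$ and $\rho_1:=\rho_{(1+4r)\epsilon}$, with $\epsilon$ small enough that $(1+4r)\epsilon\leqq1/2$, say.

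Any protocol that achieves relative error $r$ with probability $1-\delta$ on all $m$-qubit states works in particular on these two. The threshold rule $\varphi$ of Eq.\,\eqref{eq:decision-rule} then distinguishes $\rho_0^{\otimes N}$ from $\rho_1^{\otimes N}$ with equal-prior error probability at most $\delta$. Separation of the two acceptance regions follows as before from $(1-r)(1+4r)>1+r$ for $0<r\leqq1/8$. The protocol may be adaptive or collective, and may use arbitrary (not only Clifford) operations. Whatever it does, its final output is a classical function of a measurement on $\rho_i^{\otimes N}$. Grouping its outcomes by $\varphi$ therefore gives a single two-outcome POVM $\{M_0,M_1\}$ on the $N$-copy system.

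Next, bound the relative entropy. Both $\rho_0$ and $\rho_1$ are diagonal in any orthonormal basis extending $\{\ket{\psi},\ket{\phi}\}$, with spectra $(1-p,p,0,\ldots,0)$. The zero eigenvalues lie on a common support, so $D(\rho_0\Vert\rho_1)$ is exactly the Bernoulli divergence $D_{\mathrm{KL}}(\operatorname{Bern}(\epsilon)\Vert\operatorname{Bern}((1+4r)\epsilon))\leqq32r^2\epsilon$, with the same $\chi^2$ bound as before. Additivity then gives $D(\rho_0^{\otimes N}\Vert\rho_1^{\otimes N})\leqq32Nr^2\epsilon$. Data processing under the measurement channel $\mathcal{M}$, followed by the Bretagnolle--Huber inequality (Theorem \ref{thm:bretagnolle-huber-inequality}), yields $\delta\geqq\frac14\exp(-32Nr^2\epsilon)$. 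Rearranging, $N\geqq\frac{1}{32r^2\epsilon}\log\frac{1}{4\delta}\geqq\frac{1}{96r^2\epsilon}\log\frac1\delta$ for $\delta<1/8$.

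The only step needing care is justifying that the ambient dimension does not matter. Specifically, one must check that support containment $\operatorname{supp}\rho_0\subseteq\operatorname{supp}\rho_1$ holds, so the divergence is finite and equals the two-level value. One must also check that data processing and Bretagnolle--Huber apply to arbitrary POVMs on the $D^N$-dimensional space, including adaptive strategies, which are absorbed into one overall POVM. Both points are immediate from the construction. The bound is also uniform in $\ket{\psi}$ and $m$, so the constant $1/96$ carries over unchanged.
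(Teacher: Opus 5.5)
Your proposal is correct and is essentially the paper's proof: the paper also chooses $\ket{\phi}\perp\ket{\psi}$, restricts to the family $\rho_p=(1-p)\ket{\psi}\!\bra{\psi}+p\ket{\phi}\!\bra{\phi}$ on the two-dimensional span, and reuses the single-qubit argument unchanged (threshold test, Bernoulli relative-entropy bound $32r^2\epsilon$, data processing, Bretagnolle--Huber). Your explicit checks of support containment and of the condition $(1+4r)\epsilon\leqq1/2$ spell out what the paper leaves implicit in ``sufficiently small $\epsilon$.''
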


            The optimality asserted here concerns the dependence on $r$ and $\epsilon$, together with the logarithmic dependence on $\delta$, at fixed $m$ and $|\psi\rangle$.
            Theorem \ref{thm:multi-qubit-extension} therefore matches the lower bound in its dependence on $r$ and $\epsilon$.

        \subsection{Numerical simulations} \label{append:multi_qubit_generalization_simulaton}
            We assess two finite copy consequences of Theorem \ref{thm:multi-qubit-extension}.
            Figure~\ref{fig:multiqubit-simulation-components} resolves the Bell and squared-distance terms at a representative copy budget before combining them into the infidelity estimate.
            Figure~\ref{fig:multiqubit-scaling} then varies the copy budget and the true infidelity in separate sweeps.
            The first sweep tests the $N^{-1/2}$ decrease of the absolute root mean squared error (RMSE) at fixed $\epsilon$.
            The second tests the predicted $\epsilon^{-1/2}$ and $\epsilon^{-1}$ dependence of the relative RMSE at fixed $N$.

            The family of target states consists of the three-qubit states
            \begin{align}
                \begin{aligned}
                    |\psi_\theta\rangle
                    &=
                    \operatorname{CZ}_{2,3}\operatorname{CZ}_{1,2}
                    \\
                    &\quad\times
                    \bigotimes_{j=1}^{3}
                    R_Z(\alpha_j\theta)|+\rangle_j,
                    \\
                    (\alpha_1,\alpha_2,\alpha_3)
                    &=
                    (1,1.4,1.8).
                \end{aligned}
                \label{eq:multiqubit-simulation-target}
            \end{align}
            The preparation noise consists of the channels
            \begin{align}
                \begin{aligned}
                    \mathcal D_{p_1}^{(1)}(\rho)
                    &:=
                    (1-p_1)\rho
                    +
                    \frac{p_1}{3}
                    \sum_{P\in\{X,Y,Z\}}
                    P\rho P,
                    \\
                    \mathcal D_{p_2}^{(2)}(\rho)
                    &:=
                    (1-p_2)\rho
                    +
                    \frac{p_2}{15}
                    \sum_{P\in\mathcal P_2\setminus\{II\}}
                    P\rho P.
                \end{aligned}
                \label{eq:multiqubit-simulation-noise}
            \end{align}
            The first channel follows every Hadamard and $R_Z$ gate.
            The second follows each controlled-$Z$ gate, and every $P$ in its sum acts on the two qubits of that gate.
            In this simulation, the Bell circuit, Bell readout, and Pauli measurements are ideal, so Eq.\,\eqref{eq:multiqubit-simulation-noise} is the only source of noise.
            See Fig.\,\ref{fig:multiqubit-simulation-circuits} for the preparation and measurement circuits.
            Flammia--Liu DFE is applied to the same noisy density matrix with ideal Pauli measurements.

            \begin{figure*}[t]
                \centering
                \begin{subfigure}[c]{0.60\textwidth}
                    \centering
                    \begin{quantikz}[column sep=0.35cm,row sep=0.25cm]
                        \lstick{$\ket{0}$} & \gate{H} & \gate{R_Z(\alpha_1\theta)} & \ctrl{1} & \qw & \ctrl{3} & \qw & \qw & \gate{H} & \meter{$Z$} \\
                        \lstick{$\ket{0}$} & \gate{H} & \gate{R_Z(\alpha_2\theta)} & \control{} & \ctrl{1} & \qw & \ctrl{3} & \qw & \gate{H} & \meter{$Z$} \\
                        \lstick{$\ket{0}$} & \gate{H} & \gate{R_Z(\alpha_3\theta)} & \qw & \control{} & \qw & \qw & \ctrl{3} & \gate{H} & \meter{$Z$} \\
                        \lstick{$\ket{0}$} & \gate{H} & \gate{R_Z(\alpha_1\theta)} & \ctrl{1} & \qw & \targ{} & \qw & \qw & \qw & \meter{$Z$} \\
                        \lstick{$\ket{0}$} & \gate{H} & \gate{R_Z(\alpha_2\theta)} & \control{} & \ctrl{1} & \qw & \targ{} & \qw & \qw & \meter{$Z$} \\
                        \lstick{$\ket{0}$} & \gate{H} & \gate{R_Z(\alpha_3\theta)} & \qw & \control{} & \qw & \qw & \targ{} & \qw & \meter{$Z$}
                    \end{quantikz}
                    \caption{Bell part.}
                \end{subfigure}
                \hfill
                \begin{subfigure}[c]{0.37\textwidth}
                    \centering
                    \begin{quantikz}[column sep=0.35cm,row sep=0.25cm]
                        \lstick{$\ket{0}$} & \gate{H} & \gate{R_Z(\alpha_1\theta)} & \ctrl{1} & \qw & \meter{$P_1$} \\
                        \lstick{$\ket{0}$} & \gate{H} & \gate{R_Z(\alpha_2\theta)} & \control{} & \ctrl{1} & \meter{$P_2$} \\
                        \lstick{$\ket{0}$} & \gate{H} & \gate{R_Z(\alpha_3\theta)} & \qw & \control{} & \meter{$P_3$}
                    \end{quantikz}
                    \caption{Distance part.}
                \end{subfigure}
                \caption{Circuits used for the Bell and distance parts in the three-qubit simulations.
                Both panels include the preparation in eq.\,\eqref{eq:multiqubit-simulation-target}.
                During preparation, $\mathcal{D}_{p_1}^{(1)}$ follows every Hadamard and $R_Z$ gate, and $\mathcal{D}_{p_2}^{(2)}$ follows each controlled-$Z$ gate, as defined in eq.\,\eqref{eq:multiqubit-simulation-noise}.
                Subsequent operations are ideal.
                In panel (a), the first three wires prepare one copy of $\rho$ and the last three prepare the other.
                The CNOTs and Hadamard gates implement Bell measurements between corresponding qubits.
                A singlet is detected when both qubits of a pair have outcome $Z=-1$, and an odd number of singlet outcomes determines the antisymmetric event used to estimate $b$.
                Panel (b) measures a Pauli string $P_1\otimes P_2\otimes P_3$, with identity factors requiring no measurement, to estimate the squared-distance term.
                DFE uses the same preparation and ideal Pauli measurements.}
                \label{fig:multiqubit-simulation-circuits}
            \end{figure*}

            Each estimation block uses $N_B$ shots for the Bell part and $N_P$ measurements of every nonidentity three-qubit Pauli string.
            Since there are $4^3-1=63$ such strings, $L$ blocks consume
            \begin{align}
                N
                =
                L\left(
                    2N_B+63N_P
                \right)
                \label{eq:multiqubit-simulation-copy-count}
            \end{align}
            resource states at each angle.
            The factor of two accounts for the two copies consumed by one trial of the Bell part.
            The expected number of copies consumed by DFE is chosen to be as close as possible to \(N\) without exceeding it.

            For the representative calculation in Fig.~\ref{fig:multiqubit-simulation-components}, we set
            \begin{align}
                \begin{aligned}
                    p_1
                    &=
                    p_2
                    =
                    0.005,
                    \qquad
                    L
                    =
                    3,
                    \\
                    N_B
                    &=
                    120{,}000,
                    \qquad
                    N_P
                    =
                    12{,}000,
                    \\
                    N
                    &=
                    2{,}988{,}000.
                \end{aligned}
            \end{align}
            We simulate $\theta/\pi\in\{0.01,0.02,0.03,0.04,0.05\}$ with 30 independent random seeds.
            The exact infidelity and its two contributions, calculated directly from the prepared density matrix, are
            \begin{align}
                \begin{aligned}
                    \epsilon
                    &=
                    2.824468\times10^{-2},
                    \\
                    b
                    &=
                    2.774559\times10^{-2},
                    \\
                    h
                    &=
                    9.981671\times10^{-4}
                \end{aligned}
            \end{align}
            at all five angles.
            These values satisfy $b+h/2=\epsilon$ to numerical precision.

            \begin{figure*}[t]
                \centering
                \begin{subfigure}[t]{0.32\textwidth}
                    \centering
                    \includegraphics[width=\linewidth]{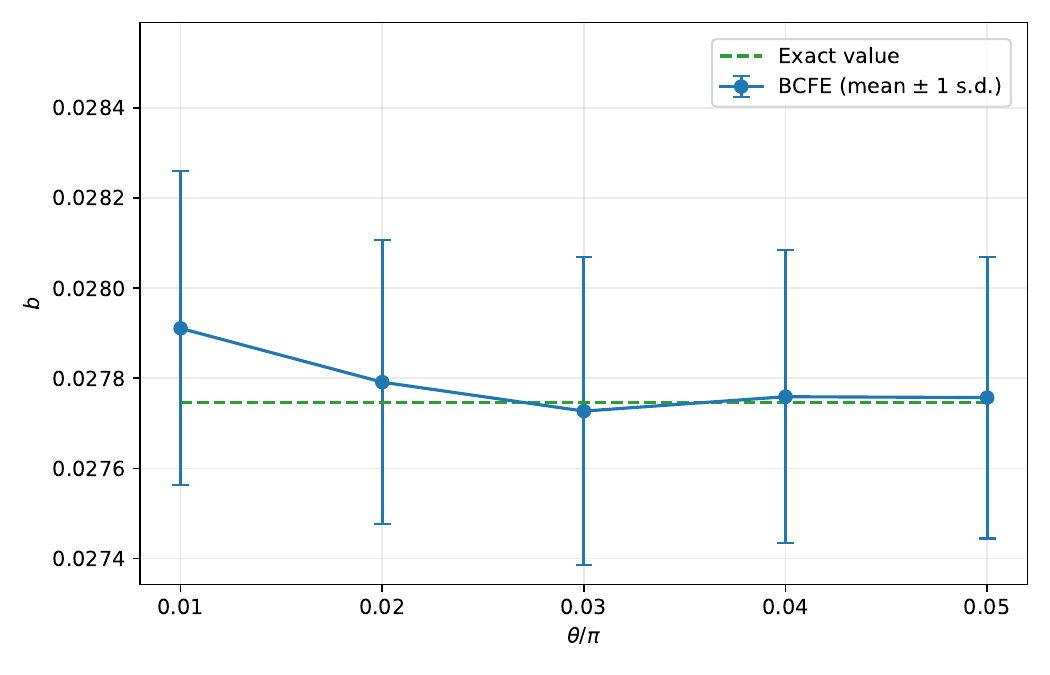}
                    \caption{Bell estimate of $b$.}
                \end{subfigure}
                \hfill
                \begin{subfigure}[t]{0.32\textwidth}
                    \centering
                    \includegraphics[width=\linewidth]{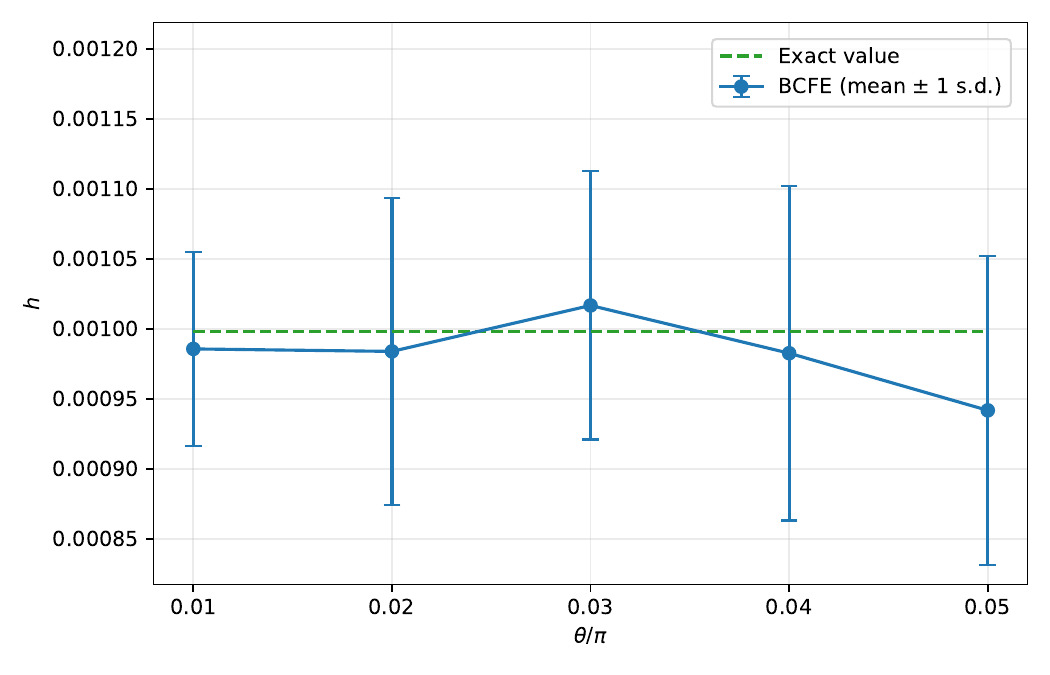}
                    \caption{Pauli estimate of $h$.}
                \end{subfigure}
                \hfill
                \begin{subfigure}[t]{0.32\textwidth}
                    \centering
                    \includegraphics[width=\linewidth]{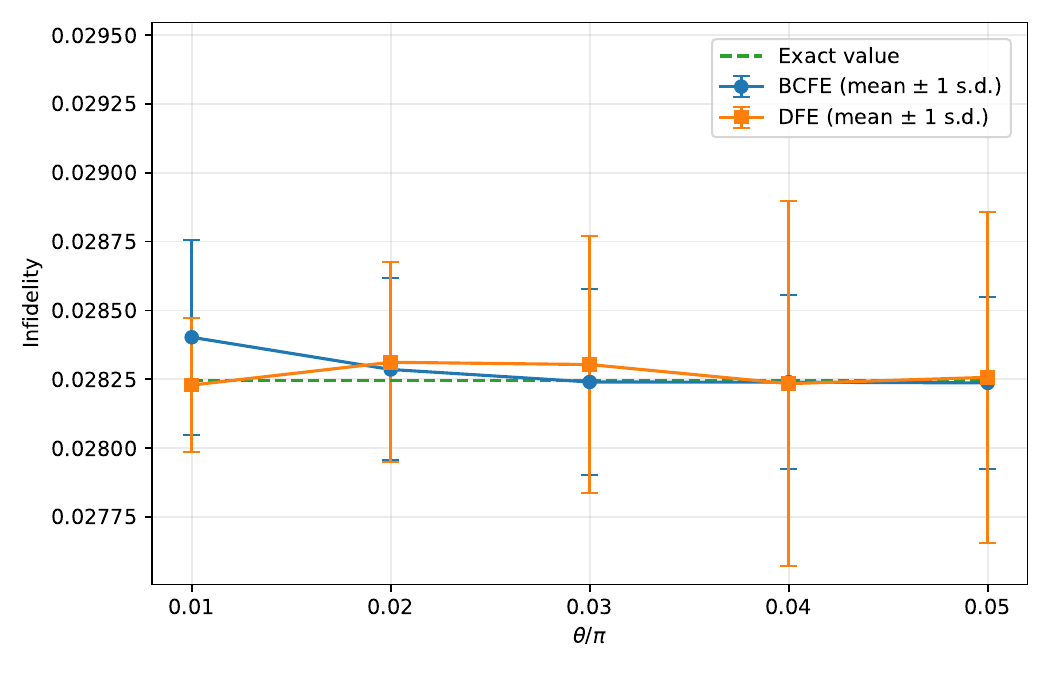}
                    \caption{Combined estimate and DFE.}
                \end{subfigure}
                \caption{Three-qubit estimates at $p_1=p_2=0.005$ and $N=2{,}988{,}000$.
                Each point is the mean over 30 independent seeds, and each error bar is one sample standard deviation across those seeds.
                Panels (a) and (b) resolve the two terms in Eq.\,\eqref{eq:multiqubit-direct-infidelity}.
                Panel (c) combines them and compares the resulting infidelity estimate with Flammia--Liu DFE at comparable expected numbers of prepared resource states.
                The dashed green curves are the exact values.}
                \label{fig:multiqubit-simulation-components}
            \end{figure*}

            The seed averages in panels (a) and (b) track the exact component values at all five angles.
            After the two components are combined, our protocol has an RMSE of $3.33\times10^{-4}$, compared with $8.34\times10^{-4}$ for DFE when the expected numbers of prepared resource states are comparable.

            We next compare the scaling with $N$.
            The four shot allocations are
            \begin{align}
                \begin{aligned}
                    (N_B,N_P)
                    \in
                    \{&
                    (12{,}000,1{,}200),
                    (36{,}000,3{,}600),
                    \\
                    &
                    (72{,}000,7{,}200),
                    (120{,}000,12{,}000)
                    \},
                \end{aligned}
                \label{eq:multiqubit-simulation-shot-allocations}
            \end{align}
            corresponding to $N=298{,}800$, $896{,}400$, $1{,}792{,}800$, and $2{,}988{,}000$.
            For either method $M$, the root mean squared error at a given copy budget is
            \begin{align}
                \operatorname{RMSE}_M(N)
                :=
                \left(
                    \frac{1}{150}
                    \sum_{s=1}^{30}
                    \sum_{k=1}^{5}
                    \big(
                        \widehat\epsilon_{M,s}(\theta_k)
                        -
                        \epsilon(\theta_k)
                    \big)^2
                \right)^{1/2}.
                \label{eq:multiqubit-simulation-rmse}
            \end{align}
            The 150 estimates comprise 30 seeds for each of the five target angles.

            To isolate the dependence on the infidelity, we instead fix $\theta/\pi=0.03$ and $N=2{,}988{,}000$, and take $p_1=p_2=p$ with
            \begin{align}
                \begin{aligned}
                    p
                    \in
                    \{&
                    10^{-4},
                    2\times10^{-4},
                    5\times10^{-4},
                    10^{-3},
                    \\
                    &
                    2\times10^{-3},
                    5\times10^{-3},
                    8\times10^{-3}
                    \}.
                \end{aligned}
            \end{align}
            The smallest and largest exact infidelities are $5.731631\times10^{-4}$ and $4.479214\times10^{-2}$, respectively.
            We quantify the error relative to the true infidelity through the relative root mean squared error (RRMSE)
            \begin{align}
                \operatorname{RRMSE}_M(\epsilon)
                :=
                \frac{1}{\epsilon}
                \Bigg(
                    \frac{1}{30}
                    \sum_{s=1}^{30}
                    \big(
                        \widehat\epsilon_{M,s}
                        -
                        \epsilon
                    \big)^2
                \Bigg)^{1/2}.
                \label{eq:multiqubit-simulation-relative-rmse}
            \end{align}
            For both methods, we calculate the RMSE from the estimates before clipping to $[0,1]$ to compare their statistical scaling without the reduction in error caused by clipping.

            \begin{figure*}[t]
                \centering
                \begin{subfigure}[t]{0.48\textwidth}
                    \centering
                    \includegraphics[width=0.85\linewidth]{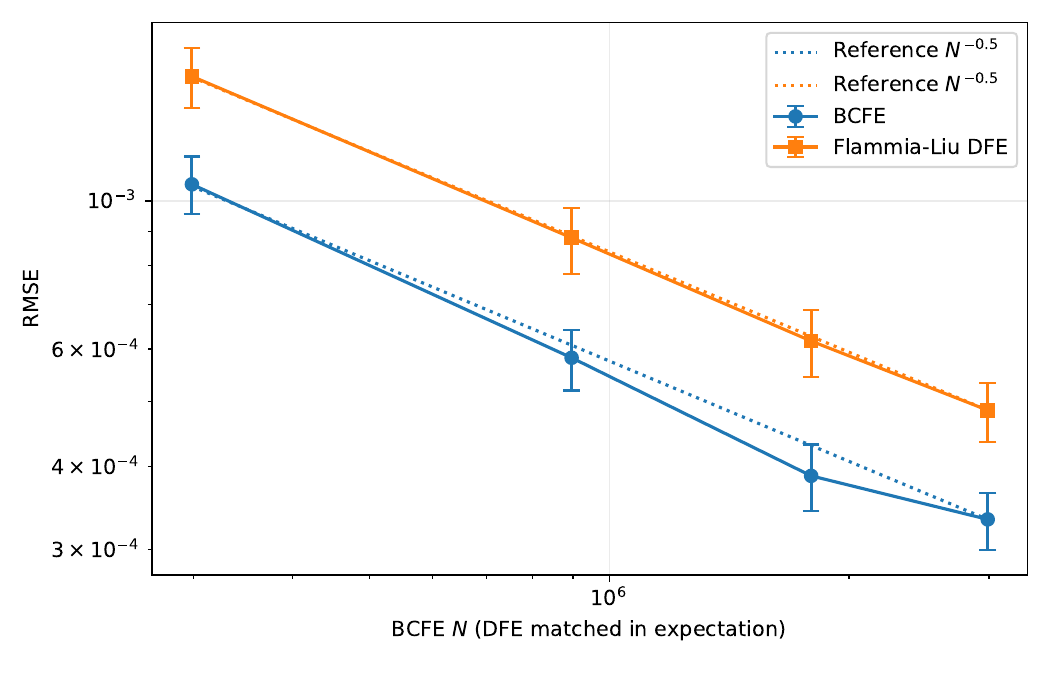}
                    \caption{RMSE as a function of $N$ at $p_1=p_2=0.005$.}
                \end{subfigure}
                \hfill
                \begin{subfigure}[t]{0.48\textwidth}
                    \centering
                    \includegraphics[width=0.85\linewidth]{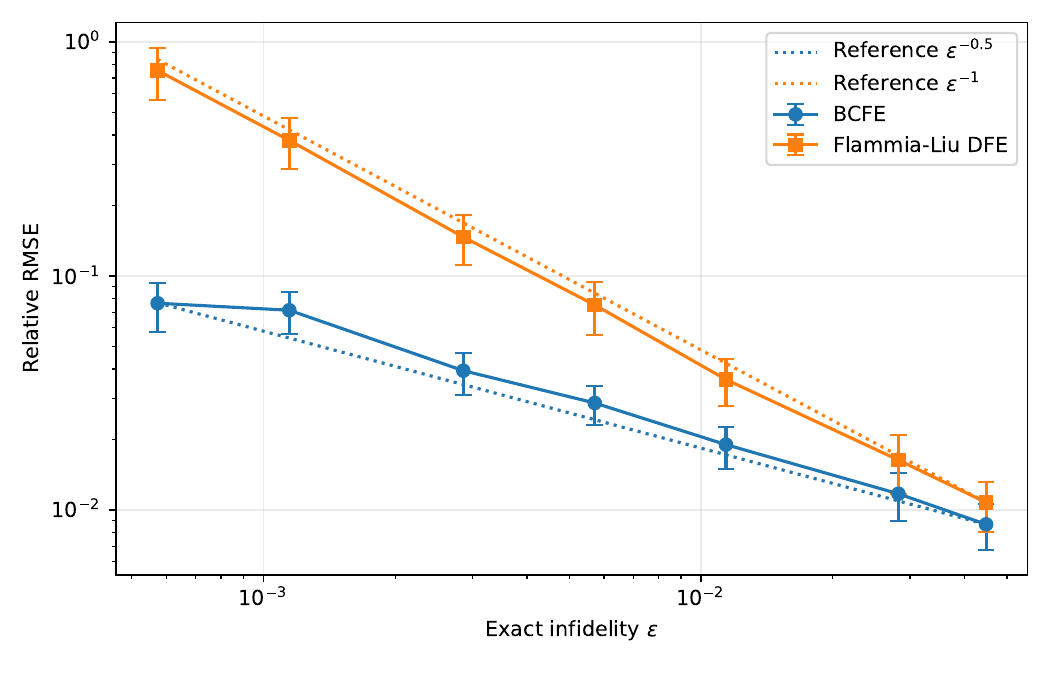}
                    \caption{Relative RMSE as a function of the exact infidelity at fixed $N$.}
                \end{subfigure}
                \caption{Finite-copy scaling of the three-qubit estimates.
                The points in panel (a) use Eq.\,\eqref{eq:multiqubit-simulation-rmse}, while panel (b) uses Eq.\,\eqref{eq:multiqubit-simulation-relative-rmse}.
                The error bars are 95\% bootstrap confidence intervals obtained by resampling the 30 seeds.
                In panel (a), the five angles associated with a selected seed remain together in each bootstrap sample.
                Solid lines connect the simulated values, and the dotted or dashed lines mark the indicated reference powers.}
                \label{fig:multiqubit-scaling}
            \end{figure*}

            In Fig.~\ref{fig:multiqubit-scaling}(a), least-squares fits in logarithmic coordinates have slopes $-0.517$ for our protocol and $-0.561$ for DFE.
            Both values agree with the $N^{-1/2}$ behavior expected at fixed nonzero $\epsilon$.
            Our protocol nevertheless has a smaller RMSE at every simulated budget, with the ratio between the DFE and protocol values ranging from $2.50$ to $2.85$.

            In Fig.~\ref{fig:multiqubit-scaling}(b), the fitted slopes are $-0.520$ for our protocol and $-0.991$ for DFE, close to the respective reference powers $-1/2$ and $-1$.
            The two panels are consistent with
            \begin{align}
                \operatorname{RRMSE}_{\rm{BCFE}}
                \propto
                \frac{1}{\sqrt{N\epsilon}},
                \qquad
                \operatorname{RRMSE}_{\rm{DFE}}
                \propto
                \frac{1}{\sqrt{N}\cdot\epsilon}.
            \end{align}
            Here the two costs diverge sharply as $\epsilon$ shrinks.
            A fixed relative accuracy then requires $N=O(1/\epsilon)$ copies for our protocol against $N=O(1/\epsilon^2)$ for DFE.
            Theorem \ref{thm:multi-qubit-extension} establishes the former dependence for arbitrary multi-qubit target states.
            The finite copy data corroborate the resulting distinction from DFE for the three-qubit family and preparation noise considered here.

    \section{Mini-introduction to the STAR architecture} \label{append:star}
        In this appendix, we provide a self-contained background of the Space-Time Efficient Analog Rotations (STAR) architecture \cite{Akahoshi:2023xck, Toshio:2024hut, Toshio:2026dpa}, which is the prerequisite for the benchmark simulation in Sec.\,\ref{sec:theory-and-simulation}.
        The STAR architecture itself is a highly efficient way of performing FTQC in the early-FTQC era where only $10^3$--$10^5$ qubits are available.
        Below, we elaborate on the motivation and the ideas.
        \paragraph{Motivation.}
        In surface code fault-tolerant quantum computing, logical Clifford gates are relatively inexpensive, whereas each logical $T$ gate consumes a high fidelity magic state produced by magic state distillation.
        An arbitrary rotation must first be approximated by a long Clifford+$T$ word, for example by the Solovay--Kitaev algorithm \cite{Dawson:2005blj}, or more practically by number-theoretic exact synthesis \cite{Ross:2014qpa} and probabilistic mixing of gate sequences \cite{Campbell:2017mix}.
        If the synthesis accuracy is $\epsilon_{\rm syn}$, the resulting cost has the schematic scaling
        \begin{align}
            R_P(\theta)
            \approx
            C_0T C_1T\cdots T C_{N_T},
            \quad
            N_T
            =
            O\!\left(\log^c\frac{1}{\epsilon_{\rm syn}}\right).
        \end{align}
        Here $c>1$ is the compilation exponent.
        The cost of a single accurate rotation is therefore paid twice, through both the synthesis length and the repeated production of distilled magic states.

        \paragraph{A concrete compilation cost.}
        The preceding overhead is visible even for the explicit rotation $e^{iZ\pi/128}$.
        Ross--Selinger synthesis with target operator-norm error $10^{-10}$ produces the Clifford+$T$ word \cite{Ross:2014qpa}
        \begin{widetext}
            \begin{align}
                \begin{aligned}
                    e^{iZ\pi/128} \approx \;&
                    H{\color{red}T}SH{\color{red}T}SH{\color{red}T}SH{\color{red}T}H{\color{red}T}SH{\color{red}T}SH{\color{red}T}SH{\color{red}T}H{\color{red}T}SH{\color{red}T}H{\color{red}T}H{\color{red}T}H{\color{red}T}SH{\color{red}T}SH{\color{red}T}SH{\color{red}T}H{\color{red}T}H\\[0.4em]
                    &{\color{red}T}H{\color{red}T}SH{\color{red}T}H{\color{red}T}SH{\color{red}T}SH{\color{red}T}SH{\color{red}T}H{\color{red}T}H{\color{red}T}H{\color{red}T}SH{\color{red}T}H{\color{red}T}H{\color{red}T}SH{\color{red}T}H{\color{red}T}H{\color{red}T}H{\color{red}T}SH{\color{red}T}SH{\color{red}T}\\[0.45em]
                    &H{\color{red}T}SH{\color{red}T}H{\color{red}T}H{\color{red}T}H{\color{red}T}H{\color{red}T}H{\color{red}T}SH{\color{red}T}H{\color{red}T}SH{\color{red}T}SH{\color{red}T}SH{\color{red}T}H{\color{red}T}SH{\color{red}T}SH{\color{red}T}SH{\color{red}T}SH{\color{red}T}H{\color{red}T}\\[0.45em]
                    &SH{\color{red}T}SH{\color{red}T}H{\color{red}T}H{\color{red}T}H{\color{red}T}H{\color{red}T}SH{\color{red}T}SH{\color{red}T}SH{\color{red}T}H{\color{red}T}SH{\color{red}T}H{\color{red}T}H{\color{red}T}H{\color{red}T}SH{\color{red}T}SH{\color{red}T}SH{\color{red}T}SH\\[0.45em]
                    &{\color{red}T}SH{\color{red}T}H{\color{red}T}SH{\color{red}T}SH{\color{red}T}H{\color{red}T}H{\color{red}T}SH{\color{red}T}H{\color{red}T}H{\color{red}T}SH{\color{red}T}H{\color{red}T}H{\color{red}T}H{\color{red}T}SH{\color{red}T}SH{\color{red}T}SH{\color{red}T}H{\color{red}T}SH\\[0.45em]
                    &{\color{red}T}H{\color{red}T}H{\color{red}T}SH{\color{red}T}H{\color{red}T}H{\color{red}T}SH{\color{red}T}H{\color{red}T}H{\color{red}T}SH{\color{red}T}H{\color{red}T}H{\color{red}T}SH{\color{red}T}SH{\color{red}T}H{\color{red}T}SH{\color{red}T}H{\color{red}T}SH{\color{red}T}SH{\color{red}T}.
                \end{aligned}
            \end{align}
        \end{widetext}
        Direct multiplication of the standard $H$, $S$, and $T$ matrices produces an operator-norm error of $1.85\times10^{-11}<10^{-10}$.
        Even this single rotation consumes 108 distilled magic states, one for each highlighted $T$ gate.

        As far as early-FTQC devices are concerned, this restriction remains acute, since error-corrected logical qubits may be available while the number of physical qubits is still too small to devote large factories to a sustained supply of distilled states \cite{Akahoshi:2023xck, Toshio:2024hut}.
        The STAR architecture circumvents this overhead by implementing arbitrary rotations directly from angle-dependent resource states, at the cost of a controlled residual rotation error.
        In compiled Trotter circuits, parallel resource state injection and adaptive updates of its target regions reduce the runtime overhead caused by injection and RUS \cite{Akahoshi:2024yme}.

        \paragraph{STAR ver.\,2.}
        The STAR architecture changes where the computational burden falls: Clifford operations remain fault-tolerant, while non-Clifford rotations are implemented directly and noisily \cite{Akahoshi:2023xck, Toshio:2024hut}.
        \begin{figure*}[t]
            \centering
            \begin{quantikz}[column sep=0.45cm, row sep=0.35cm]
                \lstick{$\ket{\psi}_1$}
                & \targ{}
                & \meter{$Z$}
                & \rstick{$z\in\{0,1\}$} \\
                \lstick{$\ket{m_\theta}_2$}
                & \ctrl{-1}
                & \gate{X^z}
                & \rstick{$R_Z((-1)^z\theta)\ket{\psi}$}
            \end{quantikz}
            \caption{Gate teleportation of an analog $Z$ rotation.
            The resource state controls the CNOT, and the input qubit is measured in the $Z$ basis.
            The correction $X^z$ is applied to the resource qubit, which carries the output state after the measurement.}
            \label{fig:star-gate-teleportation}
        \end{figure*}
        For a logical Pauli operator $P$, its basic resources are
        \begin{align}
            \ket{m_\theta}_L:=
                R_Z(\theta)\ket{+}_L, \quad
            R_P(\theta):=
            e^{i\theta P},
        \end{align}
        as in Eq.\,\eqref{eq:analog-rotation-resource-state}.
        The analog rotation is performed by the gate teleportation
        \begin{align}
            \begin{aligned}
                &\CNOT_{2\to1}
                \ket{\psi}_1\otimes\ket{m_\theta}_2
                \\
                &\qquad=
                \frac{1}{\sqrt{2}}
                \big(
                    \ket{0}_1\otimes R_Z(\theta)\ket{\psi}_2
                    +
                    \ket{1}_1\otimes XR_Z(-\theta)\ket{\psi}_2
                \big).
            \end{aligned}
        \end{align}
        The outcome $Z=+1$ leaves the desired state $R_Z(\theta)\ket{\psi}$ on the second qubit with probability $\frac{1}{2}$.
        For $Z=-1$, applying $X$ to that qubit results in $R_Z(-\theta)\ket{\psi}$.
        The first successful histories of the repeat-until-success (RUS) protocol are
        \begin{align}
            R_P(\theta)
            &=
            R_P(\theta),
            \notag\\
            R_P(-\theta)R_P(2\theta)
            &=
            R_P(\theta),
            \notag\\
            R_P(-\theta)R_P(-2\theta)R_P(4\theta)
            &=
            R_P(\theta),
            \notag\\[-0.3em]
            &\ \vdots
            \notag\\
            \prod_{k=0}^{j-1}R_P(-2^k\theta)\,
            R_P(2^j\theta)
            &=
            R_P(\theta).
        \end{align}
        Trial $j$ is reached only after $j$ consecutive failures and hence with probability $2^{-j}$.
        The expected number of teleportation attempts is thus
        \begin{align}
            \mathbb{E}[N_{\rm{RUS}}]
            =
            \sum_{j=0}^{\infty}2^{-j}
            =
            2.
        \end{align}

        The protocol proposed by Choi \textit{et al}.\ \cite{Choi:2023bvh} prepares $\ket{m_{\theta_L}}_L$ by physical rotations followed by stabilizer post-selection.
        STAR ver.\,2 implements this construction as transversal multi-rotation (TMR) \cite{Toshio:2024hut}.
        See Ref.\,\cite{McArdle:2026rvn} for a technique that improves the angle dependence of the RUS error rate without a digital synthesis fallback.
        Although originally formulated for superconducting qubits, TMR is not intrinsically hardware-specific and has an analog for neutral atoms \cite{Ismail:2025fry}.
        For a distance-$d$ logical path $\gamma$, the physical rotations produce
        \begin{align}
            \begin{aligned}
                \prod_{v\in\gamma}e^{i\theta Z_v}\ket*{+_L}
                &=
                \cos^d\!\theta\ket*{+_L}
                +i^d\sin^d\!\theta\ket*{-_L}
                \\
                &\quad
                +(\text{states outside the logical subspace}).
            \end{aligned}
        \end{align}
        Stabilizer measurements implement the projection $\Pi_L$ onto the logical subspace.
        When $d$ is odd, the projected component determines the accepted state and its logical angle,
        \begin{align}
            \begin{gathered}
                \Pi_L\prod_{v\in \gamma}e^{i\theta Z_v}\ket*{+_L}
                =
                \cos^d\!\theta\ket*{+_L}
                +i^d\sin^d\!\theta\ket*{-_L}
                \!\propto\!
                \ket*{m_{\theta_L}}_L,
                \\
                \theta_L
                =
                \arctan\!\big(
                    (-1)^{(d-1)/2}\tan^d\theta
                \big).
            \end{gathered}
        \end{align}
        The stabilizer outcomes are accepted precisely when the state lies in the logical subspace, so the post-selection probability is $\cos^{2d}\!\theta+\sin^{2d}\!\theta$.
        For $d\geqq3$, the small angle expansion is
        \begin{align}
            p_{\rm succ}
            =
            1-d\theta^2+O(\theta^4).
        \end{align}
        The post-selection therefore succeeds with probability close to one when the physical rotation angle is small.
        Post-selection removes components outside the logical subspace, but it cannot screen out every physical fault.
        A fault may still return an erroneous component to the accepted logical subspace.
        The resulting state $\rho_{\theta_L}$ is therefore characterized by the quantity studied in this paper,
        \begin{align}
            F_{\theta_L}:=
            \bra{m_{\theta_L}}\rho_{\theta_L}\!\ket{m_{\theta_L}}.
        \end{align}
        Let $p_{\rm ph}$ be the physical error rate and $P_{\rm ud}=O(p_{\rm ph})$ the probability that a physical fault remains undetected in an accepted TMR preparation.
        For the single-qubit TMR above, the faulty branch occurs with probability $P_L=O(|\theta_L|^{2/d}P_{\rm ud})$ and has an over-rotation angle $\Delta\theta_L=O(|\theta_L|^{1-2/d})$.
        PCEC combines over rotations of opposite signs, leaving a stochastic Pauli-$Z$ error with probability $2P_L\sin^2(\Delta\theta_L)$ \cite{Toshio:2024hut}.
        The worst-case channel error $\epsilon_\diamond$ of one teleportation attempt after PCEC therefore scales as
        \begin{align}
            \epsilon_\diamond
            =
            O\!\left(|\theta_L|^{2-2/d}p_{\rm ph}\right).
        \end{align}
        Each failed teleportation in RUS doubles the angle.
        A history that reaches $2^{K-1}|\theta_L|=O(1)$ occurs with probability $2^{-K}=O(|\theta_L|)$, while its accumulated error is $O(p_{\rm ph})$.
        Averaged over all RUS histories, the channel error becomes
        \begin{align}
            \epsilon_\diamond
            =O\!\left(|\theta_L|p_{\rm ph}\right).
        \end{align}
        In ver.\,2, TMR prepares the resource states used in the small angle regime, whereas a $[[4,1,1,2]]$ subsystem-code injection handles the large angle branches reached during RUS.\footnote{STAR ver.\,1 used the subsystem-code injection throughout, but its low injection efficiency motivated the introduction of TMR in ver.\,2 \cite{Akahoshi:2023xck, Toshio:2024hut}.}

        \paragraph{Relation to STAR ver.\,3.}
        STAR ver.\,3 changes the treatment of large angles, but it does not displace the fidelity problem studied here.
        STAR-magic mutation retains the same TMR resource state in the small angle regime but replaces the large angle RUS tail by magic-state-assisted digital synthesis \cite{Toshio:2026dpa}
        \begin{align*}
            \begin{aligned}
                R_P(\theta_{\rm RUS})
                &:\quad
                \begin{cases}
                    \begin{array}{@{}l@{}}
                        \text{TMR + higher-order}\\
                        \text{PCEC},
                    \end{array}
                        & |\theta_{\rm RUS}|<\theta_{\rm th},
                        \\[0.2em]
                    \begin{array}{@{}l@{}}
                        \text{Clifford+$T$ synthesis}\\
                        \text{with MSC or MSD},
                    \end{array}
                        & |\theta_{\rm RUS}|\geqq\theta_{\rm th}.
                \end{cases}
            \end{aligned}
        \end{align*}
        Thus ver.\,3 changes how the rare large angle regime is handled, but the fidelity of the arbitrary angle resource state prepared by TMR remains relevant to the small angle branch of ver.\,3.

    \section{Flammia--Liu direct fidelity estimation}
    \label{append:flammia_liu}
        In this appendix, we review the direct fidelity estimation protocol (DFE) \cite{Flammia:2011fda} that is used for comparison to our BCFE protocol in Sec.\,\ref{sec:theory-and-simulation}.
        The goal of DFE is to estimate the target fidelity directly from Pauli measurements, without reconstructing the prepared state, and to determine the number of copies required for a prescribed additive error.
        The theoretical structure and finite sample analysis are given in Appendix \ref{append:dfe-protocol}.
        Appendix \ref{append:dfe-simulation-setup} specifies the estimator, error bars, and calibration of the expected resource consumption used in the single-qubit simulations in Sec.\,\ref{sec:theory-and-simulation}.

        \subsection{The DFE protocol} \label{append:dfe-protocol}
            Suppose $\tau:=\ket*{\psi}\!\bra*{\psi}$ is an $n$-qubit target pure state and $\rho$ is the prepared, possibly noisy state, which would ideally coincide with $\tau$.
            For $D:=2^n$, let $\{W_k\}_{k=1}^{D^2}=\{I,X,Y,Z\}^{\otimes n}$ be the Pauli basis, normalized by
            \begin{align}
                \operatorname{tr}(W_kW_l)=D\delta_{kl}.
            \end{align}
            For any operator $\sigma$, we define the quantity
            \begin{align}
                \chi_{\sigma}(k):=\frac{1}{\sqrt{D}}\tr(\sigma W_k),
            \end{align}
            so that, by completeness of the Pauli basis, $\sigma$ admits the expansion
            \begin{align}
                \sigma
                =
                \frac{1}{D}\sum_{k=1}^{D^2}\operatorname{tr}(\sigma W_k)W_k
                =
                \frac{1}{\sqrt{D}}\sum_{k=1}^{D^2}\chi_\sigma(k)W_k.
            \end{align}
            In terms of these coefficients, the fidelity of the prepared state $\rho$ is given by
            \begin{align}
                F
                :=
                \bra*{\psi}\rho\ket*{\psi}
                =
                \operatorname{tr}(\tau\rho)
                =
                \sum_{k=1}^{D^2}\chi_\tau(k)\chi_\rho(k).
                \label{eq:dfe-pauli-fidelity}
            \end{align}
            Since $\tau$ is pure, $\sum_k\chi_\tau(k)^2=\operatorname{tr}\tau^2=1$, so
            \begin{align}
                p(k):=\chi_\tau(k)^2,
                \label{eq:dfe-relevance-distribution}
            \end{align}
            is a probability distribution.
            For the random variable $K\sim p$, let us define the quantity
            \begin{align}
                X(K)
                :=
                \frac{\chi_\rho(K)}{\chi_\tau(K)}
                =
                \frac{\operatorname{tr}(\rho W_K)}
                {\operatorname{tr}(\tau W_K)}.
                \label{eq:dfe-exact-random-variable}
            \end{align}
            Note that the indices with zero probability $\chi_\tau(k)=0$ are excluded from the ratio.
            Now, Eqs.\,\eqref{eq:dfe-pauli-fidelity} and \eqref{eq:dfe-relevance-distribution} imply
            \begin{align}
                \mathbb E[X]
                =
                F,
                \qquad
                \mathbb E[X^2]
                =
                \sum_{k:\,p(k)>0}\chi_\rho(k)^2
                \leqq
                \operatorname{tr}\rho^2
                \leqq
                1.
                \label{eq:dfe-first-two-moments}
            \end{align}
            We therefore estimate the fidelity through the mean of $X$.
            Its denominator $s_k:=\operatorname{tr}(\tau W_k)$ depends only on the target state $\tau=\ket{\psi}\!\bra{\psi}$ and can be computed before the simulation begins, so it suffices to estimate only the numerator $\operatorname{tr}(\rho W_k)$ from Pauli measurements.
            The advantage of DFE stems from drawing each Pauli string $W_k$ with probability $p(k)=\chi_\tau(k)^2$ rather than allocating the same number of measurements to every Pauli string.

            The actual finite sample estimation is carried out as follows.
            The sample mean of $X$ over $L$ independent draws $K_1,\ldots,K_L\sim p$ is $\overline X_L:=L^{-1}\sum_{i=1}^{L}X(K_i)$.
            The first moment in Eq.\,\eqref{eq:dfe-first-two-moments} fixes the expectation $\mathbb{E}[\overline X_L]=F$, and the second moment bounds the variance as
            \begin{align}
                \operatorname{Var}(\overline X_L)
                =
                \frac{\operatorname{Var}(X)}{L}
                =
                \frac{\mathbb{E}[X^2]-F^2}{L}
                \leqq
                \frac{1}{L}.
            \end{align}
            By Chebyshev's inequality (Theorem \ref{thm:chebyshev-inequality}), for any $\eta>0$,
            \begin{align}
                \Pr\!\left[
                    |\overline X_L-F|\geqq\eta
                \right]
                \leqq
                \frac{\operatorname{Var}(\overline X_L)}{\eta^2}
                \leqq
                \frac{1}{L\eta^2}.
                \label{eq:dfe-chebyshev-bound}
            \end{align}
            For $L=\lceil1/(\eta^2\delta)\rceil$, this probability is at most $\delta$.

            For each draw $K_i$, the numerator $\operatorname{tr}(\rho W_{K_i})$ is estimated by measuring $W_{K_i}$ on $m_i$ independently prepared copies of $\rho$.
            At this point, we write $A_{ij}\in\{+1,-1\}$ for the $j$th outcome and use the repetition counts and estimators
            \begin{align}
                \begin{aligned}
                    m_i
                    &:=
                    \left\lceil
                        \frac{2}{L\eta^2s_{K_i}^2}
                        \log\frac{2}{\delta}
                    \right\rceil,
                    \\
                    \widetilde X_i
                    &:=
                    \frac{1}{m_is_{K_i}}
                    \sum_{j=1}^{m_i}A_{ij},
                    \qquad
                    \widehat F_{\rm DFE}
                    :=
                    \frac{1}{L}\sum_{i=1}^{L}\widetilde X_i.
                \end{aligned}
                \label{eq:dfe-finite-shot-estimator}
            \end{align}
            For every $k$ with $p(k)>0$, the conditional means of $A_{ij}$ and $\widetilde{X}_i$ are
            \begin{align}
                \begin{aligned}
                    \mathbb{E}[A_{ij}|K_i=k]
                    &=
                    \operatorname{tr}(\rho W_k),
                    \\
                    \mathbb{E}[\widetilde X_i|K_i=k]
                    &=
                    \frac{1}{m_is_k}
                    \sum_{j=1}^{m_i}
                    \mathbb{E}[A_{ij}|K_i=k]
                    \\
                    &=
                    \frac{\operatorname{tr}(\rho W_k)}
                    {\operatorname{tr}(\tau W_k)}
                    =
                    X(k).
                \end{aligned}
            \end{align}
            Averaging over the draw $K_i$,
            \begin{align}
                \mathbb{E}[\widetilde X_i]
                =
                \sum_{k:\,p(k)>0}p(k)X(k)
                =
                F.
            \end{align}
            With $m_i$ chosen as in Eq.\,\eqref{eq:dfe-finite-shot-estimator}, we apply Hoeffding's inequality (Theorem \ref{thm:hoeffding-inequality}) for fixed $K_1,\ldots,K_L$ and get
            \begin{align}
                \Pr\!\left[
                    |\widehat F_{\rm DFE}-\overline X_L|
                    \geqq\eta
                    \,\middle|\,
                    K_1,\ldots,K_L
                \right]
                \leqq
                \delta.
            \end{align}
            This bound is uniform in $K_1,\ldots,K_L$ and hence also holds without conditioning,
            \begin{align}
                \Pr\!\left[
                    |\widehat F_{\rm DFE}-\overline X_L|
                    \geqq\eta
                \right]
                \leqq
                \delta.
                \label{eq:dfe-unconditional-measurement-bound}
            \end{align}
            By the triangle inequality, the total error is bounded by
            \begin{align}
                |\widehat F_{\rm DFE}-F|
                \leqq
                |\widehat F_{\rm DFE}-\overline X_L|
                +
                |\overline X_L-F|.
            \end{align}
            The condition $|\widehat F_{\rm DFE}-F|>2\eta$ therefore requires at least one of the two terms on the r.\,h.\,s.\ to be larger than $\eta$.
            With the union bound $\Pr(A\cup B)\leqq\Pr(A)+\Pr(B)$, its probability satisfies
            \begin{align}
                \begin{aligned}
                    \Pr\!\left[
                        |\widehat F_{\rm DFE}-F|>2\eta
                    \right]
                    &\leqq
                    \Pr\!\left[
                        |\widehat F_{\rm DFE}-\overline X_L|>\eta
                    \right]
                    \\
                    &\quad+
                    \Pr\!\left[
                        |\overline X_L-F|>\eta
                    \right]
                    \leqq
                    2\delta.
                \end{aligned}
                \label{eq:dfe-additive-guarantee}
            \end{align}
            Here, we employ Eqs.\,\eqref{eq:dfe-chebyshev-bound} and \eqref{eq:dfe-unconditional-measurement-bound} in the second inequality.
            The number of copies $M:=\sum_i m_i$ is random because each $m_i$ depends on the sampled Pauli string.
            Since $p(k)=s_k^2/D$,
            \begin{align}
                \sum_{k:\,p(k)>0}
                \frac{p(k)}{s_k^2}
                =
                \frac{1}{D}
                \left|
                    \{k:s_k\neq0\}
                \right|
                \leqq
                D,
            \end{align}
            where the last inequality follows because the Pauli basis contains $D^2$ strings.
            Using $\lceil x\rceil\leqq1+x$ and $L=\lceil1/(\eta^2\delta)\rceil$, we obtain
            \begin{align}
                \begin{aligned}
                    \mathbb{E}[M]
                    &\leqq
                    L
                    +
                    \frac{2D}{\eta^2}\log\frac{2}{\delta}
                    \\
                    &\leqq
                    1+
                    \frac{1}{\eta^2\delta}
                    +
                    \frac{2D}{\eta^2}\log\frac{2}{\delta}.
                \end{aligned}
                \label{eq:dfe-expected-copy-bound}
            \end{align}
            For the corresponding infidelity estimator $\widehat{\epsilon}_{\rm DFE}:=1-\widehat F_{\rm DFE}$, Eq.\,\eqref{eq:dfe-additive-guarantee} is equivalent to
            \begin{align}
                \Pr\!\left[
                    |\widehat{\epsilon}_{\rm DFE}-\epsilon|
                    \leqq
                    2\eta
                \right]
                \geqq
                1-2\delta.
            \end{align}
            To guarantee a relative error of at most $r$ with the same success probability, we set $\eta=r\epsilon/2$.
            At this value of $\eta$, the upper bound in Eq.\,\eqref{eq:dfe-expected-copy-bound} is
            \begin{align}
                \mathbb{E}[M]
                \leqq
                1
                +
                \frac{4}{r^2\epsilon^2\delta}
                +
                \frac{8D}{r^2\epsilon^2}
                \log\frac{2}{\delta}.
            \end{align}
            The $1/\epsilon^2$ dependence of this upper bound is tight in the worst case, so the $O(1/(r^2\epsilon))$ sample complexity proved in Sec.\,\ref{sec:general_protocol} is still a genuine improvement in the $\epsilon$ exponent compared to DFE.

            To prove the worst case lower bound, we exhibit one target and two nearby states with disjoint relative-error intervals but Pauli outcome probabilities that differ by only $O(r\epsilon)$.
            Consider the pure single qubit target and the associated family of states
            \begin{align}
                \tau
                :=
                \frac{1}{2}
                \left(
                    I+\frac{X+Y+Z}{\sqrt{3}}
                \right),
                \qquad
                \rho_a
                :=
                (1-a)\tau+a(I-\tau).
            \end{align}
            The infidelity of $\rho_a$ with respect to $\tau$ is $a$.
            For every $P\in\{X,Y,Z\}$, a measurement of $P$ on $\rho_a$ returns $+1$ with probability
            \begin{align}
                u_a
                =
                \frac{1}{2}
                \Bigl(
                    1+\frac{1-2a}{\sqrt{3}}
                \Bigl).
            \end{align}
            We compare $a_0=\epsilon$ and $a_1=(1+4r)\epsilon$, where $0<r\leqq1/8$, $0<\delta<1/4$, and $a_1\leqq1/2$.
            The relative entropy between the outcome distributions of one Pauli measurement satisfies
            \begin{align}
                \begin{aligned}
                    D_{\mathrm{KL}}
                    \left(
                        \operatorname{Bern}(u_{a_0})
                        \middle\Vert
                        \operatorname{Bern}(u_{a_1})
                    \right)
                    &\leqq
                    \frac{(u_{a_0}-u_{a_1})^2}
                    {u_{a_1}(1-u_{a_1})}
                    \\
                    &\leqq
                    32r^2\epsilon^2.
                \end{aligned}
            \end{align}
            Here, $u_{a_1}(1-u_{a_1})\geqq1/6$ and $|u_{a_0}-u_{a_1}|=4r\epsilon/\sqrt{3}$.
            The DFE sampling probabilities depend only on the target and are therefore identical for the two states.
            If $\mathsf{P}_0$ and $\mathsf{P}_1$ denote the complete measurement records, the chain rule for relative entropy yields
            \begin{align}
                D_{\mathrm{KL}}(\mathsf{P}_0\Vert\mathsf{P}_1)
                \leqq
                32r^2\epsilon^2\mathbb{E}_{0}[M].
            \end{align}
            The relative error intervals around $a_0$ and $a_1$ are disjoint because
            \begin{align}
                (1+r)\epsilon
                <
                (1-r)(1+4r)\epsilon.
            \end{align}
            An estimator that succeeds with probability at least $1-\delta$ for both states therefore distinguishes $\rho_{a_0}$ from $\rho_{a_1}$ with average error probability at most $\delta$.
            The Bretagnolle--Huber inequality (Theorem \ref{thm:bretagnolle-huber-inequality}) then requires
            \begin{align}
                \mathbb{E}_{0}[M]
                \geqq
                \frac{1}{32r^2\epsilon^2}
                \log\frac{1}{4\delta}.
            \end{align}
            Thus, already for single qubit targets and fixed confidence, the worst-case expected copy complexity of DFE with the sampling rule and Pauli measurements defined above is $\Theta(1/(r^2\epsilon^2))$.

        \subsection{Relevance to the simulations in Sec.\,\ref{sec:theory-and-simulation}} \label{append:dfe-simulation-setup}
            We specify how the DFE estimator in Eq.\,\eqref{eq:dfe-finite-shot-estimator} is used in the simulations of Sec.\,\ref{sec:theory-and-simulation}, with its expected number of consumed resource states matched to that of BCFE.
            The target state is
            \begin{align}
                \begin{aligned}
                    \tau_\theta
                    &:=
                    \ket*{m_\theta}\!\bra*{m_\theta}
                    =
                    \frac{1}{2}
                    \left[
                        I+\cos(2\theta)X-\sin(2\theta)Y
                    \right].
                \end{aligned}
                \label{eq:dfe-rotation-target}
            \end{align}
            For $r_P:=\operatorname{tr}(\rho P)$, the fidelity is
            \begin{align}
                F
                =
                \frac{1}{2}
                \left[
                    1+\cos(2\theta)r_X-\sin(2\theta)r_Y
                \right].
                \label{eq:dfe-rotation-fidelity}
            \end{align}
            The relevance probabilities are
            \begin{align}
                \begin{aligned}
                    p(I)&=\frac{1}{2},
                    &\qquad
                    p(X)&=\frac{\cos^2(2\theta)}{2},
                    \\
                    p(Y)&=\frac{\sin^2(2\theta)}{2},
                    &
                    p(Z)&=0.
                \end{aligned}
            \end{align}
            When the identity is drawn, $\widetilde{X}_i=1$ exactly and no copy is consumed.
            For every other draw, the selected Pauli observable is measured on $m_i$ copies, with $m_i$ given by Eq.\,\eqref{eq:dfe-finite-shot-estimator}.

            In the simulations of physical resource states in Figs.\,\ref{fig:physical-simulation} and \ref{fig:physical-simulation-general}, BCFE uses $N=2N_B+3N_P$ copies, with two per Bell measurement and $N_P$ in each Pauli basis.
            To choose the DFE measurement counts at this resource consumption, we fix the failure probability $\delta=0.05$ and set
            \begin{align}
                \eta=\left[\delta\left(L-\frac{1}{2}\right)\right]^{-1/2}.
            \end{align}
            This choice ensures $\lceil1/(\eta^2\delta)\rceil=L$ as required by the prescription above.
            We select the largest integer $L$ for which the expected number of copies consumed by DFE does not exceed $N$, retaining the ceilings in $m_i$ and excluding identity draws from the copy count.
            In the plotted cases, this expectation differs from $N$ by less than one copy.
            All $L$ draws and their prescribed measurements are completed, so the actual number of copies fluctuates.

            For the logical resource states in Figs.\,\ref{fig:bell-benchmark-on-qec} and \ref{fig:logical-realistic-shot-benchmark}, every preparation attempt is counted, including those rejected by syndrome post-selection.
            BCFE uses the same total $N=2N_B+3N_P$ of attempts, with $(N_B,N_P)=(400{,}000,100{,}000)$ and $(40{,}000,10{,}000)$ in the two figures, respectively.
            Each DFE copy undergoes the same preparation and stabilizer projection as in the coherence part of BCFE, and rejected trials are repeated until $m_i$ accepted copies are measured for each nonidentity draw.
            We estimate the acceptance probability $\widehat{p}_{\rm{acc}}$ from the independent preparation counts in the coherence part of BCFE and apply the preceding choice of $L$ to $\lfloor N\widehat{p}_{\rm{acc}}\rfloor$ accepted copies.
            This procedure matches the estimated expected number of preparation attempts, rather than circuit depth or the number of two-qubit gates.
            The matching depends on the acceptance probability estimated from finite data.

            For both physical and logical resource states, the DFE error bars are computed from all $L$ estimates $\widetilde{X}_i$, including identity draws, as
            \begin{align}
                \widehat{\operatorname{SE}}(\widehat{F}_{\rm{DFE}})
                =\Bigg(
                    \frac{1}{L(L-1)}
                    \sum_{i=1}^{L}
                    \left(\widetilde{X}_i-\widehat{F}_{\rm{DFE}}\right)^2
                \Bigg)^{1/2}.
                \label{eq:dfe-simulation-standard-error}
            \end{align}
            The infidelity estimate has the same standard error because $\widehat{\epsilon}_{\rm{DFE}}=1-\widehat{F}_{\rm{DFE}}$.
            Only the plotted point estimate is clipped to $[0,1]$, and the standard error is evaluated before clipping.

    \makeatletter
    \let\pre@bibdata\@empty
    \makeatother
    \bibliographystyle{JHEP}
    \bibliography{main}

\end{document}